\newtheorem{thm}{Theorem}
\newcommand{\KH}[3][]{{\color{black}#2}}%
\newcommand{\KHH}[3][]{{\color{black}#2}}%
\newcommand{\K}[3][]{{\color{black}#2}}%
\newcommand{\kh}[3][]{{\color{black}#2}}%
\newcommand{\khh}[3][]{{\color{black}#2}}%
\begin{document}

\title{Medium Access Control for Wireless Networks with Peer-to-Peer State Exchange\thanks{\K{K.~H.~Hui, D.~Guo and R.~A.~Berry}{Ka Hung Hui, Dongning Guo and Randall A.~Berry} are with the Department of Electrical Engineering and Computer Science, Northwestern University, Evanston, IL 60208 (Email: khhui@u.northwestern.edu, dguo@northwestern.edu, rberry@ece.northwestern.edu). The material in this paper was presented in part at the IEEE International Symposium on Information Theory, Saint Petersburg, Russia, July-August 2011. This work was supported by DARPA under grant W911NF-07-1-0028.}}

\IEEEoverridecommandlockouts

\author{Ka Hung Hui, Dongning Guo and Randall A.~Berry}
\date{\today}
\maketitle

\begin{abstract}
Distributed medium access control (MAC) protocols are proposed for wireless networks \KH{assuming that}{in which} one-hop peers can \khh{periodically}{} exchange a small amount of state information\khh{}{ periodically}. 
Each station maintains a state and makes state transitions and transmission decisions based on its state and recent state information collected from its one-hop peers.
\KH{A station can adapt its packet length and the size of its state space to the amount of traffic in its neighborhood}{Different stations can use different number of states and packet lengths}.
It is shown that these protocols converge to a steady state, where stations take turn\KH{s}{} to transmit in each neighborhood without collision. 
\KH{In other words}{An important consequence of this work is that using such protocols}, an efficient time-division multiple access (TDMA) like schedule \KH{is}{can be} formed in a distributed manner, as long as the topology of the network remains static or changes slowly with respect to the execution of the protocol.
\end{abstract}

\section{Introduction}
\label{sec:introduction}

\K{In most wireless networks, medium access control (MAC) is needed to avoid excessive collisions, which occur if a station transmits to another transmitting station\khh{,}{} which is half-duplex, or a station receives multiple simultaneous transmissions and cannot successfully decode \kh{the}{} desired message(s).}{The performance of wireless networks is \KH{limited}{degraded} by two types of collisions: a station transmits to another transmitting station which is half-duplex; or\KH{}{,} a station receives simultaneous transmissions from multiple stations such that all transmissions cannot be decoded correctly.
Medium access control (MAC) protocols are needed to eliminate or reduce both types of collisions. }
\khh{
Many MAC protocols can be viewed as requiring each station to maintain a state, which determines when the station transmits. 
To provide distributed operation, this state is updated based on information available locally in space.
For example, in carrier-sense multiple access (CSMA) protocols, the state is determined by the carrier sensing operation and the random backoff mechanism. 
}{
A typical design of MAC protocols is to let \KH{each station}{stations} maintain a state variable and make transmission decisions according to the state, which is updated based on the station's observations.
\KH{In general, \KHH{it is desirable to let}{} neighboring stations \KHH{use different states}{avoid using the same state at the same time} to avoid collisions. A}{Besides eliminating collisions, a} practical MAC protocol should \K{adapt to changes over time and}{} be \textit{distributed}, \textit{i.e.}, stations make their decisions based on information available locally in space. 
\K{}{A practical MAC protocol should also be \textit{adaptive} to recent changes in time.}
\KH{}{It is desirable to have a MAC protocol that offers high throughput and uses as few states as possible. }
}

\khh{This paper considers MAC protocols in which stations explicitly exchange limited state information. 
The protocols are \textit{self-stabilizing}, \textit{i.e.}, they converge to a collision-free schedule regardless of the initial state.}{\K{This paper studies}{We consider static or slow-varying networks and study} \textit{self-stabilizing} MAC protocols, \K{which are}{\textit{i.e.},} protocols that converge to a collision-free schedule, \K{regardless}{independent} of the initial state.}
\K{The underlying network is assumed to be static or vary slowly with respect to the execution of the MAC protocol.}{}
In the steady state, these protocols behave like time-division multiple access (TDMA), in which stations take turn\KH{s}{} to transmit without collision; while in the transient state, they behave like \khh{CSMA}{carrier-sense multiple access (CSMA)}, such that stations contend with each other, trying to find a slot for transmission and avoid collisions. 
\KH{Under the assumption of a single collision domain, \textit{i.e.}, all stations can hear each other,}{}\K{}{\KH{r}{R}eferences \cite{JLJW:1,JBBBCCMO:1,MFDMKDDL:1} study} self-stabilizing MAC protocols \K{have been studied in \cite{JLJW:1,JBBBCCMO:1,MFDMKDDL:1}}{\KH{}{ for a single collision domain, in which all stations can hear each other}}.
By learning transmission decisions of others, stations are able to find a collision-free schedule in a decentralized manner. 
\kh{As is pointed out in \cite{JLJW:1}, \khh{these}{such}}{\KH{Such}{However, the analysis applies only to a single collision domain.
It is pointed out in \cite{JLJW:1} that the}} protocol\KH{s}{ therein} cannot guarantee the formation of a collision-free schedule in \KH{case of}{} multiple collision domains \khh{and focus on schedules for unicast traffic}{\kh{}{ \KH{as is pointed out in \cite{JLJW:1}}{}}}.

\khh{
This paper focuses on establishing collision-free schedules for broadcast and multicast traffic in networks with multiple collision domains. 
It is well-known that multiple collision domains complicate scheduling due in part to hidden terminals and exposed terminals. 
For unicast traffic, state exchange in the form of RTS/CTS signaling can help alleviate these complications. 
However, this is not suitable when a station wants to broadcast a packet to all nearby stations. 
To facilitate this, we consider a richer form of state exchange. 
}{}

\khh{We build on work in \cite{KHDGRBMH:1} and \cite{KHDGRB:1}, which introduces self-stabilizing MAC protocols for one- and two-dimensional regular networks on lattices.}
{
\K{Many practical}{In practice, most} networks have multiple collision domains\K{,}{ and} hence the hidden terminal and exposed terminal problems \K{arise}{}.
\KH{D}{For one- and two-dimensional \textit{regular} networks with links between nearest neighbors only, d}istributed, adaptive, self-stabilizing MAC protocols \KH{for \K{the special case of}{} one- and two-dimensional \KHH{regular}{} networks on lattices with links between nearest neighbors\kh{}{ only} have been}{are} introduced in \cite{KHDGRBMH:1} and \cite{KHDGRB:1}, respectively. 
}
\KH{The technique is to divide time into periodic cycles, where each cycle is divided into slots.
A station maintains a \K{single}{} state and transmits only over the slot corresponding to its state.
Once the protocols converge, a periodic state pattern \K{(with immediate neighbors \khh{assuming}{assume} different states)}{} is formed throughout the regular network, and the maximum broadcast throughput is achieved.
If one directly applies these ideas to networks with arbitrary topologies, sufficiently many states are needed for stations with many neighbors, but \K{in a neighborhood with few stations}{for stations with few neighbors}, the wireless channel is underutilized \kh{because few states are occupied}{}.}{
\KH{The protocols in \cite{KHDGRBMH:1} and \cite{KHDGRB:1} do not apply}{It is nontrivial to extend \cite{KHDGRBMH:1,KHDGRB:1}} to networks with arbitrary topolog\KH{ies}{y}. 
\KH{Regardless of the network topology, it is usually}{Traditionally, it is} assumed that every station uses the same \KH{set of states and the same slot interval}{number of states, \textit{i.e.}, the same number of slots in each cycle. 
This implies that all slots have the same length}. 
A problem with this approach is that, we need sufficiently many states for stations with many neighbors, but \KH{in an area of few stations}{for stations with few neighbors}, the wireless channel will be underutilized.
Therefore, we introduce the notion of \textit{multiple resolutions}, \textit{i.e.}, a station having more neighbors uses a fine resolution (more states are used, each state corresponding to a shorter slot); while a station with fewer neighbors uses a coarse resolution (fewer states are used, each state corresponding to a longer slot). }

\khh{
There has been a significant work on MAC scheduling for networks that builds on the seminal max-weight algorithm \cite{LTAE:1}, and attempts to derive distributed, low complexity algorithms which approach the throughput-optimal performance of \cite{LTAE:1}. 
Examples include \cite{PCKKXLSS:1,XLNS:1,GSNSRM:1,EMDSGZ:1,AEAOEM:1,XLSR:1}. 
These approaches seek to adapt the resulting schedule to queue variations. 
Here, we instead consider a model with saturated traffic and seek to find fixed rate-based schedules, as in \cite{YYGDSS:1}. 
Such a schedule is naturally more useful for traffic that has a fixed long-term arrival rate. 
More bursty traffic can be accomodated by reserving some fraction of time for contention-based access as in \cite{IRAWMAJMMS:1}. 
}{}

\KH{
The main contributions of this paper are as follows:
\begin{enumerate}
\item
In Section~\ref{sec:model}, we introduce the concept of \textit{multiple resolutions}, \textit{i.e.}, a station having more neighbors uses a fine resolution (more states \KHH{in its state space}{are used}, each state corresponding to a shorter slot); while a station with fewer neighbors uses a coarse resolution (fewer states \KHH{in its state space}{are used}, each state corresponding to a longer slot). 
\item
In Sections~\ref{sec:1d} and~\ref{sec:2d}, multi-resolution MAC protocols are proposed for broadcast in one- and two- dimensional networks with arbitrary topologies, respectively.
These protocols guarantee every station a chance to transmit in each cycle. 
In addition, they achieve approximate proportional fairness in the sense that \K{a station's \kh{throughput}{rate} is approximately inverse\khh{ly}{} proportional to the node density in its neighborhood}{two stations with similar number of neighbors have similar rates}.
We show that in one-dimensional networks, stations can determine their resolutions in a distributed manner. 
The same also holds for two-dimensional networks under a mild condition. 
In case the condition is not met, we propose a mechanism for stations to dynamically change their resolutions until collisions do not occur in the entire network. 
\item
We show that the multi-resolution protocols can be applied to a more general setting. 
In Section~\ref{sec:mc}, we consider multicast traffic. 
In Section~\ref{sec:multich}, broadcast and multicast in networks with multiple orthogonal channels are considered. 
\end{enumerate}
}{
In this paper, we propose multi-resolution MAC protocols for one- and two-dimensional wireless networks with arbitrary topolog\KH{ies}{y}\KH{, \textit{i.e.}, a station having more neighbors uses a fine resolution (more states are used, each state corresponding to a shorter slot); while a station with fewer neighbors uses a coarse resolution (fewer states are used, each state corresponding to a longer slot)}{}. 
These protocols guarantee every station a chance to transmit in each cycle. 
In addition, they achieve \KH{approximate proportional}{} fairness in the sense that \KH{the rate of a station is inverse proportional to the node density in its neighborhood}{two stations with similar neighborhoods have similar rates}.
For one-dimensional networks, stations can determine their resolutions in a distributed manner. 
The same also holds for two-dimensional networks under a mild condition. 
In case the condition is not met, we propose a mechanism for stations to dynamically change their resolutions until collisions do not occur in the entire network. }
In all cases, the convergence of such protocols to a collision-free schedule is rigorously established. 
\KHH{To achieve the global optimum in terms of throughput is an NP-complete problem \cite{AETT:2,RRKP:1}, which is out of scope of this paper}{We do not consider maximizing throughput or minimizing the number of states; these problems are NP-complete \cite{AETT:2,RRKP:1}}.

\KH{}{The remainder of this paper is organized as follows. 
The system model is described in Section~\ref{sec:model}. 
Results for broadcast in one- and two-dimensional networks are presented in Sections~\ref{sec:1d} and~\ref{sec:2d}\KH{,}{} respectively. 
Multicast traffic is considered in Section~\ref{sec:mc}. 
Section~\ref{sec:conclusion} concludes the paper. }

\section{System Model}
\label{sec:model}

\KH{Consider a simple model for wireless networks where two stations have a direct radio link between them if they can hear each other.
The network can be modeled by an arbitrary graph $G=(V,A)$, where $V=\lbrace\mathbf{r}_i\rbrace_{i=0}^{\lvert V\rvert-1}$ is the set of stations labeled by their coordinates, and $A=\lbrace(\mathbf{r}_i,\mathbf{r}_j)\rbrace\subset V\times V$ is the set of \textit{undirected} links.\KHH{}{As an example, if $d(\mathbf{r}_i,\mathbf{r}_j)$ is the Euclidean distance between $\mathbf{r}_i$ and $\mathbf{r}_j$, and $R$ is the transmission range, then we can define the set of links to be $A_R=\lbrace(\mathbf{r}_i,\mathbf{r}_j)\in N\times N\colon i\ne j\text{ and }d(\mathbf{r}_i,\mathbf{r}_j)\le R\rbrace$.
In this case the graph, denoted by $G_R=(N,A_R)$, is a finite undirected \textit{unit disk graph}.}}{
Consider a simple model for wireless networks where two stations have a direct radio link between them if their distance is within a given range.
Precisely, the network is modeled by a finite undirected \textit{unit disk graph} $G_R=(N,A_R)$, where $N=\lbrace\mathbf{r}_i\rbrace_{i=0}^{\lvert N\rvert-1}$ is the set of stations labeled by their coordinates, $A_R=\lbrace(\mathbf{r}_i,\mathbf{r}_j)\in N\times N\colon i\ne j\text{ and }d(\mathbf{r}_i,\mathbf{r}_j)\le R\rbrace$ is the set of undirected links, $d(\mathbf{r}_i,\mathbf{r}_j)$ denotes the Euclidean distance between $\mathbf{r}_i$ and $\mathbf{r}_j$, and $R$ is the transmission range.}
Let $V_\mathbf{r}$ denote the set of (one-hop) peers or neighbors of station $\mathbf{r}$. 
\khh{We assume the interference range of a station is the same as its transmission range, so $V_\mathbf{r}$ denotes both the set of potential receivers and interferers for station $\mathbf{r}$.}{}
\KHH{}{In }Sections~\ref{sec:1d} and~\ref{sec:2d} \KHH{study the case where}{, it is assumed that} every station broadcasts packets to all its one-hop peers \KHH{in a single channel}{, \textit{i.e.}, station $\mathbf{r}$ transmits packets to $N_\mathbf{r}$}. 
\KHH{}{In }Section~\ref{sec:mc} \KHH{studies the case where}{,} every station multicasts packets to a \KHH{certain}{fixed} subset of its one-hop peers \KHH{in a single channel}{, \textit{i.e.}, station $\mathbf{r}$ transmits packets to $D_\mathbf{r}\subseteq N_\mathbf{r}$}.\KHH{}{It is assumed that station $\mathbf{r}$ knows $D_{\mathbf{r}^\prime}$ for all $\mathbf{r}^\prime\in N_\mathbf{r}$, so that it knows whether it is an intended receiver of its peers. 
This can be done by letting each station broadcast a list of its intended receivers while setting up a multicast session. }
\KHH{In Section~\ref{sec:multich}, broadcast and multicast in networks with multiple orthogonal channels are considered.}{}
For both broadcast and multicast, saturated traffic is assumed.

\begin{figure}[t]
\centering
\subfigure[]{
\includegraphics[width=3.5in]{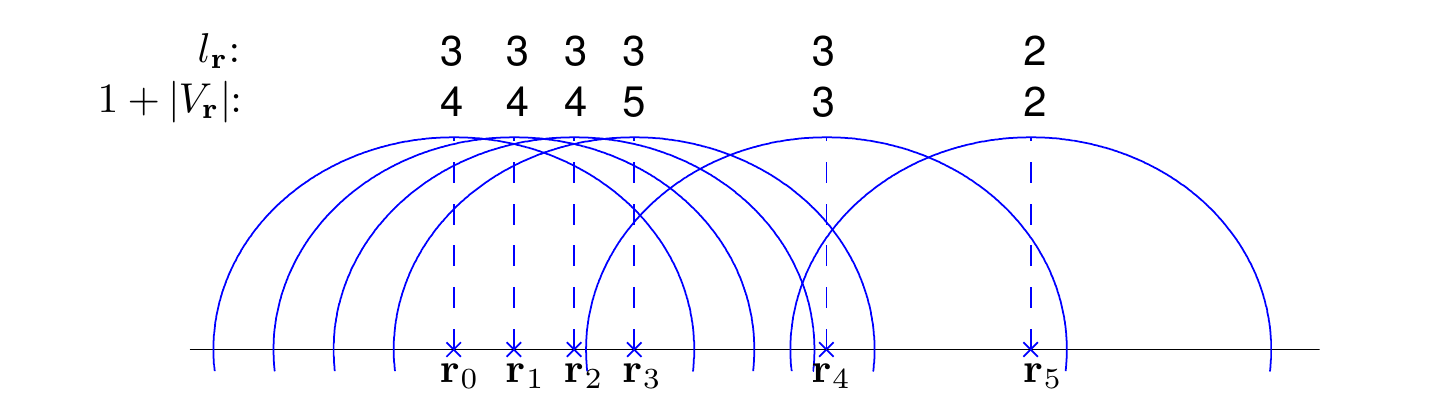}
\label{fig:network}
}
\subfigure[]{
\includegraphics[width=3.5in]{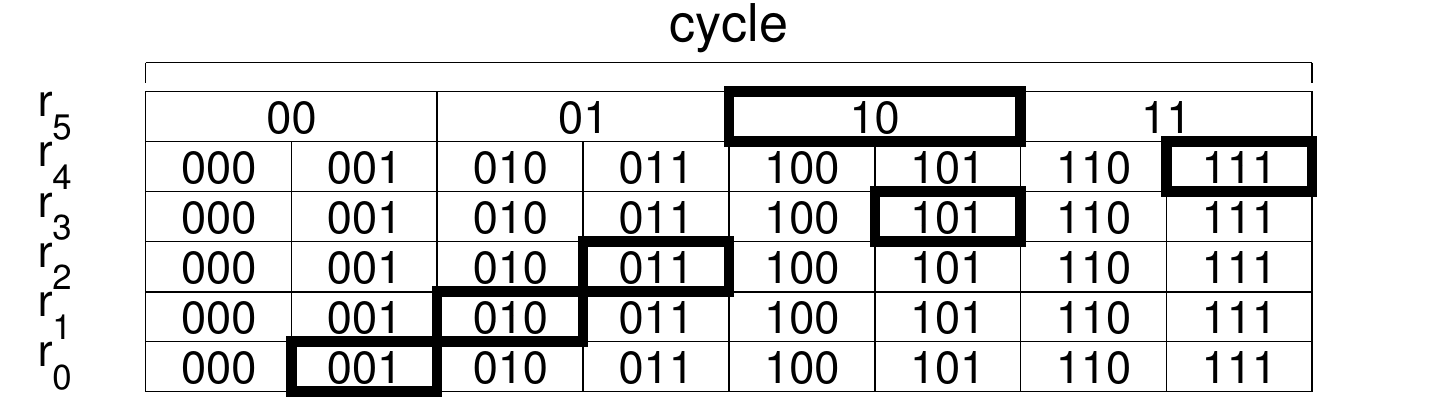}
\label{fig:resolution}
}
\caption{A multi-resolution MAC protocol in a one-dimensional network. In \subref{fig:network}, there are $6$ stations at positions $\mathbf{r}_0,\dots,\mathbf{r}_5$. The \khh{half}{} circles represent the transmission ranges of the stations. The values of $1+\lvert V_\mathbf{r}\rvert$ and $l_\mathbf{r}$ for different stations are shown on top of the corresponding circles. A possible schedule over a cycle is shown in \subref{fig:resolution}.}
\label{fig:mr}
\end{figure}

Next we formalize the concept of multiple resolutions\K{}{ alluded to above}. 
\KHH{
Let time be divided into cycles of fixed length. 
We let station $\mathbf{r}$ decide on a resolution represented by an integer $l_\mathbf{r}\ge0$. 
From the viewpoint of this station, each cycle is further divided into $2^{l_\mathbf{r}}$ slots of equal length. 
The state of this station in the $t$-th cycle, denoted by $X_\mathbf{r}(t)$, is a binary string of length $l_\mathbf{r}$ corresponding to the index of the slot in the $t$-th cycle over which the station transmits. 
Let $\pmb{X}(t)=\lbrace X_\mathbf{r}(t)\rbrace_{\mathbf{r}\in V}$ be the configuration in the $t$-th cycle. 
We assume that all stations are synchronized. 
\khh{A finer}{Therefore, a fine} resolution can be obtained by `splitting' or `refining' a coarse resolution. 
We assume that packets transmitted by a station fit in a slot of its own resolution. 
Stations using coarse resolutions can also transmit multiple packets of smaller sizes in a slot. 

A \textit{collision} occurs \K{between two}{at some receiver when two stations that are} one-hop or two-hop peers \K{if they}{} transmit at the same time. 
Mathematically, \K{two such}{neighboring} stations $\mathbf{r}_i$ and $\mathbf{r}_j$, with $l_{\mathbf{r}_i}\le l_{\mathbf{r}_j}$ \K{(without loss of generality)}{}, collide in the $t$-th cycle when 
\begin{equation}
\label{eqn:collision}
\text{the binary string }X_{\mathbf{r}_i}(t)\text{ is a prefix of }X_{\mathbf{r}_j}(t).
\end{equation}
In a \textit{collision-free configuration}, (\ref{eqn:collision}) \K{must}{does} not hold for any pair of one-hop and two-hop peers.}{
\KH{
Time is divided into cycles of fixed length. 
We let each station pick a state represented using a binary string $b_1b_2\cdots b_m$, which is the index of the time slot if a cycle is divided into $2^m$ slots.
Two packets arriving at a station collide if and only if the state of one of the two transmitting stations is a prefix of the state of the other.}{}}
Consider \KH{the example described in}{} Fig.~\ref{fig:mr}. 
Station $\mathbf{r}_5$ uses\KH{}{ a coarse resolution consisting of four states, $00,01,10,11$. 
It transmits in} state $10$, \textit{i.e.}, \KH{it transmits during}{} the third quarter of the cycle. 
Station $\mathbf{r}_3$ uses \KH{state $101$ with}{} a finer resolution consisting of eight states\KH{, so that it}{. 
It} transmits in the sixth slot of the cycle \KH{which is divided into $8$ slots}{when its state is $101$}. 
Station $\mathbf{r}_3$'s resolution can be seen as a refinement of that of station $\mathbf{r}_5$\KHH{}{, where every slot is reduced by half}.\KHH{}{Formally, the state of station $\mathbf{r}$ assumes values in $\mathbb{F}_2^{l_\mathbf{r}}$, the set of binary $l_\mathbf{r}$-tuples, where $2^{l_\mathbf{r}}$ is the \KH{total number of states available for}{number of states used by} station $\mathbf{r}$ \KH{to choose}{} and it depends on its local topology. 
Let $\Omega=\prod_{\mathbf{r}\in N}\mathbb{F}_2^{l_\mathbf{r}}$ denote the configuration space.
Let $X_\mathbf{r}(t)\in\mathbb{F}_2^{l_\mathbf{r}}$ be the state of station $\mathbf{r}$ in the $t$-th cycle, and $\pmb{X}(t)=\lbrace X_\mathbf{r}(t)\rbrace_{\mathbf{r}\in N}\in\Omega$ be the configuration in the $t$-th cycle. 
$X_\mathbf{r}(t)=s\in\mathbb{F}_2^{l_\mathbf{r}}$ means that station $\mathbf{r}$ divides the $t$-th cycle into $2^{l_{\mathbf{r}}}$ slots of equal length, and it transmits in slot $s$. 
For this reason, we will use states and slots interchangeably. 
We assume that all stations are synchronized. 
Therefore, a fine resolution can be obtained by `splitting' or `refining' a coarse resolution, as shown in Fig.~\ref{fig:resolution}. }
\KHH{Since $10$ is a prefix of $101$, and stations $\mathbf{r}_3$ and $\mathbf{r}_5$ are two-hop peers as shown in Fig.~\ref{fig:network}, these two stations collide \K{(at receiver $\mathbf{r}_4$)}{}.}{}

\KHH{}{
We assume that packets transmitted by a station fit in a slot of its own resolution. 
Stations using coarse resolution can also transmit multiple packets of smaller sizes in a slot. 
A \textit{collision} occurs at some receiver when two stations that are one-hop or two-hop peers transmit at the same time. 
Mathematically, neighboring stations $\mathbf{r}_i$ and $\mathbf{r}_j$, with $l_{\mathbf{r}_i}\le l_{\mathbf{r}_j}$, collide in the $t$-th cycle when 
\begin{equation}
\label{eqn:collision}
\text{the binary string }X_{\mathbf{r}_i}(t)\text{ is a prefix of }X_{\mathbf{r}_j}(t).
\end{equation}
For example, in Fig.~\ref{fig:resolution}, since $10$ is a prefix of $101$, stations $\mathbf{r}_3$ and $\mathbf{r}_5$ collide.
In a \textit{collision-free configuration}, (\ref{eqn:collision}) does not hold for any pair of one-hop and two-hop peers.
}

\KH{We assume that at the end of each cycle, each station acquires the current states of its one-hop and two-hop peers, error-free.
\KHH{Such message exchanges can be carried out either over a control channel or over a dedicated time period.}{}
The careful reader may object that this itself requires a collision-free schedule.
However, since this control information is relatively low-rate, we assume that other techniques can be utilized for sending it.
For example, \KHH{stations can use a random access scheme to exchange the short control messages.}{}
\kh{Alternatively, the}{\KHH{T}{t}he} rapid on-off division duplex (RODD) scheme in \cite{DGLZ:1,LZDG:1} can\kh{}{ also} be used here, which enables all stations to exchange their control messages simultaneously.
\KHH{From now on, we will assume stations exchange state information within a control frame orthogonal to data frames in time or in frequency.\footnote{\K{Assuming that the control frame is short,}{The control frame is typically short. We ignore} its impact on the throughput \K{is ignored}{} in this paper.}}{}
}{
We assume that at the end of each cycle, each station acquires the current states of its one-hop and two-hop peers.
Here, every station broadcasts a message to all its one-hop peers, and tries to receive a message from every peer at the same time. 
\KH{
Such message exchanges can be carried out either over a control channel or over a dedicated time period. 
It is conceivable to use a random access scheme for exchanging the short state messages.}{}
\KH{There}{Though wireless systems are half-duplex, there} is a recent work on achieving virtual full-duplex communication in wireless systems \KH{using half duplex radio}{}, called rapid on-off division duplex (RODD) \cite{DGLZ:1}. 
In this scheme, all stations can exchange a message with their respective one-hop peers within one frame interval.  
Each station is assigned an on-off duplex mask of one frame length.
In an on-slot of the frame, the station transmits a symbol; whereas in an off-slot it does not emit any energy and therefore can receive a signal. 
As long as the masks are sufficiently different,  a station can receive enough signal through its off-slots and decode messages from its peers. 
\KH{From now on}{Hence}, we will assume stations exchange state information\KH{}{ by techniques like RODD} within a control frame \KH{orthogonal to data frames in time or in frequency}{}.\footnote{The control frame is typically short. We ignore its impact on the throughput in this paper.}}

Let stations choose their next states based only on the current states of their one-hop and two-hop peers and themselves.
The state process of the MAC protocol can be modeled as a \textit{Markov Chain of Markov Fields} (MCMF) \cite{XGCH:1}, \textit{i.e.}, a process for which the states $\pmb{X}=\lbrace\pmb{X}(t)\rbrace_{t\in\mathbb{N}}$ satisfy
\begin{itemize}
\item
$\pmb{X}(1),\pmb{X}(2),\dots$ is a Markov chain\KHH{}{ on $\Omega$}, and
\item
for every $t$, $\pmb{X}(t)$ is a Markov field\KHH{}{ on $\Omega$} conditioned on $\pmb{X}(t-1)$.
\end{itemize}
\KH{In fact, $\pmb{X}(t)$ consists of independent random variables conditioned on $\pmb{X}(t-1)$ in our case.}{}
Here, we only consider protocols in which stations make identically distributed decisions conditioned on the same previous states of their one-hop and two-hop peers and themselves.\footnote{This rules out \K{location-based MAC protocols}{protocols in which stations, for example, are simply assigned to transmit or not based on their location} (\textit{e.g.}, in \cite{NWRB:2}).}

In Sections~\ref{sec:1d} and~\ref{sec:2d} we measure the performance by the one-hop broadcast throughput $\rho_\text{BC}$, which is the average proportion of time a station receives packets in each cycle. 
A station receives a packet if and only if it does not transmit and only one of its peers transmits. 
If there is no collision, 
\KHH{\begin{equation}
\label{eqn:rho}
\rho_\text{BC}=\frac{1}{\lvert V\rvert}\sum_{\mathbf{r}\in V}\sum_{\mathbf{r}^\prime\in V_\mathbf{r}}2^{-l_{\mathbf{r}^\prime}}=\frac{1}{\lvert V\rvert}\sum_{\mathbf{r}\in V}\lvert V_\mathbf{r}\rvert2^{-l_\mathbf{r}}.
\end{equation}
}{\begin{equation}
\label{eqn:rho}
\textstyle\rho_\text{BC}=\bigl\langle\sum_{\mathbf{r}^\prime\in N_\mathbf{r}}2^{-l_{\mathbf{r}^\prime}}\bigr\rangle_\mathbf{r}=\bigl\langle\lvert N_\mathbf{r}\rvert2^{-l_\mathbf{r}}\bigr\rangle_\mathbf{r}
\end{equation}
where $\langle\cdot\rangle_\mathbf{r}$ in (\ref{eqn:rho}) is the spatial average over all stations, \textit{i.e.}, $\langle g(\mathbf{r})\rangle_\mathbf{r}=\frac{1}{\lvert N\rvert}\sum_{\mathbf{r}\in N}g(\mathbf{r})$ for any function $g$.}The two expressions are obtained by counting throughput from the receiver side and the transmitter side\K{,}{} respectively. 
In Section~\ref{sec:mc}, we use the one-hop multicast throughput \KH{$\rho_\text{MC}$}{} to measure the performance. 
In this case, a station receives a packet if and only if it does not transmit, only one of its peers transmits and it is an intended receiver for the packet. 
\KHH{Let $D_\mathbf{r}\subseteq N_\mathbf{r}$ denote the set of intended receivers of the multicast by $\mathbf{r}$.}{}
\KHH{If there is no collision, the one-hop multicast throughput is,}{The one-hop multicast throughput is, if there is no collision,} 
\KHH{\begin{equation}
\label{eqn:rho_mc}
\rho_\text{MC}=\frac{1}{\lvert V\rvert}\sum_{\mathbf{r}\in V}\sum_{\mathbf{r}^\prime\colon\mathbf{r}\in D_{\mathbf{r}^\prime}}2^{-l_{\mathbf{r}^\prime}}=\frac{1}{\lvert V\rvert}\sum_{\mathbf{r}\in V}\lvert D_\mathbf{r}\rvert2^{-l_\mathbf{r}}.
\end{equation}}{\begin{equation}
\label{eqn:rho_mc}
\textstyle\rho_\text{MC}=\bigl\langle\sum_{\mathbf{r}^\prime\colon\mathbf{r}\in D_{\mathbf{r}^\prime}}2^{-l_{\mathbf{r}^\prime}}\bigr\rangle_\mathbf{r}=\bigl\langle\lvert D_\mathbf{r}\rvert2^{-l_\mathbf{r}}\bigr\rangle_\mathbf{r}.
\end{equation}}

\KHH{It should be noted that under the concept of multiple resolutions, the structure of the states can be more complex than that described here. 
For example, the states may be represented by tertiary codes, so the number of slots in a cycle need not be a power of $2$. 
Also, to represent collisions using the prefix condition (\ref{eqn:collision}), it is not required that all slots in a cycle must have the same length; the only requirements are that all slot boundaries of a coarse resolution are also slot boundaries of a fine resolution, and two slots overlap in time if and only if the states representing the slots satisfy the prefix condition. 
\K{}{The model described here is the simplest one, and is used for illustrative purposes only. }
}{}

\section{Broadcast in One-Dimensional Networks}
\label{sec:1d}

\subsection{Determining the number of states}
\label{subsec:l_1d}

\KH{
We first consider one-dimensional networks, \textit{i.e.}, all stations lie on a straight line.
We further assume the following: if $\mathbf{r}_i$ and $\mathbf{r}_j$ are one-hop peers, then all stations located between $\mathbf{r}_i$ and $\mathbf{r}_j$ are also one-hop peers of both $\mathbf{r}_i$ and $\mathbf{r}_j$.\KHH{}{This assumption is valid since stations closer to a transmitter receive stronger signals.}\K{}{In this situation, s}}{S}\K{}{tation $\mathbf{r}$ determines its resolution\KHH{}{ (in bits)} $l_\mathbf{r}$ as follows:
\begin{enumerate}
\item
station $\mathbf{r}$ computes $w_\mathbf{r}=1+\lvert V_\mathbf{r}\rvert$, the number of stations within $\mathbf{r}$'s one-hop neighborhood, and exchanges this with all its one-hop peers (\textit{e.g.}, using techniques of \cite{DGLZ:1}),
\item
station $\mathbf{r}$ \KH{sets}{uses} $l_\mathbf{r}=\lceil\log_2(\max_{\mathbf{r}^\prime\in V_\mathbf{r}\cup\lbrace \mathbf{r}\rbrace}w_{\mathbf{r}^\prime})\rceil$.
\end{enumerate}}
\KH{To avoid collision, a station and all its one-hop peers must transmit at different times\kh{}{, implying the size of the state space of a station should be at least equal to the size of}}{The intuition is that the number of states used by a station should be larger than}\kh{}{ the largest one-hop neighborhood \KH{that the station belongs to}{that the station belongs to}}. 
\K{The following result shows that a station can determine its resolution solely based on the size of the largest one-hop neighborhood that it belongs to.}{Fig.~\ref{fig:network} illustrates the procedure \K{of finding the resolutions from the following result}{}. }

\begin{thm}
\label{thm:num_states_1d}
\K{Suppose in a one-dimensional network, each station shares the number of stations within its one-hop neighborhood (\textit{i.e.}, $1+|V_\mathbf{r}|$ for station $\mathbf{r}$) with all its one-hop peers, then collision-free configurations are guaranteed to exist by letting stations choose their resolutions according to
\begin{equation}
\label{eqn:resolution_1d}
l_\mathbf{r}=\biggl\lceil\log_2\biggl(\max_{\mathbf{r}^\prime\in V_\mathbf{r}\cup\lbrace\mathbf{r}\rbrace}(1+\lvert V_{\mathbf{r}^\prime}\rvert)\biggr)\biggr\rceil.
\end{equation}
}{For any one-dimensional network, if station $\mathbf{r}$ uses $l_\mathbf{r}=\lceil\log_2(\max_{\mathbf{r}^\prime\in V_\mathbf{r}\cup\lbrace\mathbf{r}\rbrace}w_{\mathbf{r}^\prime})\rceil$, where $w_\mathbf{r}=1+\lvert V_\mathbf{r}\rvert$, then it is possible for each station to choose a state such that collision-free configurations exist.}\KH{The resulting one-hop broadcast throughput is \kh{given by (\ref{eqn:rho}), where $l_\mathbf{r}$ in (\ref{eqn:rho}) is specified in (\ref{eqn:resolution_1d}).}{
\KHH{\begin{equation}
\label{eqn:throughput_1d}
\rho_\text{BC}=\frac{1}{\lvert V\rvert}\sum_{\mathbf{r}\in V}\lvert V_\mathbf{r}\rvert2^{-\lceil\log_2(\max_{\mathbf{r}^\prime\in V_\mathbf{r}\cup\lbrace\mathbf{r}\rbrace}(1+\lvert V_{\mathbf{r}^\prime}\rvert))\rceil}.
\end{equation}}{\begin{equation}
\label{eqn:throughput_1d}
\rho_\text{BC}=\bigl\langle\lvert N_\mathbf{r}\rvert2^{-\lceil\log_2(\max_{\mathbf{r}^\prime\in N_\mathbf{r}\cup\lbrace\mathbf{r}\rbrace}(1+\lvert N_{\mathbf{r}^\prime}\rvert))\rceil}\bigr\rangle_\mathbf{r}.
\end{equation}}}}{}
\end{thm}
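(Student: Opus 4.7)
I would prove the theorem by constructing a collision-free configuration explicitly through a greedy left-to-right procedure. Order the stations by their positions on the line as $\mathbf{r}_0,\mathbf{r}_1,\dots,\mathbf{r}_{\lvert V\rvert-1}$, and process them in this order: at step $i$, assign $\mathbf{r}_i$ any state in $\lbrace 0,1\rbrace^{l_{\mathbf{r}_i}}$ whose prefix relation with each already-assigned state of a one-hop or two-hop peer $\mathbf{r}_j$ ($j<i$) does not satisfy (\ref{eqn:collision}). If this step succeeds for every $i$, a collision-free configuration exists.

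The crux is a one-dimensional structural lemma. Let $\mathbf{r}_l$ denote the leftmost one-hop peer of $\mathbf{r}_i$; if $\mathbf{r}_i$ has no one-hop peer strictly to its left, then applying the interval hypothesis to a hypothetical common neighbor $\mathbf{r}_m>\mathbf{r}_i$ of $\mathbf{r}_i$ and any left-lying two-hop peer rules the latter out, so the greedy step is trivially feasible. Otherwise, letting $\mathbf{r}_{ll}$ be the leftmost one-hop peer of $\mathbf{r}_l$, I would show that the set
\begin{equation*}
P_i := \lbrace\mathbf{r}_j : j<i\text{ and }\mathbf{r}_j\text{ is a one-hop or two-hop peer of }\mathbf{r}_i\rbrace
\end{equation*}
is contained in $V_{\mathbf{r}_l}\cup\lbrace\mathbf{r}_l\rbrace$. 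Two applications of the interval hypothesis---to the one-hop pairs $(\mathbf{r}_{ll},\mathbf{r}_l)$ and $(\mathbf{r}_l,\mathbf{r}_i)$---place every station in $[\mathbf{r}_{ll},\mathbf{r}_i]$ inside $V_{\mathbf{r}_l}\cup\lbrace\mathbf{r}_l\rbrace$; a case check showing $P_i\subseteq[\mathbf{r}_{ll},\mathbf{r}_{i-1}]$ then completes the containment. The case check splits on whether $\mathbf{r}_j$ is a one-hop or a genuine two-hop peer of $\mathbf{r}_i$; the latter case uses a third application of the interval hypothesis---to the one-hop pair $(\mathbf{r}_j,\mathbf{r}_m)$ with $\mathbf{r}_m$ a common one-hop peer of $\mathbf{r}_i$ and $\mathbf{r}_j$---to rule out $\mathbf{r}_j<\mathbf{r}_{ll}$ by a contradiction with $\mathbf{r}_{ll}$'s leftmost status.

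Given the containment, the rest is a direct counting argument. Every $\mathbf{s}\in V_{\mathbf{r}_l}\cup\lbrace\mathbf{r}_l\rbrace$ has $\mathbf{r}_l\in V_\mathbf{s}\cup\lbrace\mathbf{s}\rbrace$, so the formula in the theorem gives $l_\mathbf{s}\ge\lceil\log_2 w_{\mathbf{r}_l}\rceil$ for every such $\mathbf{s}$; in particular $2^{l_{\mathbf{r}_i}}\ge w_{\mathbf{r}_l}$. A previously-assigned peer $\mathbf{s}\in P_i$ then forbids at most $2^{\max(0,l_{\mathbf{r}_i}-l_\mathbf{s})}\le 2^{l_{\mathbf{r}_i}}/w_{\mathbf{r}_l}$ states at $\mathbf{r}_i$'s resolution. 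Since $\lvert P_i\rvert\le w_{\mathbf{r}_l}-1$, the total number of forbidden states is at most $(w_{\mathbf{r}_l}-1)\cdot 2^{l_{\mathbf{r}_i}}/w_{\mathbf{r}_l}<2^{l_{\mathbf{r}_i}}$, leaving at least one available state for $\mathbf{r}_i$.

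The main obstacle I anticipate is the structural lemma itself: the interval hypothesis must be invoked in three slightly different configurations, and the bookkeeping needed to rule out $\mathbf{r}_j<\mathbf{r}_{ll}$ is the only nontrivial step. Once the containment $P_i\subseteq V_{\mathbf{r}_l}\cup\lbrace\mathbf{r}_l\rbrace$ is established, the counting argument is a direct consequence of how $l_\mathbf{r}$ is defined and requires no further structural input.
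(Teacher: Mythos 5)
Your proposal is correct and follows essentially the same route as the paper: the paper proves this theorem inside the convergence proof of Protocol~\ref{alg:mr_bc} via the same left-to-right greedy construction, using the fact that every left-lying one- or two-hop peer of $\mathbf{r}_i$ lies in the one-hop neighborhood of $\mathbf{r}_i$'s leftmost one-hop peer, followed by the same pigeonhole count showing an idle slot remains at that peer's resolution. Your write-up merely makes the structural containment and the forbidden-state count more explicit than the paper does (and omits the throughput formula, which is immediate from (\ref{eqn:rho}) once collision-freedom holds).
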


\K{
Fig.~\ref{fig:network} illustrates the procedure in Theorem~\ref{thm:num_states_1d}. 
Each station computes the size of its one-hop neighborhood (which is labeled on top of the \khh{half}{} circle representing its transmission range). 
Stations then choose their resolutions following (\ref{eqn:resolution_1d}). 
}{}

\K{}{\KH{We will introduce \KHH{a}{the} multi-resolution protocol in \KHH{Section~\ref{sec:multi_resolution_1d}, which has guaranteed convergence and achieves the throughput given in Theorem~\ref{thm:num_states_1d}}{the next subsection, and prove this result together with the convergence of such protocol}}{The proof of this result will be defered until we introduce the multi-resolution protocol, where the convergence of such protocol also proves this result}. }

\begin{figure}[t]
\centering
\includegraphics[width=3.5in]{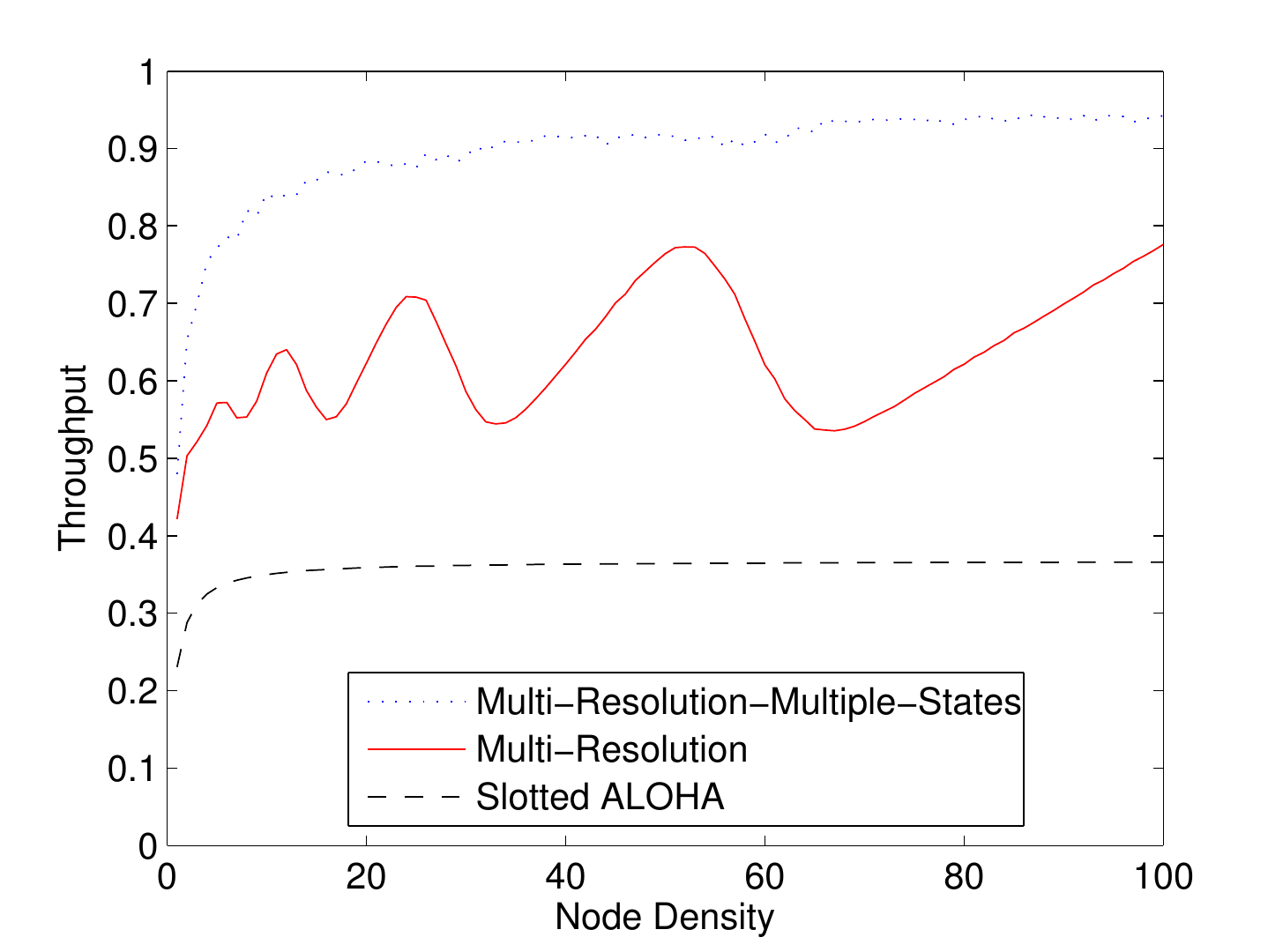}
\caption{Throughput of one-dimensional networks versus node \K{density}{intensity}.}
\label{fig:throughput}
\end{figure}

\KH{}{
\begin{thm}
\label{thm:throughput_1d}
Assume station $\mathbf{r}$ determines $l_\mathbf{r}$ following Theorem~\ref{thm:num_states_1d}, and each station transmits in one slot only. 
The one-hop broadcast throughput of an one-dimensional network is 
\begin{equation}
\label{eqn:throughput_1d}
\rho=\bigl\langle\lvert N_\mathbf{r}\rvert2^{-\lceil\log_2(\max_{\mathbf{r}^\prime\in N_\mathbf{r}\cup\lbrace\mathbf{r}\rbrace}(1+\lvert N_{\mathbf{r}^\prime}\rvert))\rceil}\bigr\rangle_\mathbf{r}.
\end{equation} 
\end{thm}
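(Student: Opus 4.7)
This theorem reduces to a short accounting calculation once Theorem~\ref{thm:num_states_1d} is invoked; all of the genuine combinatorial work lives in that earlier result. First I would fix a collision-free configuration $\{X_\mathbf{r}\}$ guaranteed by Theorem~\ref{thm:num_states_1d}, in which each station $\mathbf{r}$ occupies exactly one slot of duration $2^{-l_\mathbf{r}}$ per cycle with $l_\mathbf{r} = \lceil \log_2(\max_{\mathbf{r}^\prime \in V_\mathbf{r} \cup \{\mathbf{r}\}} w_{\mathbf{r}^\prime})\rceil$. The remaining task is then to compute the per-receiver reception rate and average it.

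\textbf{Step 1 (decoupling reception events at a fixed receiver).} Fix a station $\mathbf{r}$. Any two distinct members of $V_\mathbf{r} \cup \{\mathbf{r}\}$ share $\mathbf{r}$ as a common neighbor, so each such pair is at graph distance at most two. The collision condition (\ref{eqn:collision}) therefore fails for every such pair, meaning the $1+\lvert V_\mathbf{r}\rvert$ transmit slots $\{X_{\mathbf{r}^\prime}(t)\}_{\mathbf{r}^\prime \in V_\mathbf{r} \cup \{\mathbf{r}\}}$ are pairwise disjoint time intervals inside a cycle. Since a packet is decoded at $\mathbf{r}$ exactly when some single peer transmits and $\mathbf{r}$ itself does not, pairwise disjointness makes both conditions automatic, so the fraction of each cycle during which $\mathbf{r}$ successfully receives a packet is
\[
\sum_{\mathbf{r}^\prime \in V_\mathbf{r}} 2^{-l_{\mathbf{r}^\prime}}.
\]

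\textbf{Step 2 (spatial average and double counting).} Averaging the per-receiver rate over all stations gives
\[
\rho = \Bigl\langle \sum_{\mathbf{r}^\prime \in V_\mathbf{r}} 2^{-l_{\mathbf{r}^\prime}} \Bigr\rangle_\mathbf{r}.
\]
Using the symmetric one-hop peer relation $\mathbf{r}^\prime \in V_\mathbf{r} \Leftrightarrow \mathbf{r} \in V_{\mathbf{r}^\prime}$, I would swap the roles of transmitter and receiver in the double sum to rewrite this in the transmitter-side form $\rho = \bigl\langle \lvert V_\mathbf{r}\rvert \, 2^{-l_\mathbf{r}} \bigr\rangle_\mathbf{r}$. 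Substituting the explicit expression for $l_\mathbf{r}$ from Theorem~\ref{thm:num_states_1d} then yields exactly (\ref{eqn:throughput_1d}).

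\textbf{Main obstacle.} There is essentially no combinatorial difficulty here; the hard existence content has been packaged into Theorem~\ref{thm:num_states_1d}. The only point that demands any care is Step~1: one must verify that collision-freeness applied to the \emph{two-hop} peer conflict graph forces the slots of \emph{all} members of $V_\mathbf{r} \cup \{\mathbf{r}\}$ — not merely pairs that are one-hop apart — to be disjoint, which is what decouples the reception events and turns the sum-of-peer-rates expression into an equality rather than an upper bound. This follows at once since any two neighbors of $\mathbf{r}$ are two-hop peers through $\mathbf{r}$.
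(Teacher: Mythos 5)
Your proposal is correct and follows essentially the same route as the paper: the paper obtains (\ref{eqn:throughput_1d}) as the specialization of its general throughput identity (\ref{eqn:rho}), which it justifies exactly as you do — by "counting throughput from the receiver side and the transmitter side" in a collision-free configuration whose existence is supplied by Theorem~\ref{thm:num_states_1d}. Your Step~1 merely makes explicit the observation (implicit in the paper) that all members of $V_\mathbf{r}\cup\lbrace\mathbf{r}\rbrace$ are pairwise one- or two-hop peers through $\mathbf{r}$, so collision-freeness forces their slots to be disjoint and the receiver-side sum to be exact.
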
}

\KH{
Consider finite segments of one-dimensional networks where stations are distributed following a Poisson point process with \K{node density}{intensity} $\lambda$\KHH{}{, and the transmission range is $R=1$}.
\KHH{We assume that the network is a \textit{unit disk graph} with transmission range $R=1$, \textit{i.e.}, there is a link between two stations if and only if the distance between them is at most equal to the transmission range $R=1$.}{}
We evaluate \kh{the throughput}{(\ref{eqn:throughput_1d})} by averaging over \khh{$100$}{} different realizations of the networks.
How the throughput $\rho_\text{BC}$ varies with the \K{node density}{intensity} $\lambda$ is shown in Fig.~\ref{fig:throughput}.}{
Consider a finite segment of an one-dimensional network where stations are distributed following a Poisson point process with intensity $\lambda$. 
The transmission range is $R=1$. 
How the throughput $\rho$ varies \KH{with}{against} the intensity $\lambda$ is shown in Fig.~\ref{fig:throughput}. }
The throughput oscillation is due to the fact that \KH{the size of any state space is}{all number of states have to be} a power of $2$. 
In a worst-case scenario, if a station determines that it needs $2^l+1$ states, it\KHH{}{ actually} has to use \KH{a resolution of}{} $2^{l+1}$ states, meaning that\KHH{}{ potentially} almost half of the cycle will be left idle, hence the throughput \KHH{is}{can be very} close to $0.5$. 
\KH{\KHH{Thus}{Hence}, a small increase in the \K{node density}{intensity} may cause a rather large \KHH{drop in the throughput \K{under certain circumstances}{}}{fraction of slots to be wasted}.}{}

After forming a collision-free configuration, there may still be many idle slots \KH{in certain neighborhoods}{such that stations can transmit in these slots without collision, thereby increasing the throughput}. 
\K{
To illustrate this, consider a collision-free configuration which is formed by letting stations, from the left to the right, pick the earliest slot to transmit such that they do not collide with any station. 
We \khh{then}{} let stations, from the left to the right, reclaim the idle slots to transmit, such that station $\mathbf{r}$ reclaims at most $\Bigl\lceil\frac{2^{l_\mathbf{r}}}{\lvert V_\mathbf{r}\rvert}\Bigr\rceil-1$ additional slots and ensures that it does not collide with other stations. 
By doing so, station $\mathbf{r}$ transmits at a rate approximately equal to $\frac{1}{\lvert V_\mathbf{r}\rvert}$. 
\khh{The top curve in}{} Fig.~\ref{fig:throughput} shows that a significant improvement in throughput \khh{results from this reclaiming}{is resulted by letting stations reclaim the idle slots to transmit}. 
}{
How stations \KH{reclaim}{transmit in} these idle slots is left to future work. 
}

For comparison, we also compute the throughput for slotted ALOHA \KH{in a one-dimensional network, where stations use the same fixed transmission probability $p$ but do not exchange any state information}{}. 
Consider a segment of a one-dimensional network of length $2R$ with a station at the center. 
This station has $k$ peers with probability $\exp(-\lambda2R)\frac{(\lambda2R)^k}{k!}$\khh{, and}{.\KH{}{Assume stations transmit with a fixed probability $p$.  }
This station} receives a packet successfully with probability $kp(1-p)^k$. 
Then, 
\begin{IEEEeqnarray}{rCl}
\label{eqn:sa_p}
\rho_\text{BC}(p)&=&\sum_{k=1}^\infty\exp(-\lambda2R)\frac{(\lambda2R)^k}{k!}kp(1-p)^k\IEEEnonumber\\
&=&\lambda2Rp(1-p)\exp(-\lambda2Rp).\IEEEnonumber
\end{IEEEeqnarray}
\KH{The maximum throughput is}{Optimizing over all $p$, the throughput is}
\begin{displaymath}
\rho_\text{BC}=\frac{\lambda2R}{2+\sqrt{4+(\lambda2R)^2}}\exp\Biggl(-\frac{2\lambda2R}{2+\lambda2R+\sqrt{4+(\lambda2R)^2}}\Biggr)
\end{displaymath}
\K{which is achieved with transmission probability}{\KH{if the transmission probability is chosen to be}{with the corresponding $p$ given by}}
\begin{displaymath}
\label{eqn:p_opt_sa}
p=\frac{2}{2+\lambda2R+\sqrt{4+(\lambda2R)^2}}.
\end{displaymath}
This optimized throughput\KH{, with $R=1$,}{} is\KH{}{ also} plotted in Fig.~\ref{fig:throughput}. 
The multi-resolution MAC protocol provides $46.7\%$ to $112.2\%$ improvement in terms of throughput over slotted ALOHA. 

\subsection{\kh{}{A }Multi-Resolution MAC Protocol}
\label{sec:multi_resolution_1d}

In the following we propose a \textit{multi-resolution protocol} that leads to a collision-free configuration \textit{starting from an arbitrary initial configuration}. 
Stations can learn two-hop state information in each cycle as follows. 
In the $t$-th cycle, station $\mathbf{r}$ collects $\bigl\lbrace X_{\mathbf{r}^\prime}(t)\bigr\rbrace_{\mathbf{r}^\prime\in V_\mathbf{r}}$, and then broadcasts $\bigl\lbrace X_{\mathbf{r}^\prime}(t)\bigr\rbrace_{\mathbf{r}^\prime\in V_\mathbf{r}\cup\lbrace\mathbf{r}\rbrace}$. 
Hence, station $\mathbf{r}$ knows \KH{$X_{\mathbf{r}^\prime}(t)$ for all one-hop and two-hop peers $\mathbf{r}^\prime$}{$\bigl\lbrace X_{\mathbf{r}^{\prime\prime}}(t)\bigr\rbrace_{\mathbf{r}^{\prime\prime}\in N_{\mathbf{r}^\prime}\cup\lbrace\mathbf{r}^\prime\rbrace}$ for all $\mathbf{r}^\prime\in N_\mathbf{r}\cup\lbrace\mathbf{r}\rbrace$} (this is accomplished by letting station $\mathbf{r}$ broadcast $2l_\mathbf{r}+\sum_{\mathbf{r}^\prime\in V_\mathbf{r}}l_{\mathbf{r}^\prime}$ bits), and selects its state at the $(t+1)$-st cycle following Protocol~\ref{alg:mr_bc}\kh{, where the parameter $\epsilon$ is set to 0 in the case of one-dimensional networks (in case of two-dimensional networks discussed in Section~\ref{sec:2d}, we will set $\epsilon$ to a strictly positive number)}{}. 

\K{
\begin{algorithm}[t]
\caption{Multi-Resolution MAC Protocol for Broadcast}
\label{alg:mr_bc}
\begin{algorithmic}[1]
\WHILE{\khh{station $\mathbf{r}$ is active}{}}
\STATE
$\mathbf{r}$ sets the votes on all states to zero. 
\FOR{$\mathbf{r}^\prime\in V_\mathbf{r}\cup\lbrace\mathbf{r}\rbrace$}
\IF{$\mathbf{r}$ is the only station occupying its current state in station $\mathbf{r}^\prime$'s one-hop neighborhood} 
\STATE
\KH{$\mathbf{r}$'s current state is assigned a single vote of weight one}{A single vote of weight one on $\mathbf{r}$'s current state is given by $\mathbf{r}^\prime$}. 
\ELSE
\STATE
\KH{$\mathbf{r}$ determines which states (according to $\mathbf{r}$'s resolution) are idle or have collisions in $\mathbf{r}^\prime$'s one-hop neighborhood}{$\mathbf{r}$ determines the states (according to $\mathbf{r}$'s resolution) that station $\mathbf{r}^\prime$ is idle or collides}. 
\STATE
A vote of weight $\frac{1}{n}$ is \khh{added}{given} to each such state\KH{}{ by station $\mathbf{r}^\prime$}, where $n$ is the number of such states. 
\ENDIF
\ENDFOR
\IF{$n_s>0$ for multiple $s$'s, where $n_s$ is the total weight state $s$ receives}
\STATE
Replace $n_s$ by $n_s+\epsilon$, where $\epsilon\ge0$, for all $s$. 
\ENDIF
\STATE
$\mathbf{r}$ selects state $s$ with a probability proportional to $f(n_s)$.
\ENDWHILE
\end{algorithmic}
\end{algorithm}
}{
\begin{algorithm}[t]
\caption{Multi-Resolution MAC Protocol \KHH{for Broadcast on One-Dimensional Networks}{}}
\label{alg:mr_bc_1d}
\begin{algorithmic}[1]
\STATE
$\mathbf{r}$ sets the votes on all states to zero. 
\FOR{$\mathbf{r}^\prime\in V_\mathbf{r}\cup\lbrace\mathbf{r}\rbrace$}
\IF{$\mathbf{r}$ is the only station occupying its current state in station $\mathbf{r}^\prime$'s one-hop neighborhood} 
\STATE
\KH{$\mathbf{r}$'s current state is assigned a single vote of weight one}{A single vote of weight one on $\mathbf{r}$'s current state is given by $\mathbf{r}^\prime$}. 
\ELSE
\STATE
\KH{$\mathbf{r}$ determines which states (according to $\mathbf{r}$'s resolution) are idle or have collisions in $\mathbf{r}^\prime$'s one-hop neighborhood}{$\mathbf{r}$ determines the states (according to $\mathbf{r}$'s resolution) that station $\mathbf{r}^\prime$ is idle or collides}. 
\STATE
A vote of weight $\frac{1}{n}$ is given to each such state\KH{}{ by station $\mathbf{r}^\prime$}, where $n$ is the number of such states. 
\ENDIF
\ENDFOR
\STATE
$\mathbf{r}$ selects state $s$ with a probability proportional to $f(n_s)$, where $n_s$ is the total weight state $s$ receives.
\end{algorithmic}
\end{algorithm}
}

In Protocol~\ref{alg:mr_bc}, $f:\mathbb{R}\mapsto\mathbb{R}$ \KH{can be any}{is an} increasing function with $f(0)=0$\KH{. Empirically\khh{,}{} a good choice is }{, \textit{e.g.},} $f(n_s)=\exp(Jn_s)\mathbf{1}_{\lbrace n_s>0\rbrace}$, where $\mathbf{1}_{\lbrace\cdot\rbrace}$ is the indicator function and \KH{$J>0$}{$J$} is the \textit{strength of interaction} (\KHH{more on this}{discussed} later). 
The idea of Protocol~\ref{alg:mr_bc} is that a station `reserves' a slot \KH{for}{to} a peer if it knows that this peer does not collide with other peers, and notifies any peer experiencing collisions to stay away from these `reserved' slots. 
We have the following convergence result for this protocol. 

\begin{thm}
\label{thm:mr_1d}
If each station in a one-dimensional network chooses its resolution following Theorem~\ref{thm:num_states_1d} and executes Protocol~\ref{alg:mr_bc}, then \khh{all stations will converge to a collision-free configuration}{a collision-free configuration will be formed after a sufficiently long time}, regardless of \KH{the}{their} initial state\KH{}{s}. 
\K{The resulting throughput is given \kh{in Theorem~\ref{thm:num_states_1d}}{by (\ref{eqn:throughput_1d})}.}{}
\end{thm}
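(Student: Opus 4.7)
The plan is to view $\{\pmb{X}(t)\}_{t\ge 0}$ produced by Protocol~\ref{alg:mr_bc} as a time-homogeneous Markov chain on the finite configuration space $\Omega=\prod_{\mathbf{r}\in V}\mathbb{F}_2^{l_\mathbf{r}}$ and to establish two facts: (i) the set $\mathcal{C}\subseteq\Omega$ of collision-free configurations is absorbing; and (ii) from every $\pmb{X}\in\Omega$ there exist a finite horizon $T$ and a constant $p_0>0$ such that $\Pr(\pmb{X}(t+T)\in\mathcal{C}\mid\pmb{X}(t)=\pmb{X})\ge p_0$. Together with $\lvert\Omega\rvert<\infty$, these imply $\Pr(\pmb{X}(t)\notin\mathcal{C}\text{ for all }t\le nT)\le(1-p_0)^n\to 0$, so the chain is absorbed into $\mathcal{C}$ in finite time almost surely, regardless of the initial state.

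For (i), suppose $\pmb{X}(t)\in\mathcal{C}$. For every station $\mathbf{r}$ and every $\mathbf{r}'\in V_\mathbf{r}\cup\{\mathbf{r}\}$, no peer of $\mathbf{r}'$ occupies a slot of $\mathbf{r}$'s resolution overlapping $X_\mathbf{r}(t)$, so the first branch of Protocol~\ref{alg:mr_bc} fires and assigns weight one to $X_\mathbf{r}(t)$. Summing over the $w_\mathbf{r}$ contributors yields $n_{X_\mathbf{r}(t)}=w_\mathbf{r}$ and $n_s=0$ for every other $s$. Because $f(0)=0$ and $f(w_\mathbf{r})>0$, station $\mathbf{r}$ re-selects its current slot with probability one, so $\pmb{X}(t+1)=\pmb{X}(t)$ and $\mathcal{C}$ is absorbing.

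Property (ii) is the principal obstacle, and the plan is to exploit the one-dimensional structure together with Theorem~\ref{thm:num_states_1d}, which guarantees $\mathcal{C}\neq\emptyset$. Ordering the stations $\mathbf{r}_0,\mathbf{r}_1,\ldots$ from left to right, I would argue by induction on the index that a collision-free assignment can be grown from the left at most one station per cycle: given a configuration whose leftmost $k$ stations already hold a partial collision-free assignment extendable to some $\pmb{Y}\in\mathcal{C}$, one cycle of Protocol~\ref{alg:mr_bc} produces, with a probability bounded below by a topology- and $f$-dependent constant, a configuration whose leftmost $k+1$ stations form such a partial assignment. Two observations drive this step. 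First, for every station $\mathbf{r}$, the tally $n_{X_\mathbf{r}(t)}$ on its own current slot is always strictly positive, because $X_\mathbf{r}(t)$ is either locally unique (earning weight one in some neighbor's view) or colliding (earning a fractional vote in the \emph{else} branch), so a station holding its target can keep it with positive probability. Second, whenever $\mathbf{r}$ needs to move, the choice $2^{l_\mathbf{r}}\ge w_\mathbf{r}$ in Theorem~\ref{thm:num_states_1d} ensures that at least one slot of $\mathbf{r}$'s resolution is idle in each of $\mathbf{r}$'s neighborhoods, and in particular the target slot $Y_\mathbf{r}$ dictated by any extendable completion $\pmb{Y}$ appears as idle-or-colliding in some neighbor's view where the \emph{else} branch fires and so receives a strictly positive fractional vote.

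Combining (i) and (ii), after at most $T=\lvert V\rvert$ cycles from any initial configuration the chain enters $\mathcal{C}$ with probability at least $p_0$; iterating over non-overlapping horizons of length $T$ yields the geometric tail bound above, and absorption of $\mathcal{C}$ then gives convergence to a collision-free configuration in finite time almost surely. The main technical difficulty I anticipate is making the induction in (ii) rigorous while updates happen synchronously at all stations: care is needed to show that the already-converged leftmost $k$ stations remain on their target slots with positive probability while station $\mathbf{r}_k$ (and possibly some stations to its right) rearrange, and the lower bound on $p_0$ must be chosen uniformly over all partial configurations encountered along the induction.
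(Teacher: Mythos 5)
Your overall architecture (collision-free configurations are absorbing; from any configuration the chain reaches one within a bounded horizon with probability bounded away from zero; conclude by the geometric tail bound on a finite chain) matches the paper, and your part (i) and observation (a) are correct. The gap is in the key step of (ii): you fix an extendable completion $\pmb{Y}\in\mathcal{C}$ in advance and assert that the target slot $Y_{\mathbf{r}_k}$ ``appears as idle-or-colliding in some neighbor's view'' and hence receives a positive vote. This is not justified and is false in general: $Y_{\mathbf{r}_k}$ may currently be occupied by a single, non-colliding station (typically one to the right of $\mathbf{r}_k$ that has not yet been processed) in \emph{every} neighborhood containing $\mathbf{r}_k$, in which case that slot is neither idle nor colliding anywhere, receives total weight zero, and --- since $f(0)=0$ and $\epsilon=0$ in the one-dimensional protocol --- is selected with probability zero. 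For a concrete instance, take four mutually adjacent stations with $l=2$, current states $00,10,01,10$ and the completion $\pmb{Y}=(00,01,10,11)$: station $\mathbf{r}_1$ cannot move to $01$ because only the idle slot $11$ and the colliding slot $10$ receive votes. A related omission is that even granting a vote for some slot, you do not say \emph{which} neighborhood's idle slot $\mathbf{r}_k$ should take, nor why a slot idle in one neighborhood is simultaneously collision-free with all of $\mathbf{r}_k$'s left one-hop \emph{and} two-hop peers.

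The paper closes both holes at once by not committing to a completion: it lets $\mathbf{r}_i$ move to an idle slot in the view of its \emph{farthest-left} one-hop peer $\mathbf{r}_j$. Such a slot is guaranteed to exist by pigeonhole --- the resolution rule of Theorem~\ref{thm:num_states_1d} gives $\mathbf{r}_j$ and all its peers at least $2^{\lceil\log_2(1+\lvert V_{\mathbf{r}_j}\rvert)\rceil}$ slots, so the at most $\lvert V_{\mathbf{r}_j}\rvert$ busy periods cannot cover the cycle --- it is guaranteed to receive a positive vote because idle slots are exactly what the \emph{else} branch votes for, and it is guaranteed to be collision-free with every station to the left of $\mathbf{r}_i$ because, by the one-dimensional betweenness assumption, all left one-hop and two-hop peers of $\mathbf{r}_i$ lie inside $\mathbf{r}_j$'s one-hop neighborhood. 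The induction invariant is then simply ``each of the leftmost $k$ stations is collision-free with all stations to its left,'' with extendability a consequence of the pigeonhole count rather than a hypothesis. To repair your proof, replace the pre-committed target $Y_{\mathbf{r}_k}$ with this adaptively chosen idle slot; the rest of your argument, including the uniform lower bound $p_0$ over synchronous updates, then goes through.
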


\begin{IEEEproof}
\KH{
\KHH{}{It suffices to show that starting from an arbitrary configuration, there is a sequence of events which takes place with nonzero probability so that the network evolves to a collision-free configuration, which is an absorption state so that the network stays collision-free ever since. }Using Protocol~\ref{alg:mr_bc}, if a station does not collide with any one-hop or two-hop peers, then all the votes will be given to its current state, and it will remain in its current state with probability one.}{
Notice that for the proposed protocol, every station must have nonzero probability in choosing the same state in two consecutive time slots: 
\begin{enumerate}
\item
If it does not collide with any one-hop or two-hop peers, then all the votes will be given to its current state, and it will remain in its current state with probability one. 
\item
Otherwise, any one-hop peer detecting this collision will give a vote of nonzero weight to its current state, meaning that it will remain in its current state with nonzero probability. 
\end{enumerate}
Inductively, any station can choose the same state for any number of consecutive time slots with nonzero probability. }
\KH{Therefore}{In particular}, if the \K{current}{initial} configuration is collision-free, then the same\khh{}{collision-free} configuration will appear in every subsequent cycle, so every collision-free configuration is absorbing. 
Hence we only need to consider the case when the \khh{current}{initial} configuration is not collision-free and show that \khh{such configuration is}{configurations with collision are} transient. 
\khh{To do this we explicitly construct a collision-free configuration, which the stations in the current configuration can transition to with positive probability.}{}

\KHH{}{We show next that\KH{}{ with nonzero probability} a collision-free configuration can be reached from an arbitrary configuration. }Without loss of generality, assume the stations are indexed such that $\mathbf{r}_i$ is on the left of $\mathbf{r}_j$ if and only if $i<j$. 
Stations take turn\KH{s}{} to find a state that is collision-free with all stations on \KHH{their}{its} left: 
\begin{itemize}
\item
Station $\mathbf{r}_0$ remains in its initial state, so it is collision-free with all stations on its left \KH{(notice that following Protocol~\ref{alg:mr_bc}, \textit{every station has a nonzero probability of remaining in the same state})}{}. 
\item
Now, assume stations $\mathbf{r}_0,\dots,\mathbf{r}_{i-1}$ are collision-free with all stations on their left. For station $\mathbf{r}_i$:
\begin{enumerate}
\item
If its current state is collision-free with all stations on its left (including the special case where there is no neighboring station on its left), then it remains in its current state. 
\item
Otherwise, consider the farthest left one-hop peer of station $\mathbf{r}_i$, which we denote by $\mathbf{r}_j$. 
If $\mathbf{r}_i$ collides with some station $\mathbf{r}_k$ on its left, then $\mathbf{r}_j$ must be able to detect it, because $\mathbf{r}_j$ must be one-hop peers of both $\mathbf{r}_i$ and $\mathbf{r}_k$ ($\mathbf{r}_j$ and $\mathbf{r}_k$ can be the same station). 
$\mathbf{r}_j$ and all one-hop peers $\mathbf{r}_m$ of $\mathbf{r}_j$ use \KHH{resolutions of}{} at least $2^{\lceil\log_2(1+\lvert V_{\mathbf{r}_j}\rvert)\rceil}$ states.
Therefore, from station $\mathbf{r}_j$'s point of view, there are at most $\lvert V_{\mathbf{r}_j}\rvert$ distinct busy periods, each of length at most $2^{-\lceil\log_2(1+\lvert V_{\mathbf{r}_j}\rvert)\rceil}$. 
This means that there is at least one idle slot according to $\mathbf{r}_j$'s resolution, \textit{i.e.}, none of the $\mathbf{r}_m$'s use that slot. 
Therefore, $\mathbf{r}_j$ gives a vote of nonzero weight on this slot to $\mathbf{r}_i$, then with nonzero probability, $\mathbf{r}_i$ chooses this slot (or a fraction of this slot if it uses a finer resolution) and becomes collision-free with all stations on its left. 
\end{enumerate}
\end{itemize}
Finally, when station $\mathbf{r}_{\lvert V\rvert-1}$ finds a state that is collision-free with all stations on its left, the configuration is now collision-free. 
Therefore, all configurations with collision\khh{s}{} are transient\khh{, proving both Theorems~\ref{thm:num_states_1d} and~\ref{thm:mr_1d}}{. 

The above construction also proves Theorem~\ref{thm:num_states_1d}}. 
\end{IEEEproof}

\subsection{Simulations: Convergence Speed-up by Annealing}
\label{subsec:annealing_1d}

\begin{figure}[t]
\centering
\subfigure[Convergence time]{
\includegraphics[width=3.5in]{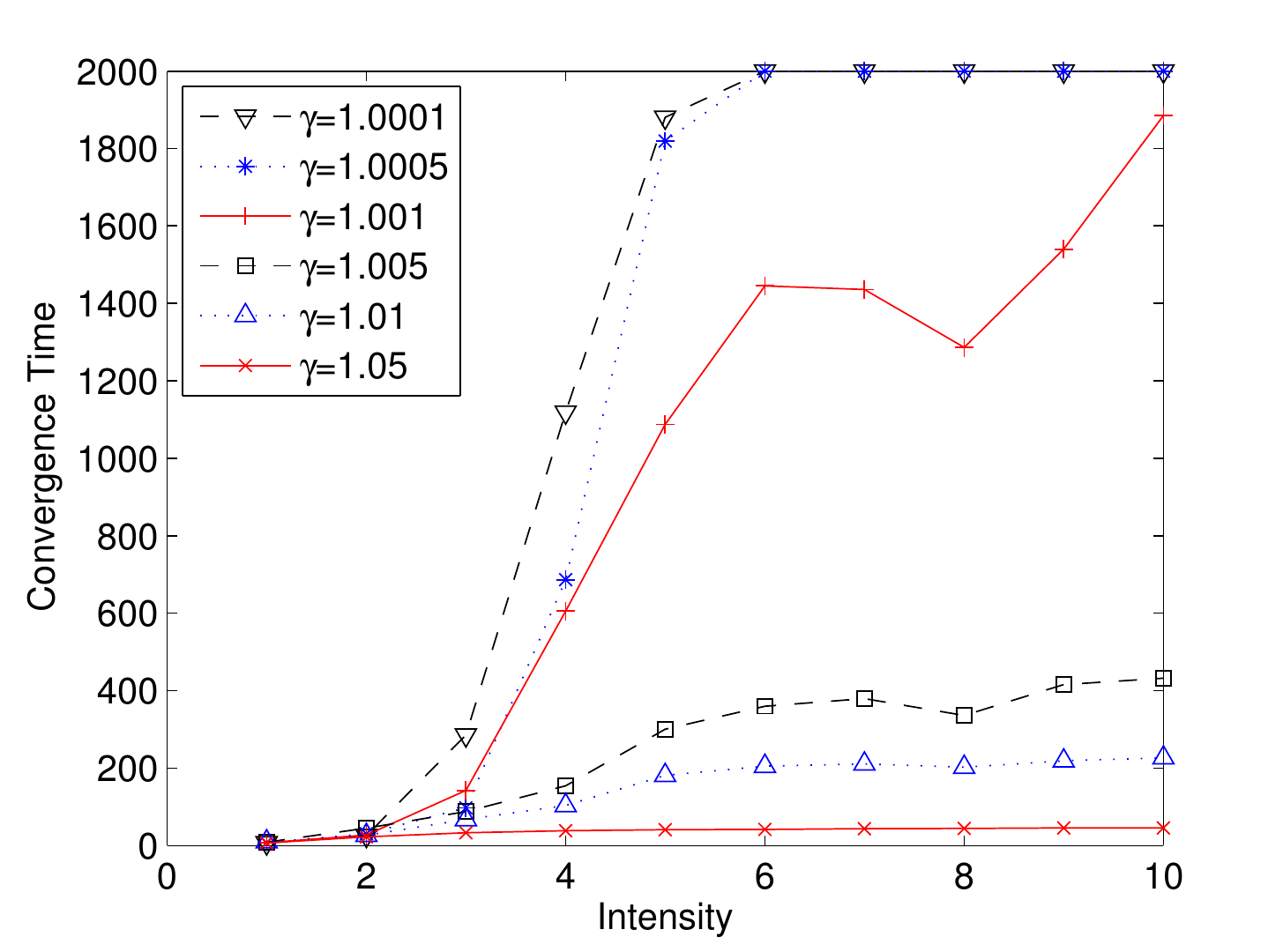}
\label{fig:time_1d}
}
\hspace{-0.4in}
\subfigure[Convergence percentage]{
\includegraphics[width=3.5in]{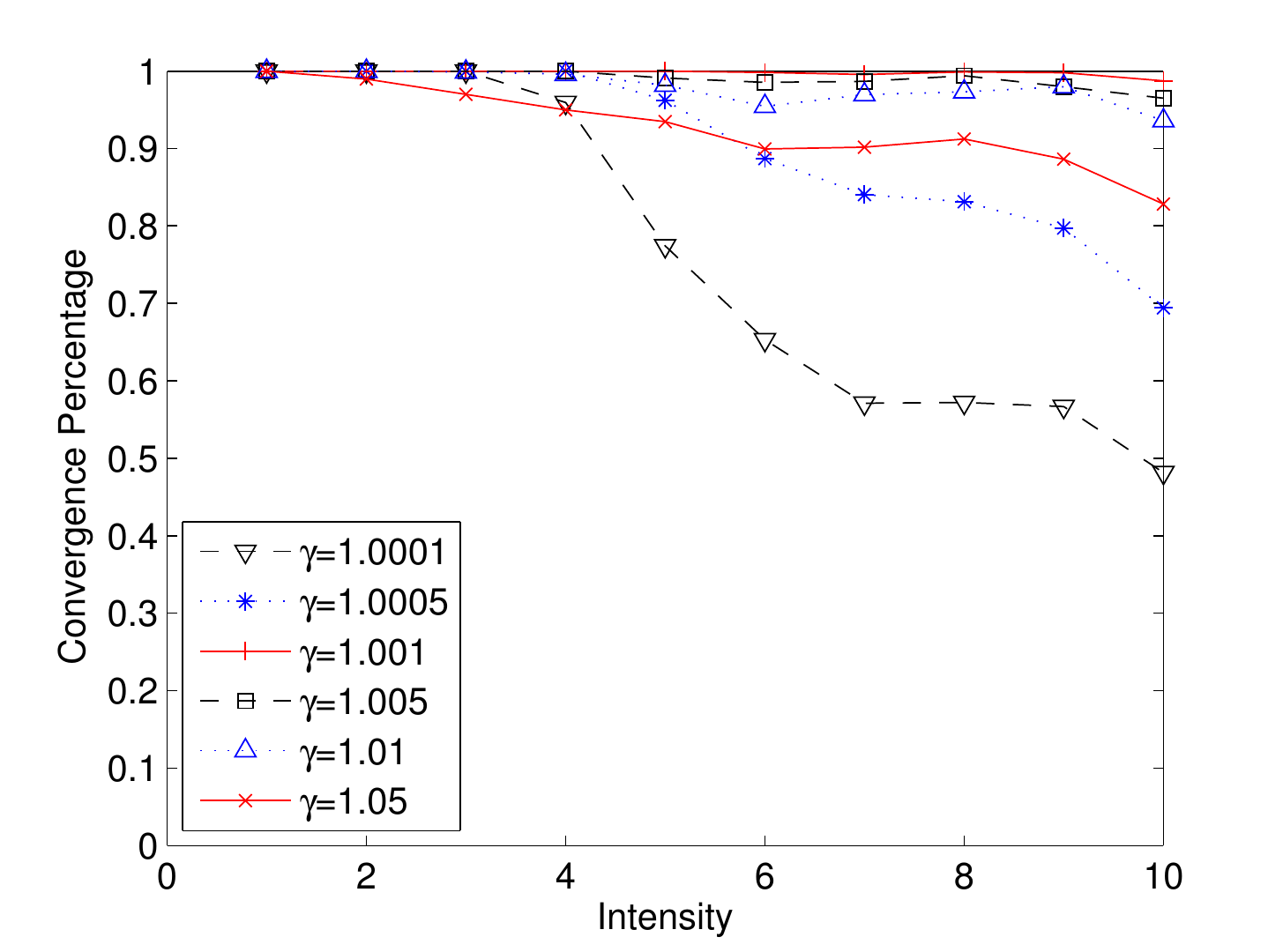}
\label{fig:percentage_1d}
}
\caption[]{Simulations of the multi-resolution protocol with annealing for one-dimensional networks.}
\label{fig:annealing_1d}
\end{figure}

\KH{S}{Preliminary s}imulations of the proposed protocol show\KH{}{s} that it may take a long time for a collision-free configuration to appear. 
Here we propose speeding up the convergence by \textit{annealing}, \textit{i.e.}, we consider the multi-resolution protocol with $f(n_s)=\exp(J(t)n_s)\mathbf{1}_{\lbrace n_s>0\rbrace}$, where $J(t)=\gamma J(t-1)$, $\gamma>1$ controls the increase in the strength of interaction, and $J(0)=1$.
Define the convergence time to be the first time that a \KH{certain}{fixed} configuration is observed \KH{and remains unchanged till the end of the simulation}{}, and the convergence percentage to be the proportion of stations that do not collide with other stations \KH{in that configuration}{when the fixed configuration is observed}.
\textit{\KH{This}{Notice that this fixed} configuration may not be collision-free}. 
\textit{This means that there is\KH{}{ still} a nonzero probability that the network transits to another configuration, but this probability is \KH{so}{too} small (as $J(t)$ is very large, resulting in every station staying in the state with maximum vote) that this transition is practically impossible.}
We consider \KH{a line segment}{segments of one-dimensional networks} of length $50$ \KH{on which}{where} stations are distributed following a Poisson point process with \K{node density}{intensity} $\lambda$\KH{}{ ranging from $1$ to $10$}.
The \KHH{network is a unit disk graph with}{} transmission range\KHH{}{ is} $R=1$. 
Ten simulations are run for each combination of $\lambda$ and $\gamma$. 
All simulations last for $2000$ iterations\KH{}{, so when the convergence time is $2000$, it means that additional time is needed for convergence}.
The convergence time and percentage are plotted in Figs.~\ref{fig:time_1d} and~\ref{fig:percentage_1d} respectively.
When $\gamma$ is too small, the effect of annealing is not significant, \khh{and as shown the algorithm may not have converged after $2000$ iterations}{as shown in the figures that after \KH{$2000$ iterations}{a long time}, \KHH{many}{lots of} stations still experience collisions}. 
When $\gamma$ is too large, the convergence time is reduced drastically, but the proportion of stations experiencing collisions is still significant. 
Notice the similarity of the results here with the annealing process in statistical mechanics: when the annealing is too slow, it takes longer time to reach the state with minumum energy; when the annealing is too fast, the system reaches some metastable state or becomes glassy with noncrystalline structure.

\section{Broadcast in Two-Dimensional Networks}
\label{sec:2d}

\subsection{Determining the number of states}
\label{subsec:l_2d}

\begin{figure}[t]
\centering
\subfigure[]{
\includegraphics[width=3.5in]{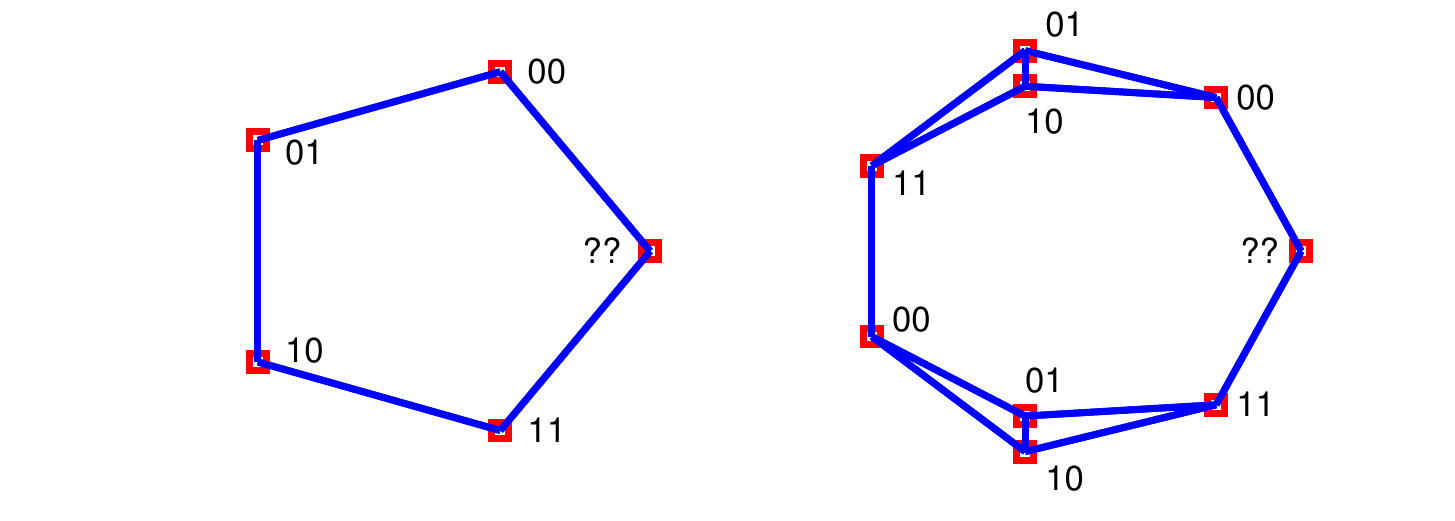}
\label{fig:counterexamples1}
}
\subfigure[]{
\includegraphics[width=3.5in]{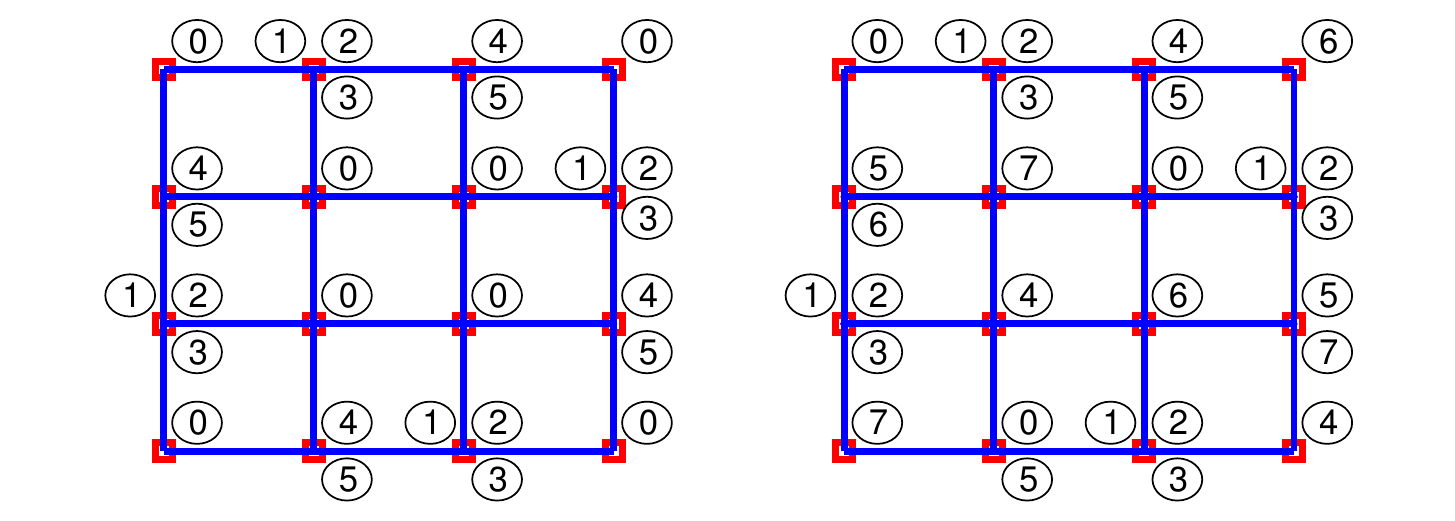}
\label{fig:counterexamples2}
}
\caption[]{Intricacies for two-dimensional networks\KH{: \subref{fig:counterexamples1} determining the number of states (`??' labels stations that are unable to pick a state without collision), \subref{fig:counterexamples2} converging to a collision-free configuration}{}.}
\label{fig:counterexamples}
\end{figure}

Unlike \KH{in}{} one-dimensional networks, the resolution $l_\mathbf{r}$ cannot be completely determined by \khh{(\ref{eqn:resolution_1d})}{two-hop topology information} in two-dimensional networks.
An example is shown in the left part of Fig.~\ref{fig:counterexamples1}. 
If \khh{(\ref{eqn:resolution_1d})}{the \KH{rule}{strategy} in Theorem~\ref{thm:num_states_1d}} is used here, \KH{every station has two one-hop peers and therefore}{all stations} should use \KH{a resolution of}{} four states. 
But since every station is within two hops of every other station,  at least five states are needed to resolve any collision. 
\khh{
For a two-dimensional network, the following theorem shows that (\ref{eqn:resolution_1d}) gives a lower bound on the needed resolution. 
An upper bound on the resolution is also given. 

\begin{thm}
\label{thm:num_states_2d}
\khh{A lower bound on the needed resolution for a collision-free configuration for broadcast to exist is given by}{There exists a collision-free configuration for which the resolution of each station $\mathbf{r}$ is lower bounded by}
\begin{equation}
\label{eqn:resolution_lb_2d}
\underline{l}_\mathbf{r}=\biggl\lceil\log_2\biggl(\max_{\mathbf{r}^\prime\in V_\mathbf{r}\cup\lbrace\mathbf{r}\rbrace}(1+\lvert V_{\mathbf{r}^\prime}\rvert)\biggr)\biggr\rceil.
\end{equation}
\khh{A corresponding upper bound is given by}{and upper bounded by}
\begin{equation}
\label{eqn:resolution_ub_2d}
\overline{l}_\mathbf{r}=\biggl\lceil\log_2\biggl(\max_{\mathbf{r}^\prime\in V_\mathbf{r}^2\cup\lbrace\mathbf{r}\rbrace}(1+\lvert V_{\mathbf{r}^\prime}^2\rvert)\biggr)\biggr\rceil,
\end{equation}
where $V_\mathbf{r}^2$ is the set of one-hop or two-hop peers of $\mathbf{r}$. 
The resulting one-hop broadcast throughput of a two-dimensional network is bounded as follows:
\begin{equation}
\label{eqn:throughput_2d}
\frac{1}{\lvert V\vert}\sum_{\mathbf{r}\in V}\lvert V_\mathbf{r}\rvert2^{-\overline{l}_\mathbf{r}}\le\rho_\text{BC}\le\frac{1}{\lvert V\vert}\sum_{\mathbf{r}\in V}\lvert V_\mathbf{r}\rvert2^{-\underline{l}_\mathbf{r}}.
\end{equation}
\end{thm}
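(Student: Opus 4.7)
The plan is to establish the lower and upper bounds on $l_\mathbf{r}$ separately and then plug both into the collision-free throughput formula (\ref{eqn:rho}) to get (\ref{eqn:throughput_2d}).

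For the \textbf{lower bound} $\underline{l}_\mathbf{r}$, I would repeat the neighborhood counting argument that is implicit in Theorem~\ref{thm:num_states_1d}. Fix $\mathbf{r}$ and let $\mathbf{r}^\ast\in V_\mathbf{r}\cup\{\mathbf{r}\}$ attain $\max_{\mathbf{r}'\in V_\mathbf{r}\cup\{\mathbf{r}\}}(1+|V_{\mathbf{r}'}|)$. The $1+|V_{\mathbf{r}^\ast}|$ stations in $V_{\mathbf{r}^\ast}\cup\{\mathbf{r}^\ast\}$ are pairwise one-hop or two-hop peers through $\mathbf{r}^\ast$, so by the collision condition (\ref{eqn:collision}) their states form a prefix-free code. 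Kraft's inequality restricted to this neighborhood gives $\sum_{\mathbf{r}''\in V_{\mathbf{r}^\ast}\cup\{\mathbf{r}^\ast\}}2^{-l_{\mathbf{r}''}}\le 1$. Under the natural symmetric choice in which stations in a common one-hop neighborhood share a single resolution, this forces every such resolution, and in particular $l_\mathbf{r}$, to be at least $\lceil\log_2(1+|V_{\mathbf{r}^\ast}|)\rceil=\underline{l}_\mathbf{r}$.

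For the \textbf{upper bound} $\overline{l}_\mathbf{r}$, I would exhibit a collision-free configuration by greedy colouring of the two-hop conflict graph $H=(V,E)$, where $\mathbf{r}\sim\mathbf{r}'$ iff $\mathbf{r}'\in V_\mathbf{r}^2$. Order the stations by \emph{decreasing} $\overline{l}_\mathbf{r}$ and process them one at a time. When it is $\mathbf{r}$'s turn, assign it a binary string of length exactly $\overline{l}_\mathbf{r}$. Every already-processed peer $\mathbf{r}'\in V_\mathbf{r}^2$ satisfies $\overline{l}_{\mathbf{r}'}\ge\overline{l}_\mathbf{r}$ by the ordering, so its assigned string has a unique length-$\overline{l}_\mathbf{r}$ prefix that $\mathbf{r}$ must avoid, excluding at most $|V_\mathbf{r}^2|$ of the $2^{\overline{l}_\mathbf{r}}$ candidate strings. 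Since $2^{\overline{l}_\mathbf{r}}\ge 1+\max_{\mathbf{r}'\in V_\mathbf{r}^2\cup\{\mathbf{r}\}}|V_{\mathbf{r}'}^2|\ge 1+|V_\mathbf{r}^2|$, at least one legal choice remains. The procedure terminates in a collision-free configuration with $l_\mathbf{r}\le\overline{l}_\mathbf{r}$ for every $\mathbf{r}$.

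The throughput bound (\ref{eqn:throughput_2d}) then follows by substituting $\underline{l}_\mathbf{r}\le l_\mathbf{r}\le\overline{l}_\mathbf{r}$ into (\ref{eqn:rho}); monotonicity of $x\mapsto2^{-x}$ converts the upper (respectively, lower) bound on $l_\mathbf{r}$ into the lower (respectively, upper) bound on $\rho_\text{BC}$. The main obstacle is the upper-bound construction: one must take care that the decreasing-$\overline{l}$ ordering makes every peer already in place use a resolution at least $\overline{l}_\mathbf{r}$, so that each such peer costs $\mathbf{r}$ only a single forbidden length-$\overline{l}_\mathbf{r}$ string rather than a whole subtree, and that the per-station cap $\overline{l}_\mathbf{r}$, defined as a maximum over $V_\mathbf{r}^2\cup\{\mathbf{r}\}$, is large enough to absorb all such exclusions locally, without appealing to a global uniform resolution.
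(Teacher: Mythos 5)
Your proposal is correct, and for the lower bound and the throughput inequality it follows the paper's route: the paper likewise argues that $\mathbf{r}$ lies in the one-hop neighborhood of every $\mathbf{r}'\in V_\mathbf{r}\cup\{\mathbf{r}\}$, whose $1+\lvert V_{\mathbf{r}'}\rvert$ members must all transmit at distinct times, and then rounds up to a power of $2$; your Kraft-inequality phrasing plus the explicit ``symmetric resolution'' caveat is a more honest rendering of the same step (the per-station bound genuinely fails without some such fairness assumption, a gap the paper's own wording also glosses over). Where you genuinely diverge is the upper bound. The paper's proof is a non-constructive counting remark --- ``in the worst case at most one station per two-hop neighborhood transmits, so at most $1+\lvert V_{\mathbf{r}'}^2\rvert$ states are required'' --- and never exhibits a collision-free configuration; your greedy assignment on the two-hop conflict graph, processing stations in decreasing order of $\overline{l}_\mathbf{r}$ so that each already-placed peer forbids exactly one length-$\overline{l}_\mathbf{r}$ prefix out of $2^{\overline{l}_\mathbf{r}}\ge 1+\lvert V_\mathbf{r}^2\rvert$ candidates, actually constructs one. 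This is essentially the perfect-elimination-ordering argument the paper deploys only later, for Theorem~\ref{thm:num_states_2d_chordal} under a chordality hypothesis; your observation that ordering by resolution (rather than by an elimination ordering) suffices when every station is granted the full $\overline{l}_\mathbf{r}$ budget buys an unconditional, rigorous existence proof where the paper offers only a heuristic count, and it even shows the slightly smaller value $\lceil\log_2(1+\lvert V_\mathbf{r}^2\rvert)\rceil$ would already do. The final step, feeding $\underline{l}_\mathbf{r}\le l_\mathbf{r}\le\overline{l}_\mathbf{r}$ into (\ref{eqn:rho}) and using monotonicity of $x\mapsto 2^{-x}$, is exactly what the paper intends.
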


\begin{IEEEproof}
For a station to receive a packet from each one-hop peer, the station itself and all its one-hop peers must transmit at different times. 
If $\mathbf{r}^\prime\in V_\mathbf{r}\cup\lbrace\mathbf{r}\rbrace$, station $\mathbf{r}$ is within the one-hop neighborhood $V_{\mathbf{r}^\prime}\cup\lbrace\mathbf{r}^\prime\rbrace$, and therefore $1+\lvert V_{\mathbf{r}^\prime}\rvert$ states are required to resolve any collision in $V_{\mathbf{r}^\prime}\cup\lbrace\mathbf{r}^\prime\rbrace$. 
Then, at least $\max_{\mathbf{r}^\prime\in V_\mathbf{r}\cup\lbrace\mathbf{r}\rbrace}(1+\lvert V_{\mathbf{r}^\prime}\rvert)$ states are required. 
Finally, we assume $\underline{l}_\mathbf{r}$ to be an integer and $2^{\underline{l}_\mathbf{r}}\ge\max_{\mathbf{r}^\prime\in V_\mathbf{r}\cup\lbrace\mathbf{r}\rbrace}(1+\lvert V_{\mathbf{r}^\prime}\rvert)$, therefore the lower bound (\ref{eqn:resolution_lb_2d}) is established. 

Observe that a station cannot transmit when one of its one-hop or two-hop peers transmits in a collision-free configuration. 
In the worst case, at most one station in every two-hop neighborhood transmits at any time.  
Now, if $\mathbf{r}^\prime\in V_\mathbf{r}^2\cup\lbrace\mathbf{r}\rbrace$, station $\mathbf{r}$ is within the two-hop neighborhood $V_{\mathbf{r}^\prime}^2\cup\lbrace\mathbf{r}^\prime\rbrace$, and in the worst case $1+\lvert V_{\mathbf{r}^\prime}^2\rvert$ states are required to resolve any collision in $V_{\mathbf{r}^\prime}^2\cup\lbrace\mathbf{r}^\prime\rbrace$. 
Therefore, at most $\max_{\mathbf{r}^\prime\in V_\mathbf{r}^2\cup\lbrace\mathbf{r}\rbrace}(1+\lvert V_{\mathbf{r}^\prime}^2\rvert)$ states are required. 
Finally, we assume $\overline{l}_\mathbf{r}$ to be an integer and $2^{\overline{l}_\mathbf{r}}\ge\max_{\mathbf{r}^\prime\in V_\mathbf{r}^2\cup\lbrace\mathbf{r}\rbrace}(1+\lvert V_{\mathbf{r}^\prime}^2\rvert)$, therefore the upper bound (\ref{eqn:resolution_ub_2d}) is established. 
\end{IEEEproof}

The upper bound in Theorem~\ref{thm:num_states_2d} can be quite loose. 
When the network $G$ is `well-connected' it can be improved on by using}{This situation can be remedied if the network $G$ is `well-connected' and the protocol is modified as follows: $l_\mathbf{r}$ is determined by the size of} the maximum clique in $G^2$ containing $\mathbf{r}$, where $G^2=(N,A^2)$ is the square of $G$, \textit{i.e.}, $(\mathbf{r}_i,\mathbf{r}_j)\in A^2$ if $\mathbf{r}_i$ and $\mathbf{r}_j$ are one-hop or two-hop peers in $G$.
\khh{
More formally, we require that $G^2$ to be \textit{chordal}, meaning that in any cycle of at least four vertices, there must exist an edge between some pair of nonadjacent vertices. 
A key property of chordal graphs is that they have a \textit{perfect elimination ordering} of their vertices \cite{DFOG:1}, \textit{i.e.}, one can order the vertices by repeatedly finding a vertex such that all its neighbors form a clique, and then removing it along with all incident edges. 
We use this property to prove the next theorem, which shows that the choice of $l_\mathbf{r}$ based on the maximum clique size in $G^2$ is adequate. 
}{}

\K{}{
\begin{thm}
\label{thm:num_states_2d_chordal}
Suppose a two-dimensional network $G$ has a chordal square, \textit{i.e.}, $G^2$ is chordal. 
If station $\mathbf{r}$ uses resolution $l_\mathbf{r}=\lceil\log_2\lvert C_\mathbf{r}\rvert\rceil$, where $C_\mathbf{r}$ is the maximum clique in $G^2$ containing $\mathbf{r}$, then it is possible for each station to choose a state such that collision-free configurations exist.
\end{thm}
}

\khh{}{
\KH{A graph is \textit{chordal} if in any cycle of at least four vertices, there must exist an edge between some \KHH{pair of}{two} nonadjacent vertices}{A graph is \textit{chordal} if for any cycle of at least four vertices, there must exist an edge between two nonadjacent vertices in the cycle}.
Furthermore, a graph is chordal if and only if it has a perfect elimination ordering of vertices \cite{DFOG:1}. 
A \textit{perfect elimination ordering} \KH{is}{can be} constructed by repeatedly finding a vertex such that all its neighbors form a clique, and then removing it along with all \KH{incident}{} edges\KH{}{ incident on the vertex}. 
The order that the vertices are removed is a perfect elimination ordering. 
\KH{
Evidently, chords are abundant in networks with moderate density. 
On the other hand, if a network is quite sparse, then its square may not be chordal, but \K{medium access control}{scheduling} in a sparse network could be relatively easy through other heuristic means. }{}
}

\K{
\begin{thm}
\label{thm:num_states_2d_chordal}
Suppose a two-dimensional network $G$ has a chordal square, \textit{i.e.}, $G^2$ is chordal. 
If station $\mathbf{r}$ uses resolution $l_\mathbf{r}=\lceil\log_2\lvert C_\mathbf{r}\rvert\rceil$, where $C_\mathbf{r}$ is the maximum clique in $G^2$ containing $\mathbf{r}$, then it is possible for each station to choose a state such that collision-free configurations exist.
\end{thm}
}{}

\begin{IEEEproof}
By assumption, there exists a perfect elimination ordering of vertices for $G^2$. 
Without loss of generality, assume the stations are indexed following the reverse of the perfect elimination ordering, \textit{i.e.}, $\mathbf{r}_j$ appears after $\mathbf{r}_i$ in the perfect elimination ordering if and only if $j<i$. 
We will show by induction that station $\mathbf{r}_i$ must be able to find a state so that it is collision-free with stations $\mathbf{r}_j$ where $j<i$. 

\khh{S}{For s}tation $\mathbf{r}_0$\khh{}{, it} can pick any state. 
Now, assume stations $\mathbf{r}_0,\dots,\mathbf{r}_{i-1}$ pick their states such that they are collision-free among themselves. 
Then, for station $\mathbf{r}_i$, let $C=\lbrace\mathbf{r}_j\colon j<i\text{ and }(\mathbf{r}_i,\mathbf{r}_j)\in A^2\rbrace$. 
By definition of perfect elimination ordering, $C\cup\lbrace\mathbf{r}_i\rbrace$ is a clique in $G^2$. 
Therefore, $\mathbf{r}_i$ and all $\mathbf{r}_j\in C$ use \KHH{resolutions of}{} at least $2^{\lceil\log_2(1+\lvert C\rvert)\rceil}$ states.
Hence, from station $\mathbf{r}_i$'s point of view, there are $\lvert C\rvert$ distinct busy periods, each of length at most $2^{-\lceil\log_2(1+\lvert C\rvert)\rceil}$. 
This means that there is at least one idle slot according to $\mathbf{r}_i$'s resolution, \textit{i.e.}, none of the $\mathbf{r}_j$'s in $C$ use that slot. 
Therefore, station $\mathbf{r}_i$ can pick this slot (or a fraction of this slot if it uses a finer resolution) and therefore becomes collision-free with all stations $\mathbf{r}_j$ where $j<i$. 
Repeating this argument, when station $\mathbf{r}_{\lvert V\rvert-1}$ finds a state that is collision-free with stations $\mathbf{r}_j$ where $j<\lvert V\rvert-1$, then the configuration is now collision-free. 
\end{IEEEproof}

The condition in Theorem~\ref{thm:num_states_2d_chordal} is sufficient but not necessary.
For example, the right part of Fig.~\ref{fig:counterexamples1} shows that a collision-free configuration cannot be found using the \KH{resolutions}{number of states} predicted in Theorem~\ref{thm:num_states_2d}\KH{}{, but the right part of Fig.~\ref{fig:counterexamples2} shows the contrary}. 
\KH{Consider also the right part of Fig.~\ref{fig:counterexamples2}, which is a $4\times4$ square lattice where multiple stations are collocated on some lattice points.
Theorem~\ref{thm:num_states_2d} predicts that every station uses a resolution of eight states, and a collision-free configuration exists, as shown in the figure.
For illustrative purposes, we use \khh{decimal}{octal} representation of the states, \textit{e.g.}, state $101$ is denoted as $5$.}{}
In both cases, $G^2$ is not chordal. 

Since \KH{the sizes of all state spaces}{all number of states} are powers of $2$, additional states are provisioned in many cases. 
Therefore, a collision-free configuration \KHH{is \khh{likely}{guaranteed} to exist by using the rule in Theorem~\ref{thm:num_states_2d_chordal}}{can be formed using local information exchange} even for many networks without chordal squares. 

\khh{}{
For a general two-dimensional network, \KH{upper and lower bounds on the resolution}{Theorem~\ref{thm:num_states_1d} only gives a \textit{lower bound} $\underline{l}_\mathbf{r}\le l_\mathbf{r}$ on the resolution. 
We can obtain a similar expression for an \textit{upper bound} $\overline{l}_\mathbf{r}\ge l_\mathbf{r}$ on the resolution. 
These bounds} are characterized as follows. 

\KH{
\begin{thm}
\label{thm:num_states_2d}
\kh{There exists a self-stabilizing protocol for which the resolution of each station $\mathbf{r}$ is lower bounded by}{\K{A}{The} lower bound\K{}{ $\underline{l}_\mathbf{r}$} on the resolution used by station $\mathbf{r}$ is}\K{
\begin{equation}
\label{eqn:resolution_lb_2d}
\underline{l}_\mathbf{r}=\biggl\lceil\log_2\biggl(\max_{\mathbf{r}^\prime\in V_\mathbf{r}\cup\lbrace\mathbf{r}\rbrace}(1+\lvert V_{\mathbf{r}^\prime}\rvert)\biggr)\biggr\rceil
\end{equation}
}{ computed as follows:
\begin{enumerate}
\item
$\underline{w}_\mathbf{r}=1+\lvert V_\mathbf{r}\rvert$,
\item
$\underline{l}_\mathbf{r}=\lceil\log_2(\max_{\mathbf{r}^\prime\in V_\mathbf{r}\cup\lbrace\mathbf{r}\rbrace}\underline{w}_{\mathbf{r}^\prime})\rceil$.
\end{enumerate}}\kh{and upper bounded by}{\K{A}{The} \KHH{corresponding}{} upper bound\K{}{ $\overline{l}_\mathbf{r}$}\KHH{}{ on the resolution used by station $\mathbf{r}$} is}\K{
\begin{equation}
\label{eqn:resolution_ub_2d}
\overline{l}_\mathbf{r}=\biggl\lceil\log_2\biggl(\max_{\mathbf{r}^\prime\in V_\mathbf{r}^2\cup\lbrace\mathbf{r}\rbrace}(1+\lvert V_{\mathbf{r}^\prime}^2\rvert)\biggr)\biggr\rceil,
\end{equation}
}{ computed as follows:
\begin{enumerate}
\item
$\overline{w}_\mathbf{r}=1+\lvert V_\mathbf{r}^2\rvert$,
\item
$\overline{l}_\mathbf{r}=\lceil\log_2(\max_{\mathbf{r}^\prime\in V_\mathbf{r}^2\cup\lbrace\mathbf{r}\rbrace}\overline{w}_{\mathbf{r}^\prime})\rceil$.
\end{enumerate}}where $V_\mathbf{r}^2$ is the set of one-hop or two-hop peers of $\mathbf{r}$. 
The resulting one-hop broadcast throughput of a two-dimensional network is bounded as follows:
\KHH{\begin{equation}
\label{eqn:throughput_2d}
\frac{1}{\lvert V\vert}\sum_{\mathbf{r}\in V}\lvert V_\mathbf{r}\rvert2^{-\overline{l}_\mathbf{r}}\le\rho_\text{BC}\le\frac{1}{\lvert V\vert}\sum_{\mathbf{r}\in V}\lvert V_\mathbf{r}\rvert2^{-\underline{l}_\mathbf{r}}.
\end{equation}}{\begin{equation}
\label{eqn:throughput_2d}
\bigl\langle\lvert N_\mathbf{r}\rvert2^{-\overline{l}_\mathbf{r}}\bigr\rangle_\mathbf{r}\le\rho_\text{BC}\le\bigl\langle\lvert N_\mathbf{r}\rvert2^{-\underline{l}_\mathbf{r}}\bigr\rangle_\mathbf{r}.
\end{equation}}
\end{thm}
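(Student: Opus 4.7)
The plan is to establish the lower and upper bounds on $l_\mathbf{r}$ by two counting arguments, one on one-hop neighborhoods and one on two-hop neighborhoods, and then derive the throughput bound by direct substitution into (\ref{eqn:rho}).

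For the lower bound $\underline{l}_\mathbf{r}$, I would fix an arbitrary $\mathbf{r}'\in V_\mathbf{r}\cup\lbrace\mathbf{r}\rbrace$ and observe that station $\mathbf{r}$ lies inside the one-hop neighborhood $V_{\mathbf{r}'}\cup\lbrace\mathbf{r}'\rbrace$. In any collision-free configuration the $1+|V_{\mathbf{r}'}|$ members of that neighborhood must occupy pairwise non-overlapping slots, since any two of them are at most two hops apart through $\mathbf{r}'$. A cycle split into fewer than $1+|V_{\mathbf{r}'}|$ atomic slots cannot accommodate this, so the resolution available to $\mathbf{r}$ must satisfy $2^{l_\mathbf{r}}\ge1+|V_{\mathbf{r}'}|$. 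Maximizing over $\mathbf{r}'$ and rounding up to the next power of two produces $\underline{l}_\mathbf{r}$.

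For the upper bound $\overline{l}_\mathbf{r}$, I would strengthen the collision model so that any two one-hop or two-hop peers are declared to conflict, turning the slot-assignment problem into one on the square graph $G^2$. Applying the same counting argument in the two-hop neighborhood $V_{\mathbf{r}'}^2\cup\lbrace\mathbf{r}'\rbrace$ of any $\mathbf{r}'\in V_\mathbf{r}^2\cup\lbrace\mathbf{r}\rbrace$ yields $2^{l_\mathbf{r}}\ge1+|V_{\mathbf{r}'}^2|$ in this worst case, hence $\overline{l}_\mathbf{r}$ after the max and ceiling. That $\overline{l}_\mathbf{r}$ is actually feasible is confirmed by a greedy slot assignment in any fixed station ordering: since $\mathbf{r}$ has $2^{\overline{l}_\mathbf{r}}\ge1+|V_\mathbf{r}^2|$ candidate slots and at most $|V_\mathbf{r}^2|$ of them are blocked by previously processed peers in $G^2$, an unused slot always remains. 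The throughput bound (\ref{eqn:throughput_2d}) then drops out by substituting $\underline{l}_\mathbf{r}\le l_\mathbf{r}\le\overline{l}_\mathbf{r}$ into (\ref{eqn:rho}) and invoking monotonicity of $2^{-l}$.

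The most delicate step is the lower bound, because the protocol permits heterogeneous resolutions within a common neighborhood: a coarse-resolution station occupies a wide dyadic interval, so the finer-resolution neighbors could a priori pack into its complement without forcing $l_\mathbf{r}$ itself to grow. I would close this gap by appealing to the prefix collision rule (\ref{eqn:collision}): the transmission slots used in $V_{\mathbf{r}'}\cup\lbrace\mathbf{r}'\rbrace$ form an antichain of $1+|V_{\mathbf{r}'}|$ nodes in the binary prefix tree, so Kraft's inequality forces the finest depth present in the neighborhood to be at least $\lceil\log_2(1+|V_{\mathbf{r}'}|)\rceil$; because every station adopts the finest resolution required by any neighborhood it sits in, this depth coincides with $l_\mathbf{r}$. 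The upper-bound construction and the throughput substitution are mechanical once this point is settled.
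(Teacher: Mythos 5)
Your core argument is the same as the paper's: for the lower bound, $\mathbf{r}$ lies in the one-hop neighborhood of every $\mathbf{r}^\prime\in V_\mathbf{r}\cup\lbrace\mathbf{r}\rbrace$, whose $1+\lvert V_{\mathbf{r}^\prime}\rvert$ members are pairwise one- or two-hop peers and so need disjoint slots; for the upper bound, in the worst case each two-hop neighborhood admits only one transmitter at a time; the throughput bound then follows by substitution into (\ref{eqn:rho}). You add two things the paper does not. First, the Kraft/antichain observation is a genuine sharpening: the paper simply asserts that ``$1+\lvert V_{\mathbf{r}^\prime}\rvert$ states are required'' without confronting heterogeneous resolutions, whereas you correctly note that the prefix-free condition only forces the \emph{finest} station in that neighborhood to depth $\lceil\log_2(1+\lvert V_{\mathbf{r}^\prime}\rvert)\rceil$, and that pinning this depth on $\mathbf{r}$ itself needs the convention that each station adopts the finest resolution demanded by any neighborhood containing it (indeed, a star center could legally use a coarse slot while its leaves pack into the complement, so the bound is about the resolution rule rather than about every collision-free configuration). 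Second, you supply a greedy feasibility argument showing $\overline{l}_\mathbf{r}$ actually suffices; the paper only asserts this here and defers constructive arguments to Theorem~\ref{thm:num_states_2d_chordal}, yet Theorem~\ref{thm:mr_2d} relies on the feasibility claim, so this is a worthwhile addition. One step of that argument needs repair: ``at most $\lvert V_\mathbf{r}^2\rvert$ of the $2^{\overline{l}_\mathbf{r}}$ candidate slots are blocked'' is not literally true, because a previously processed peer $\mathbf{r}^\prime$ with $\overline{l}_{\mathbf{r}^\prime}<\overline{l}_\mathbf{r}$ blocks $2^{\overline{l}_\mathbf{r}-\overline{l}_{\mathbf{r}^\prime}}>1$ of $\mathbf{r}$'s slots. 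The count must be weighted: every $\mathbf{r}^\prime\in V_\mathbf{r}^2$ satisfies $\mathbf{r}\in V_{\mathbf{r}^\prime}^2$, hence $\overline{l}_{\mathbf{r}^\prime}\ge\lceil\log_2(1+\lvert V_\mathbf{r}^2\rvert)\rceil$, so each occupied dyadic interval has length at most $2^{-\lceil\log_2(1+\lvert V_\mathbf{r}^2\rvert)\rceil}$ and sits inside a single slot at that coarse resolution; with at most $\lvert V_\mathbf{r}^2\rvert<2^{\lceil\log_2(1+\lvert V_\mathbf{r}^2\rvert)\rceil}$ such slots touched, a fully idle coarse slot remains, which contains an aligned idle slot at resolution $\overline{l}_\mathbf{r}$. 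With that fix the proposal is sound.
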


\begin{IEEEproof}
For a station to receive a packet from each one-hop peer, the station itself and all its one-hop peers must transmit at different times. 
If $\mathbf{r}^\prime\in V_\mathbf{r}\cup\lbrace\mathbf{r}\rbrace$, station $\mathbf{r}$ is within the one-hop neighborhood $V_{\mathbf{r}^\prime}\cup\lbrace\mathbf{r}^\prime\rbrace$, and therefore \K{$1+\lvert V_{\mathbf{r}^\prime}\rvert$}{$\underline{w}_{\mathbf{r}^\prime}$} states are required to resolve any collision in $V_{\mathbf{r}^\prime}\cup\lbrace\mathbf{r}^\prime\rbrace$. 
Then, at least \K{$\max_{\mathbf{r}^\prime\in V_\mathbf{r}\cup\lbrace\mathbf{r}\rbrace}(1+\lvert V_{\mathbf{r}^\prime}\rvert)$}{$\max_{\mathbf{r}^\prime\in V_\mathbf{r}\cup\lbrace\mathbf{r}\rbrace}\underline{w}_{\mathbf{r}^\prime}$} states are required. 
Finally, we assume $\underline{l}_\mathbf{r}$ to be an integer and \K{$2^{\underline{l}_\mathbf{r}}\ge\max_{\mathbf{r}^\prime\in V_\mathbf{r}\cup\lbrace\mathbf{r}\rbrace}(1+\lvert V_{\mathbf{r}^\prime}\rvert)$}{$2^{\underline{l}_\mathbf{r}}\ge\max_{\mathbf{r}^\prime\in V_\mathbf{r}\cup\lbrace\mathbf{r}\rbrace}\underline{w}_{\mathbf{r}^\prime}$}, therefore \K{the lower bound (\ref{eqn:resolution_lb_2d}) is established}{we get \KHH{the lower bound}{} $\underline{l}_\mathbf{r}=\lceil\log_2(\max_{\mathbf{r}^\prime\in V_\mathbf{r}\cup\lbrace\mathbf{r}\rbrace}\underline{w}_{\mathbf{r}^\prime})\rceil$}. 

Observe that a station cannot transmit when one of its one-hop or two-hop peers \KHH{transmits}{is active} in a collision-free configuration. 
In the worst case, at most one station in every two-hop neighborhood transmits at any time.  
Now, if $\mathbf{r}^\prime\in V_\mathbf{r}^2\cup\lbrace\mathbf{r}\rbrace$, station $\mathbf{r}$ is within the two-hop neighborhood $V_{\mathbf{r}^\prime}^2\cup\lbrace\mathbf{r}^\prime\rbrace$, and in the worst case \K{$1+\lvert V_{\mathbf{r}^\prime}^2\rvert$}{$\overline{w}_{\mathbf{r}^\prime}$} states are required to resolve any collision in $V_{\mathbf{r}^\prime}^2\cup\lbrace\mathbf{r}^\prime\rbrace$. 
Therefore, at most \K{$\max_{\mathbf{r}^\prime\in V_\mathbf{r}^2\cup\lbrace\mathbf{r}\rbrace}(1+\lvert V_{\mathbf{r}^\prime}^2\rvert)$}{$\max_{\mathbf{r}^\prime\in V_\mathbf{r}^2\cup\lbrace\mathbf{r}\rbrace}\overline{w}_{\mathbf{r}^\prime}$} states are required. 
Finally, we assume $\overline{l}_\mathbf{r}$ to be an integer and \K{$2^{\overline{l}_\mathbf{r}}\ge\max_{\mathbf{r}^\prime\in V_\mathbf{r}^2\cup\lbrace\mathbf{r}\rbrace}(1+\lvert V_{\mathbf{r}^\prime}^2\rvert)$}{$2^{\overline{l}_\mathbf{r}}\ge\max_{\mathbf{r}^\prime\in V_\mathbf{r}^2\cup\lbrace\mathbf{r}\rbrace}\overline{w}_{\mathbf{r}^\prime}$}, therefore \K{the upper bound (\ref{eqn:resolution_ub_2d}) is established}{we get \KHH{the upper bound}{} $\overline{l}_\mathbf{r}=\lceil\log_2(\max_{\mathbf{r}^\prime\in V_\mathbf{r}^2\cup\lbrace\mathbf{r}\rbrace}\overline{w}_{\mathbf{r}^\prime})\rceil$}. 
\end{IEEEproof}
}{
\begin{thm}
\label{thm:num_states_2d_lb}
The lower bound $\underline{l}_\mathbf{r}$ on the resolution used by station $\mathbf{r}$ is computed as follows:
\begin{enumerate}
\item
$\underline{w}_\mathbf{r}=1+\lvert N_\mathbf{r}\rvert$,
\item
 $\underline{l}_\mathbf{r}=\lceil\log_2(\max_{\mathbf{r}^\prime\in N_\mathbf{r}\cup\lbrace\mathbf{r}\rbrace}\underline{w}_{\mathbf{r}^\prime})\rceil$.
\end{enumerate}
\end{thm}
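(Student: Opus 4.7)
\begin{IEEEproof}[Proof sketch]
The plan is to establish the bound by a direct counting argument within a single one-hop neighborhood, mirroring the first half of the proof used for the earlier two-hop version of this bound.

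First, I would argue that for any station $\mathbf{r}^\prime$ to successfully receive broadcasts from every one of its one-hop peers in a collision-free configuration, all stations in $N_{\mathbf{r}^\prime}\cup\lbrace\mathbf{r}^\prime\rbrace$ must transmit at times that pairwise satisfy the negation of the prefix condition~(\ref{eqn:collision}). In particular, the slots occupied by these $\underline{w}_{\mathbf{r}^\prime}=1+\lvert N_{\mathbf{r}^\prime}\rvert$ stations, when refined to the finest resolution present in that neighborhood, must all be distinct. Hence the neighborhood requires at least $\underline{w}_{\mathbf{r}^\prime}$ distinct leaves in the common binary-tree refinement, which forces $2^{l_{\mathbf{r}^\prime}}\ge\underline{w}_{\mathbf{r}^\prime}$ for every station in the neighborhood; otherwise two of them would have to share a slot or be in a prefix relation.

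Next I would localize this to $\mathbf{r}$. For every $\mathbf{r}^\prime\in N_\mathbf{r}\cup\lbrace\mathbf{r}\rbrace$, the station $\mathbf{r}$ itself lies in $N_{\mathbf{r}^\prime}\cup\lbrace\mathbf{r}^\prime\rbrace$, so by the previous step the resolution at $\mathbf{r}$ satisfies $2^{l_\mathbf{r}}\ge\underline{w}_{\mathbf{r}^\prime}$. Taking the maximum over all such $\mathbf{r}^\prime$ gives
\begin{displaymath}
2^{l_\mathbf{r}}\;\ge\;\max_{\mathbf{r}^\prime\in N_\mathbf{r}\cup\lbrace\mathbf{r}\rbrace}\underline{w}_{\mathbf{r}^\prime},
\end{displaymath}
and since $l_\mathbf{r}$ is an integer, the lower bound
\begin{displaymath}
l_\mathbf{r}\;\ge\;\underline{l}_\mathbf{r}\;=\;\biggl\lceil\log_2\Bigl(\max_{\mathbf{r}^\prime\in N_\mathbf{r}\cup\lbrace\mathbf{r}\rbrace}\underline{w}_{\mathbf{r}^\prime}\Bigr)\biggr\rceil
\end{displaymath}
follows immediately.

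The main obstacle I anticipate is the careful handling of the prefix-based collision model when different stations use different resolutions: I would need to justify that requiring distinct slots at the coarsest common refinement is actually equivalent to ruling out the prefix condition~(\ref{eqn:collision}) pairwise, so that the counting argument on $\underline{w}_{\mathbf{r}^\prime}$ distinct leaves is tight. Once that equivalence is noted (and it is built directly into the collision definition), the remainder is a one-line counting argument followed by a ceiling.
\end{IEEEproof}
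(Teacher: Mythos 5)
Your proof follows essentially the same route as the paper's: for each one-hop neighborhood $V_{\mathbf{r}^\prime}\cup\lbrace\mathbf{r}^\prime\rbrace$ containing $\mathbf{r}$, all $1+\lvert V_{\mathbf{r}^\prime}\rvert$ members must occupy mutually non-overlapping slots, this requirement is charged to $\mathbf{r}$'s own resolution, and the maximum over $\mathbf{r}^\prime\in V_\mathbf{r}\cup\lbrace\mathbf{r}\rbrace$ followed by the ceiling yields $\underline{l}_\mathbf{r}$. Your added discussion of the prefix condition and the common refinement merely makes explicit what the paper leaves implicit, so the two arguments coincide.
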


\begin{thm}
\label{thm:num_states_2d_ub}
The upper bound $\overline{l}_\mathbf{r}$ on the resolution used by station $\mathbf{r}$ is computed as follows:
\begin{enumerate}
\item
$\overline{w}_\mathbf{r}=1+\lvert N_\mathbf{r}^2\rvert$,
\item
$\overline{l}_\mathbf{r}=\lceil\log_2(\max_{\mathbf{r}^\prime\in N_\mathbf{r}^2\cup\lbrace\mathbf{r}\rbrace}\overline{w}_{\mathbf{r}^\prime})\rceil$.
\end{enumerate}
Here, $N_\mathbf{r}^2$ is the set of one-hop or two-hop peers of $\mathbf{r}$. 
\end{thm}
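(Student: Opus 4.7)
The plan is to establish $\overline{l}_\mathbf{r}$ as an upper bound by showing that $2^{\overline{l}_\mathbf{r}}$ slots always suffice for station $\mathbf{r}$ to participate in some collision-free configuration. The strategy mirrors the lower-bound argument of Theorem~\ref{thm:num_states_2d_lb}, but applied to two-hop closed neighborhoods rather than one-hop, because any two stations that are one- or two-hop peers are forbidden from transmitting simultaneously under the prefix condition~(\ref{eqn:collision}).

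First I would fix an arbitrary $\mathbf{r}^\prime\in N_\mathbf{r}^2\cup\lbrace\mathbf{r}\rbrace$ and examine the closed two-hop neighborhood $N_{\mathbf{r}^\prime}^2\cup\lbrace\mathbf{r}^\prime\rbrace$, which contains $\overline{w}_{\mathbf{r}^\prime}=1+\lvert N_{\mathbf{r}^\prime}^2\rvert$ stations. The most demanding collision-free assignment on this local neighborhood is the one that permits no simultaneity at all among its members; such a schedule always exists (give each station its own slot) and uses exactly $\overline{w}_{\mathbf{r}^\prime}$ distinct time slots, hence $\overline{w}_{\mathbf{r}^\prime}$ slots are sufficient. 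Because $\mathbf{r}^\prime\in N_\mathbf{r}^2\cup\lbrace\mathbf{r}\rbrace$ implies $\mathbf{r}\in N_{\mathbf{r}^\prime}^2\cup\lbrace\mathbf{r}^\prime\rbrace$ by symmetry of the peer relation, $\mathbf{r}$ participates in this local schedule and its resolution must be fine enough to address at least $\overline{w}_{\mathbf{r}^\prime}$ slots. Taking the worst case over all such $\mathbf{r}^\prime$, the sufficient slot count for $\mathbf{r}$ is $\max_{\mathbf{r}^\prime\in N_\mathbf{r}^2\cup\lbrace\mathbf{r}\rbrace}\overline{w}_{\mathbf{r}^\prime}$, and rounding up to the nearest power of two (since every resolution in the model is a power of two) yields $\overline{l}_\mathbf{r}=\bigl\lceil\log_2\bigl(\max_{\mathbf{r}^\prime\in N_\mathbf{r}^2\cup\lbrace\mathbf{r}\rbrace}\overline{w}_{\mathbf{r}^\prime}\bigr)\bigr\rceil$ as claimed.

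The main obstacle will be verifying that the local worst-case schedules produced above can be patched into a single globally consistent collision-free configuration, since overlapping two-hop neighborhoods impose joint constraints. I would resolve this by a greedy coloring on $G^2=(V,A^2)$: order the stations arbitrarily, and let each in turn pick any state at its own resolution that does not satisfy the prefix condition~(\ref{eqn:collision}) with any previously-placed $G^2$-neighbor. When $\mathbf{r}$ is processed it confronts at most $\lvert N_\mathbf{r}^2\rvert$ occupied busy intervals, whereas its resolution $\overline{l}_\mathbf{r}$ exposes at least $\overline{w}_\mathbf{r}=1+\lvert N_\mathbf{r}^2\rvert$ distinct slots by construction, so a conflict-free choice is always available. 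The resulting global assignment witnesses that no station ever needs a resolution finer than $\overline{l}_\mathbf{r}$, completing the upper bound.
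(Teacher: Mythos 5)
Your first two paragraphs reproduce the paper's own argument for this bound almost exactly: within each closed two-hop neighborhood $N_{\mathbf{r}^\prime}^2\cup\lbrace\mathbf{r}^\prime\rbrace$ containing $\mathbf{r}$, the worst case is that no two members may transmit simultaneously, so $\overline{w}_{\mathbf{r}^\prime}$ slots suffice locally, and taking the maximum over $\mathbf{r}^\prime$ and rounding up to a power of two gives $\overline{l}_\mathbf{r}$. Where you genuinely diverge is the third paragraph: the paper stops at this local counting and does not patch the local schedules into a global collision-free configuration for this theorem (it only carries out such a construction under the chordal-square hypothesis of Theorem~\ref{thm:num_states_2d_chordal}, via a perfect elimination ordering). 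Your greedy coloring of $G^2$ under an \emph{arbitrary} ordering is therefore a strengthening: it exhibits a global collision-free configuration at the upper-bound resolutions with no chordality assumption, because each station's palette exceeds its $G^2$-degree. One step needs care, however. A busy interval left by an already-placed neighbor $\mathbf{r}_j$ blocks $2^{\overline{l}_\mathbf{r}-\overline{l}_{\mathbf{r}_j}}$ of $\mathbf{r}$'s own slots when $\overline{l}_{\mathbf{r}_j}<\overline{l}_\mathbf{r}$, which can happen (a distant two-hop peer of $\mathbf{r}$ may inflate $\overline{l}_\mathbf{r}$ without entering $N_{\mathbf{r}_j}^2$), so ``at most $\lvert N_\mathbf{r}^2\rvert$ busy intervals versus at least $1+\lvert N_\mathbf{r}^2\rvert$ slots'' is not a one-for-one pigeonhole at resolution $\overline{l}_\mathbf{r}$. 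The repair is the same device the paper uses in the proofs of Theorems~\ref{thm:mr_1d} and~\ref{thm:num_states_2d_chordal}: run the count at the reference resolution $L_\mathbf{r}=\lceil\log_2(1+\lvert N_\mathbf{r}^2\rvert)\rceil$. Every $\mathbf{r}_j\in N_\mathbf{r}^2$ satisfies $\mathbf{r}\in N_{\mathbf{r}_j}^2$ and hence $\overline{l}_{\mathbf{r}_j}\ge L_\mathbf{r}$, so each busy interval lies entirely inside a single level-$L_\mathbf{r}$ slot; at most $\lvert N_\mathbf{r}^2\rvert$ of the $2^{L_\mathbf{r}}\ge1+\lvert N_\mathbf{r}^2\rvert$ such slots are touched, leaving one fully idle, which $\mathbf{r}$ refines to a slot of its own resolution. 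With that adjustment your construction is complete and correct.
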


\begin{proof}
Observe that a station cannot transmit when one of its one-hop or two-hop peers is active in a collision-free configuration. 
In the worst case, at most one station in every two-hop neighborhood transmits at any time.  
Now, if $\mathbf{r}^\prime\in N_\mathbf{r}^2\cup\lbrace\mathbf{r}\rbrace$, station $\mathbf{r}$ is within the two-hop neighborhood $N_{\mathbf{r}^\prime}^2\cup\lbrace\mathbf{r}^\prime\rbrace$, and in the worst case $\overline{w}_{\mathbf{r}^\prime}$ states are required to resolve any collision in $N_{\mathbf{r}^\prime}^2\cup\lbrace\mathbf{r}^\prime\rbrace$. 
Therefore, at most $\max_{\mathbf{r}^\prime\in N_\mathbf{r}^2\cup\lbrace\mathbf{r}\rbrace}\overline{w}_{\mathbf{r}^\prime}$ states are required. 
Finally, we assume $\overline{l}_\mathbf{r}$ to be an integer and $2^{\overline{l}_\mathbf{r}}\ge\max_{\mathbf{r}^\prime\in N_\mathbf{r}^2\cup\lbrace\mathbf{r}\rbrace}\overline{w}_{\mathbf{r}^\prime}$, therefore we get $\overline{l}_\mathbf{r}=\lceil\log_2(\max_{\mathbf{r}^\prime\in N_\mathbf{r}^2\cup\lbrace\mathbf{r}\rbrace}\overline{w}_{\mathbf{r}^\prime})\rceil$. 
\end{proof}
}
}

\subsection{\kh{Multi-resolution MAC Protocol for Two-dimensional Networks}{A Modified Multi-Resolution MAC Protocol}}
\label{sec:multi_resolution_2d}

\kh{Protocol~\ref{alg:mr_bc} with $\epsilon=0$ does not always}{The multi-resolution protocol\KH{}{ we propose} for one-dimensional networks may not} work for all two-dimensional networks. 
\khh{In particular, the resulting Markov chain can have an absorbing class with more than one configuration, none of which is collision-free.}{}
An example is illustrated in Fig.~\ref{fig:counterexamples2}.\KH{}{This is a $4\times4$ square lattice, where multiple stations are collocated on some lattice points. }
\KH{Every station uses a resolution of eight states.}{}
\KH{T}{Every station uses eight states, and t}he right part of Fig.~\ref{fig:counterexamples2} shows that a collision-free configuration exists. 
But, if the initial configuration is the one shown in the left part of Fig.~\ref{fig:counterexamples2}, and the protocol\KH{}{ proposed} in Section~\ref{sec:1d} is used\KH{, then the following occurs}{ here, then}: 
\begin{enumerate}
\item
All stations in initial states \KH{$1,2,3,4,5$}{$001,010,011,100,101$} remain in their current states with probability one, since they do not cause any collision. 
All stations in initial state \KH{$0$}{$000$} can only choose \KH{$0,6,7$}{$000,110,111$} as their next states, because all other states are not available. 
This repeats for all subsequent iterations. 
\item
Consider the four stations in the middle of the network, which\KH{}{ all} have initial states \KH{$0$}{$000$}. 
\KH{They}{These stations} are within two hops of each other, so they must use different states. 
However, only states \KH{$0,6,7$}{$000,110,111$} are available to them in any cycle. 
\KH{Hence}{Therefore}, collision-free configurations cannot be reached. 
\end{enumerate}

\kh{Choosing $\epsilon>0$ in Protocol~\ref{alg:mr_bc}}{A simple modification} prevents the preceding deadlock. 
For any station, if the votes received do not all point to a single state, then the station increases the total weight of the votes received \KH{for}{by} any state by a \KH{nonzero}{small} constant. 
\textit{A station in this situation will have nonozero probability of choosing any state to be its next state}. 
\kh{}{\KHH{The modified multi-resolution protocol is shown as Protocol~\ref{alg:mr_bc}\K{, with $\epsilon>0$}{}.}{}}This randomization does not affect any absorption configuration, and is necessary to establish the counterpart of Theorem~\ref{thm:mr_1d} for two-dimensional networks. 

\K{}{
\KHH{
\begin{algorithm}[t]
\caption{Multi-Resolution MAC Protocol for Broadcast on Two-Dimensional Networks}
\label{alg:mr_bc_2d}
\begin{algorithmic}[1]
\STATE
$\mathbf{r}$ sets the votes on all states to zero. 
\FOR{$\mathbf{r}^\prime\in V_\mathbf{r}\cup\lbrace\mathbf{r}\rbrace$}
\IF{$\mathbf{r}$ is the only station occupying its current state in station $\mathbf{r}^\prime$'s one-hop neighborhood} 
\STATE
\KH{$\mathbf{r}$'s current state is assigned a single vote of weight one}{A single vote of weight one on $\mathbf{r}$'s current state is given by $\mathbf{r}^\prime$}. 
\ELSE
\STATE
\KH{$\mathbf{r}$ determines which states (according to $\mathbf{r}$'s resolution) are idle or have collisions in $\mathbf{r}^\prime$'s one-hop neighborhood}{$\mathbf{r}$ determines the states (according to $\mathbf{r}$'s resolution) that station $\mathbf{r}^\prime$ is idle or collides}. 
\STATE
A vote of weight $\frac{1}{n}$ is given to each such state\KH{}{ by station $\mathbf{r}^\prime$}, where $n$ is the number of such states. 
\ENDIF
\ENDFOR
\IF{$n_s>0$ for multiple $s$'s, where $n_s$ is the total weight state $s$ receives}
\STATE
Replace $n_s$ by $n_s+\epsilon$, where $\epsilon>0$, for all $s$. 
\ENDIF
\STATE
$\mathbf{r}$ selects state $s$ with a probability proportional to $f(n_s)$.
\end{algorithmic}
\end{algorithm}
}{}
}

\begin{thm}
\label{thm:mr_2d}
For a two-dimensional network, suppose \KH{each station uses}{all stations use} a sufficiently \KH{fine resolution (\textit{e.g.}, the upper bound in Theorem~\ref{thm:num_states_2d}),}{large number of states} so that the existence of collision-free configurations is guaranteed. 
Then, starting from an arbitrary initial configuration, \KHH{Protocol~\ref{alg:mr_bc} \kh{with $\epsilon>0$}{}}{the modified multi-resolution protocol} will \khh{converge to}{, after a sufficiently long time, \KH{result in}{lead to}} a collision-free configuration. 
\end{thm}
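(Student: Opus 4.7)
The plan is to follow the two-step structure used for Theorem~\ref{thm:mr_1d}: first show that every collision-free configuration is absorbing under the Markov chain, and then show that from any non-collision-free configuration a collision-free one can be reached with positive probability in finitely many cycles. Together with finiteness of the state space these give the claimed almost-sure convergence. Absorption is essentially immediate: at a collision-free configuration each station is alone in its current state in every neighborhood it belongs to, all of its votes are concentrated on that state, the ``multiple $s$'s'' branch of Protocol~\ref{alg:mr_bc} never fires, and every station deterministically stays put.

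The new ingredient, compared to the one-dimensional argument, is the $\epsilon>0$ modification. The key one-cycle fact I would prove first is that any station $\mathbf{r}$ that is currently conflicted has strictly positive probability of making any prescribed single-cycle transition: since $\mathbf{r}$ collides with some one-hop or two-hop peer, at least one iteration of the FOR loop lands in the ELSE branch and splits its vote across multiple states, the ``multiple $s$'s'' condition is triggered, the $\epsilon$-perturbation lifts the weight of every state above zero, and because $f$ is strictly increasing with $f(0)=0$ every state is then chosen with positive probability. Stations not currently colliding still stay put with probability one. Since the updates in a given cycle are conditionally independent across stations given $\pmb{X}(t)$, any prescribed subset of the currently conflicted stations can simultaneously be teleported to any prescribed targets while every other station stays put, and the joint event has positive probability.

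I would then fix any collision-free configuration $\pmb{X}^*$ guaranteed by hypothesis and induct on $|D(t)|$, where $D(t)=\{\mathbf{r}:X_\mathbf{r}(t)\ne X_\mathbf{r}^*\}$. If some $\mathbf{r}\in D(t)$ is conflicted at $\pmb{X}(t)$, I realize the positive-probability joint event that $\mathbf{r}$ moves to $X_\mathbf{r}^*$ while every other station stays put, so that $|D(t+1)|=|D(t)|-1$. Otherwise no station in $D(t)$ is conflicted; combined with the fact that stations in $V\setminus D(t)$ already carry their $\pmb{X}^*$-values and $\pmb{X}^*$ is collision-free, this forces $\pmb{X}(t)$ itself to be collision-free and we are done. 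Iterating at most $|V|$ cycles therefore yields a collision-free configuration with positive probability.

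The main obstacle I expect is the key one-cycle fact in the second paragraph, specifically the claim that every conflicted station necessarily triggers the ``multiple $s$'s'' branch. This needs a careful look at the ELSE branch: in $\mathbf{r}$'s own neighborhood (or in the neighborhood of the middle peer for a two-hop collision) one must rule out the degenerate case in which the current state is the only idle-or-collision slot, using that $l_\mathbf{r}$ is chosen large enough for collision-free configurations to exist. Once this is settled the rest of the argument is a straightforward induction, the only further subtlety being the second branch of the induction, which relies on the global observation that $V\setminus D(t)$ inherits its states from $\pmb{X}^*$ and hence contains no internal collisions. Without $\epsilon>0$ the argument collapses, as illustrated by the deadlock in Fig.~\ref{fig:counterexamples2}, where conflicted stations' votes concentrate on a forbidden subset of states.
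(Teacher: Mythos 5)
Your proposal is correct in outline but takes a genuinely different route from the paper. The paper's proof also begins by noting that collision-free configurations are absorbing, but its transience argument is a two-phase construction: it roots a spanning tree at a colliding station, has each station sequentially choose states that collide with each of its children (thereby ``infecting'' them with the freedom to move), drives the whole network into the all-zero configuration in which every station collides with all of its one- and two-hop peers, and only then lets all stations jointly jump to a collision-free configuration in a single cycle. You instead fix a target collision-free configuration $\pmb{X}^*$ and steer toward it directly, moving one conflicted station of the discrepancy set $D(t)$ per cycle while everyone else stays put, with the clean termination observation that if $D(t)$ contains no conflicted station then the current configuration is already collision-free (every collision has both endpoints conflicted, and two stations outside $D(t)$ carry non-colliding $\pmb{X}^*$-values). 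This is shorter, reaches a collision-free configuration in at most $\lvert V\rvert$ cycles rather than the paper's longer spanning-tree walk, and never needs to manufacture new collisions; the paper's construction buys the ability to move \emph{non}-conflicted stations (by first colliding with them), which your argument simply avoids needing. Both proofs rest on the same key lemma, which you correctly isolate as the main obstacle and the paper uses without comment: a conflicted station's votes must land on at least two states so that the $\epsilon$-branch fires. Be aware that ``resolutions fine enough that a collision-free configuration exists'' is not by itself sufficient to rule out the degenerate single-vote case (one can build small examples with heterogeneous resolutions, one below the lower bound of Theorem~\ref{thm:num_states_2d}, where a conflicted station's current slot is the only idle-or-colliding slot and it is frozen); what you actually need is that every station's resolution meets the per-neighborhood lower bound (\ref{eqn:resolution_lb_2d}), so that the counting argument from the proof of Theorem~\ref{thm:mr_1d} yields an idle base-resolution slot in any neighborhood containing a collision. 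This is supplied by the ``\textit{e.g.}, the upper bound in Theorem~\ref{thm:num_states_2d}'' reading of the hypothesis and is a caveat the paper's own proof shares, so it is a point to make explicit rather than a defect of your approach.
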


\begin{IEEEproof}
As in the proof of Theorem~\ref{thm:mr_1d}, every collision-free configuration is absorbing. 
So we only need to consider the case when the initial configuration is not collision-free. 

The remaining proof is similar to the \khh{proof of Theorem 1}{one} in \cite{DLPC:1}\KHH{.}{, \textit{i.e.},} 
\KHH{W}{w}e will  show that an all-zero configuration\KHH{, \textit{i.e.}, a configuration where every station's state is a binary string of all zeros,}{} is reached with nonzero probability, and then in the next cycle there is a nonzero probability of reaching a collision-free configuration. 
Without loss of generality, assume station $\mathbf{r}_0$ collides with some station. 
Consider a spanning tree rooted at $\mathbf{r}_0$, and assume the stations are indexed following the breadth-first search order. 
\KHH{Using Protocol~\ref{alg:mr_bc} \kh{with $\epsilon>0$}{}, t}{T}he following happens with nonzero probability: 
\begin{itemize}
\item
Station $\mathbf{r}_0$ chooses a state \khh{that}{to} collide\khh{s}{} with its child\khh{}{ren} with the smallest index in the spanning tree, and repeats this for all children in the spanning tree following the breadth-first search ordering in subsequent cycles. 
After colliding with all children, it then chooses the all-zero state and remains in that state. 
\item
For station $\mathbf{r}_i$: 
\begin{enumerate}
\item
If it does not collide with its parent in the spanning tree, then it remains in its current state \KHH{until it collides with its parent}{}. 
\item
\KHH{When it collides with its parent, it follows what station $\mathbf{r}_0$ does, \textit{i.e.}}{Otherwise, like $\mathbf{r}_0$}, it chooses a state \khh{that}{to} collide\khh{s}{} with its child\khh{}{ren} with the smallest index in the spanning tree, and repeats this for all children in the spanning tree following the breadth-first search ordering in subsequent cycles. 
After colliding with all children, it then chooses the all-zero state and remains in that state. 
\end{enumerate}
\end{itemize}
Finally, all stations are in the all-zero state, \textit{i.e.}, every station \KHH{collides}{is colliding} with all one-hop and two-hop peers. 
Therefore, in the next cycle, there is a nonzero probability that \khh{the stations choose any configuration including one that is collision-free}{every station chooses a state such that the configuration is collision-free}, where the existence of collision-free configurations is guaranteed. 
Hence, all configurations with collision\khh{s}{} are transient. 
\end{IEEEproof}

\subsection{Simulations: Dynamically Adjusting the Number of States}
\label{subsec:annealing_2d}

For a general two-dimensional network, it is difficult for stations to predict the \KH{resolutions}{number of states} they need. 
It may still be difficult even for networks with chordal squares, since it is not known whether it is possible to find the maximum clique in the square of a\KH{}{ unit disk} graph efficiently. 
Therefore, we propose the following \KH{dynamic algorithm}{dynamics to solve the problem}. 
Initially, every station \KH{sets its resolution to be the lower bound given by Theorem~\ref{thm:num_states_2d}}{follows the rule proposed for one-dimensional networks, \textit{i.e.}, station $\mathbf{r}$ uses resolution $l_\mathbf{r}=\lceil\log_2(\max_{\mathbf{r}^\prime\in N_\mathbf{r}\cup\lbrace\mathbf{r}\rbrace}w_{\mathbf{r}^\prime})\rceil$, where $w_\mathbf{r}=1+\lvert N_\mathbf{r}\rvert$,} and executes the modified multi-resolution protocol with annealing. 
\KH{If a station knows that the local configuration within its two-hop neighborhood remains the same for a number of iterations ($10$ in our simulations), but it still experiences collisions, then it checks if there are any idle states within its two-hop neighborhood. 
If such states exist, it selects one of these states; otherwise, it doubles the size of its state space (\textit{i.e.}, it `refines' its resolution), picks its state randomly, resets the strength of interaction it uses and continues executing the protocol.
The refinement stops once the local configuration is collision-free, or the upper bound given by Theorem~\ref{thm:num_states_2d} is reached, whichever occurs first.
The upper bound provides a guarantee on the minimum rate a station can have. }{}

\begin{figure}[t]
\centering
\subfigure[Convergence time]{
\includegraphics[width=3.5in]{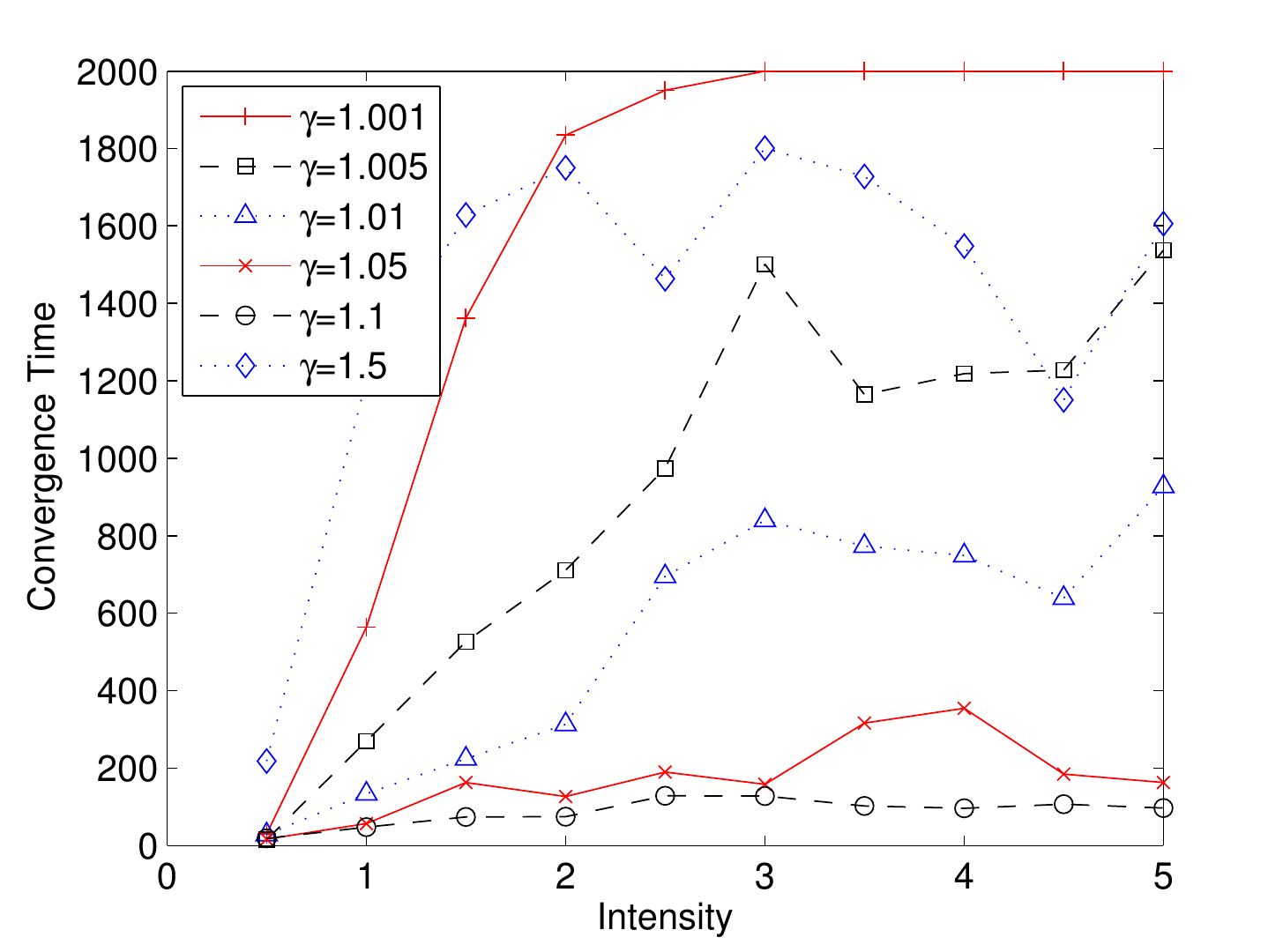}
\label{fig:time_2d}
}
\hspace{-0.4in}
\subfigure[Convergence percentage]{
\includegraphics[width=3.5in]{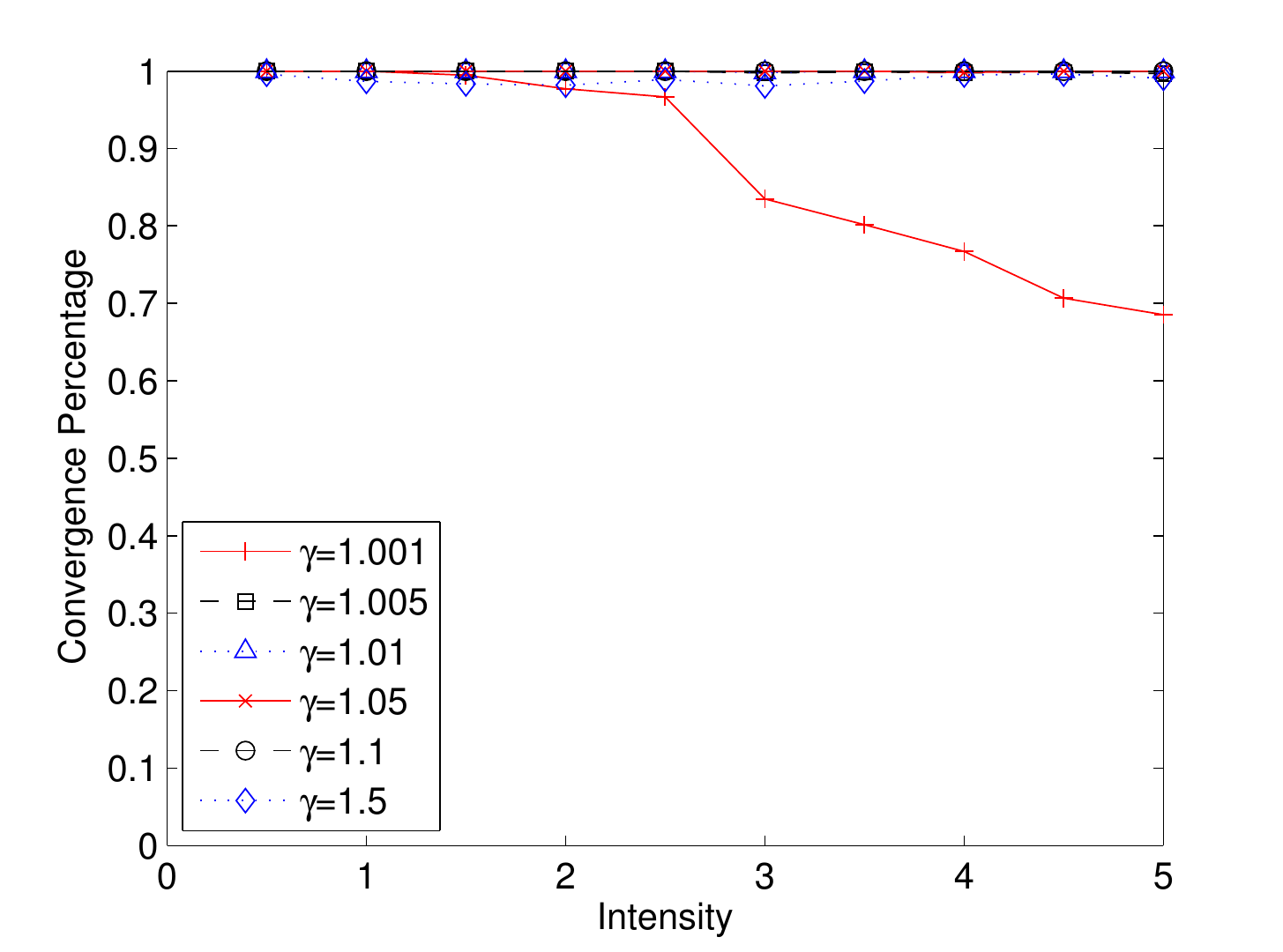}
\label{fig:percentage_2d}
}
\caption[]{Simulations of the multi-resolution protocol with annealing for two-dimensional networks.}
\label{fig:annealing_2d}
\end{figure}

For simulation, we consider \khh{a}{} $10\times10$ square \KH{area}{segments} of two-dimensional networks where stations are distributed following a Poisson point process with \K{node density}{intensity} $\lambda$\KH{}{ ranging from $0.5$ to $5$}. 
All other simulation settings are the same as those for one-dimensional networks. 
The convergence time and percentage are plotted in Figs.~\ref{fig:time_2d} and~\ref{fig:percentage_2d}\khh{,}{} respectively. 
\KH{T}{Notice t}he convergence time is longer compared to one-dimensional networks, since\KH{}{ the} stations may need to adjust \KH{their}{the} resolutions\KH{}{ they use}. 
\KH{W}{Also, w}hen $\gamma$ is too large, the convergence time increases drastically. 
\KHH{In this case}{For large $\gamma$}, the interaction between stations is so large that the protocol behaves like \textit{majority vote} shortly after the protocol is executed. 
This makes the randomization of states after each refinement not effective in resolving collision\KH{s}{}. 

\section{Extension to Multicast}
\label{sec:mc}

\begin{figure}[t]
\centering
\subfigure[]{
\includegraphics[width=3.3in]{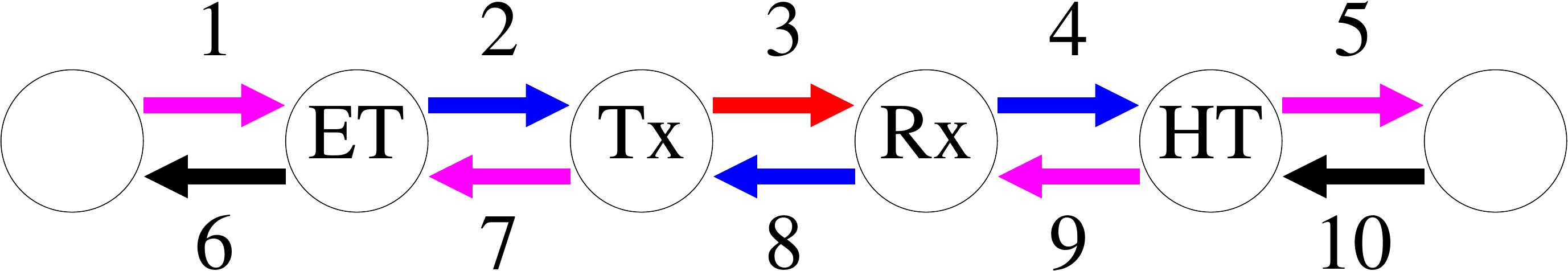}
\label{fig:multicast}
}
\subfigure[]{
\includegraphics[width=2.7in]{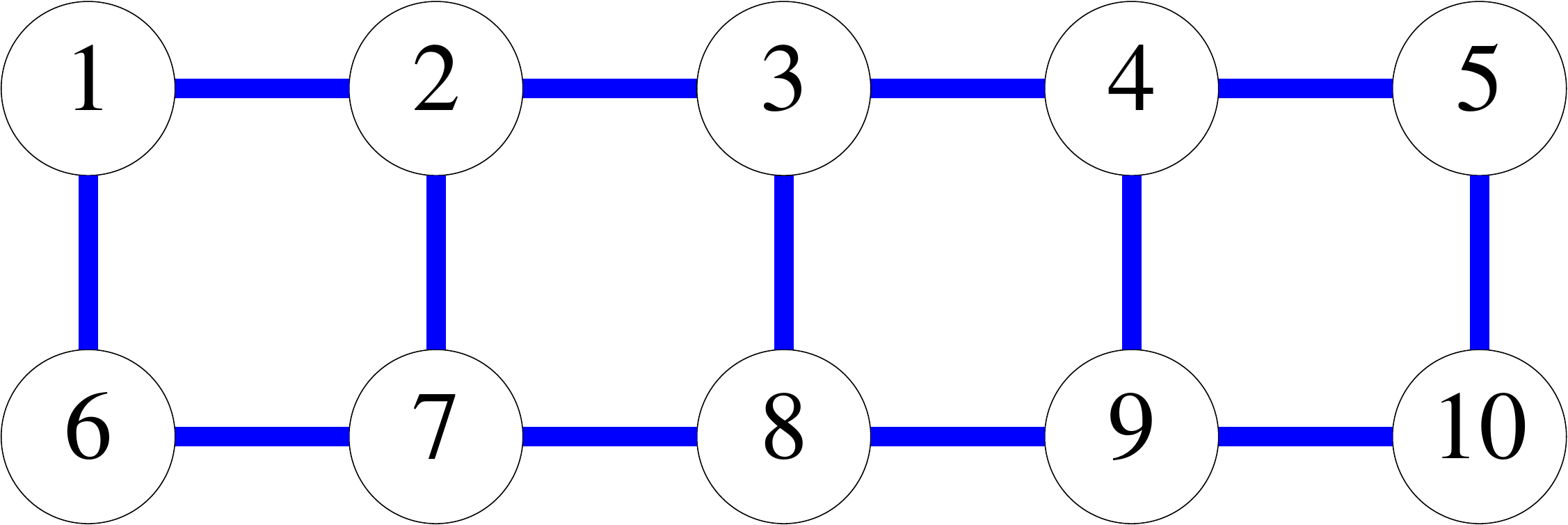}
\label{fig:auxiliary}
}
\caption[]{\KHH{\subref{fig:multicast}}{} Multicast on a graph \KHH{$G$}{} is equivalent to \KHH{\subref{fig:auxiliary}}{} broadcast on the corresponding auxiliary graph \KHH{$G^L$}{}.}
\label{fig:mc_bc}
\end{figure}

In this section we extend the results\khh{}{ on broadcast} to multicast \khh{traffic}{}. 
\KHH{Here\khh{,}{} we consider every undirected link in the network $G$ to be two \textit{directed} links in opposite directions.}{}
We define two directed links $a,a^\prime$ in $G$ to be (one-hop) \textit{link-peers} if and only if the transmitter of one link is the receiver of the other link. 
The neighboring relationship can be represented by an \textit{auxiliary graph} $G^L$ constructed as follows: a directed link $a$ in $G$ is represented by a vertex $a$ in $G^L$, and two directed links $a,a^\prime$ being one-hop link-peers in $G$ is represented by an undirected edge $(a,a^\prime)$ in $G^L$. 
An example is shown in Fig.~\ref{fig:mc_bc}. 
Suppose station Tx transmits packets to station Rx only, \textit{i.e.}, only link $3$\KHH{}{ (labeled in red)} is used by Tx, and suppose it is active. 
Then all its\KHH{}{ (blue)} one-hop link-peers \KHH{(links $2,4,8$)}{} must be silent due to the half-duplex constraint. 
All\KHH{}{ (magenta)} two-hop link-peers must be silent also, because either they are originated from \textit{hidden terminals} (links $1,5,9$), or they are links not used by Tx (link $7$). 
All other links\KHH{}{ (black)} are free to transmit, because either they are links from \textit{exposed terminals} to other stations not interfered by Tx (link $6$), or they are sufficiently far away (link $10$). 
Therefore, under this neighboring model, a link must choose a state different from any of its one-hop and two-hop link-peers in order to form a collision-free configuration. 
Correspondingly, in the auxiliary graph $G^L$, when vertex $3$ is active, its one-hop and two-hop peers must be silent at the same time. 
In general, \textit{if we associate a state variable to each link, which always takes the same value as the state variable of the transmitter of the link (meaning that all links used by the same multicast session must be in the same state), then multicast on a graph $G$ is equivalent to broadcast on the corresponding \textit{auxiliary graph} $G^L$}. 
Therefore, most results obtained for broadcast in previous sections can be applied here with slight modifications. 

As discussed in Section~\ref{sec:model}, we assume station $\mathbf{r}$ multicasts packets to a subset $D_\mathbf{r}$ of its one-hop peers. 
We represent a multicast session by the set of links used, \textit{i.e.}, $M_\mathbf{r}=\lbrace(\mathbf{r},\mathbf{r}^\prime)\colon\mathbf{r}^\prime\in D_\mathbf{r}\rbrace$. 
\KHH{Following the ideas in Section~\ref{sec:2d}, we use one-hop and two-hop \textit{link-neighborhood} size to estimate the lower and upper bounds on the resolution for multicast.}{Recall that we use the one-hop and two-hop neighborhood size to estimate the lower and upper bounds on the resolution when there is only broadcast traffic. 
By considering the analogy between multicast on $G$ and broadcast on $G^L$, it will be natural to use the one-hop and two-hop \textit{link-neighborhood} size to estimate the corresponding lower and upper bounds on the resolution for multicast. }

\KH{
\begin{thm}
\label{thm:num_states_2d_mc}
\khh{A lower bound on the needed resolution for a collision-free configuration for multicast to exist}{\kh{There exists a self-stabilizing protocol for multicast for which the lower bound on the resolution of each station $\mathbf{r}$}{For multicast, \kh{to allow stations to have a fair share of the wireless channel with their respective neighbors while avoiding collisions, a}{\K{a}{the}} lower bound\K{}{ $\underline{l}_\mathbf{r}$} on the resolution used by station $\mathbf{r}$}} is computed as follows:\K{
\begin{IEEEeqnarray}{rCl}
\underline{I}_a&=&\lbrace\mathbf{r}^\prime\colon M_{\mathbf{r}^\prime}\cap(L_a\cup\lbrace a\rbrace)\ne\emptyset\rbrace,\\
\underline{l}_a&=&\max_{a^\prime\in L_a\cup\lbrace a\rbrace}\lvert\underline{I}_{a^\prime}\rvert,\\
\label{eqn:resolution_mc_lb}
\underline{l}_\mathbf{r}&=&\biggl\lceil\log_2\biggl(\max_{a\in M_\mathbf{r}}\underline{l}_a\biggr)\biggr\rceil,
\end{IEEEeqnarray}
where\kh{}{ the set $\underline{I}_a$ contains the multicast sessions that use any link within link $a$'s one-hop link-neighborhood, and}}{
\begin{enumerate}
\item
$\underline{I}_a=\lbrace\mathbf{r}^\prime\colon M_{\mathbf{r}^\prime}\cap(L_a\cup\lbrace a\rbrace)\ne\emptyset\rbrace$,
\item
$\underline{w}_a=\lvert\underline{I}_a\rvert$,
\item
$\underline{l}_a=\max_{a^\prime\in L_a\cup\lbrace a\rbrace}\underline{w}_{a^\prime}$,
\item
$\underline{l}_\mathbf{r}=\lceil\log_2(\max_{a\in M_\mathbf{r}}\underline{l}_a)\rceil$.
\end{enumerate}
Here,} $L_a$ is the set of one-hop link-peers of link $a$\kh{, and $\underline{I}_a$ contains the multicast sessions that use any link within link $a$'s one-hop link-neighborhood\khh{. A}{; and the}}{.
\K{A}{The}} \KHH{corresponding}{} upper bound\K{}{ $\overline{l}_\mathbf{r}$}\KHH{}{ on the resolution used by station $\mathbf{r}$} is computed as follows:\K{
\begin{IEEEeqnarray}{rCl}
\overline{I}_\mathbf{r}&=&\lbrace\mathbf{r}^\prime\colon M_{\mathbf{r}^\prime}\cap\cup_{a\in M_\mathbf{r}}(L_a^2\cup\lbrace a\rbrace)\ne\emptyset\rbrace,\\
\label{eqn:resolution_mc_ub}
\overline{l}_\mathbf{r}&=&\biggl\lceil\log_2\biggl(\max_{\mathbf{r}^\prime\in\overline{I}_\mathbf{r}}\lvert\overline{I}_{\mathbf{r}^\prime}\rvert\biggr)\biggr\rceil,
\end{IEEEeqnarray}
where\kh{}{ any multicast session in the set $\overline{I}_\mathbf{r}$ either is $M_\mathbf{r}$ or cannot be active at the same time as $M_\mathbf{r}$ because it uses a link within the two-hop link-neighborhood of a link used by $M_\mathbf{r}$, and}}{
\begin{enumerate}
\item
$\overline{I}_\mathbf{r}=\lbrace\mathbf{r}^\prime\colon M_{\mathbf{r}^\prime}\cap\cup_{a\in M_\mathbf{r}}(L_a^2\cup\lbrace a\rbrace)\ne\emptyset\rbrace$,
\item
$\overline{w}_\mathbf{r}=\lvert\overline{I}_\mathbf{r}\rvert$,
\item
$\overline{l}_\mathbf{r}=\lceil\log_2(\max_{\mathbf{r}^\prime\in\overline{I}_\mathbf{r}}\overline{w}_{\mathbf{r}^\prime})\rceil$.
\end{enumerate}
Here,} $L_a^2$ is the set of one-hop or two-hop link-peers of link $a$\kh{, and $\overline{I}_\mathbf{r}$ contains $M_\mathbf{r}$ and all multicast sessions that cannot be active at the same time as $M_\mathbf{r}$ because they use links within the two-hop link-neighborhood of a link used by $M_\mathbf{r}$}{}.
The resulting one-hop multicast throughput of a two-dimensional network is bounded as follows:
\KHH{\begin{equation}
\label{eqn:throughput_mc_2d}
\frac{1}{\lvert V\rvert}\sum_{\mathbf{r}\in V}\lvert D_\mathbf{r}\rvert2^{-\overline{l}_\mathbf{r}}\le\rho_\text{MC}\le\frac{1}{\lvert V\rvert}\sum_{\mathbf{r}\in V}\lvert D_\mathbf{r}\rvert2^{-\underline{l}_\mathbf{r}}.
\end{equation}}{\begin{equation}
\label{eqn:throughput_mc_2d}
\bigl\langle\lvert D_\mathbf{r}\rvert2^{-\overline{l}_\mathbf{r}}\bigr\rangle_\mathbf{r}\le\rho_\text{MC}\le\bigl\langle\lvert D_\mathbf{r}\rvert2^{-\underline{l}_\mathbf{r}}\bigr\rangle_\mathbf{r}.
\end{equation}}
\end{thm}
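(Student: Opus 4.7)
The plan is to exploit the auxiliary-graph reduction established earlier in this section: multicast on $G$ is equivalent to broadcast on $G^L$ under the extra constraint that every link $a\in M_\mathbf{r}$ shares a single state, namely $X_\mathbf{r}(t)$. I would then re-run the two-dimensional broadcast argument (the theorem yielding $\underline{l}_\mathbf{r}$ and $\overline{l}_\mathbf{r}$ for broadcast) in $G^L$, with the twist that distinct ``colors'' must be counted per multicast session rather than per vertex of $G^L$, and I would conclude the throughput bounds by plugging back into (\ref{eqn:rho_mc}).

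For the lower bound, I would fix a link $a$ and observe that $a$ together with all its one-hop link-peers must be active in distinct slots for successful reception at the intended receivers. Every multicast session $\mathbf{r}^\prime\in\underline{I}_a$ uses at least one link in $L_a\cup\lbrace a\rbrace$ and pins that link to its own state $X_{\mathbf{r}^\prime}(t)$, so the slots occupied within $L_a\cup\lbrace a\rbrace$ comprise exactly $\underline{w}_a=\lvert\underline{I}_a\rvert$ distinct values. Repeating this argument at every $a^\prime\in L_a\cup\lbrace a\rbrace$ (noting $a\in L_{a^\prime}\cup\lbrace a^\prime\rbrace$) yields the per-link requirement $\underline{l}_a=\max_{a^\prime\in L_a\cup\lbrace a\rbrace}\underline{w}_{a^\prime}$. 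Because station $\mathbf{r}$ assigns a single state to every link $a\in M_\mathbf{r}$, its resolution must accommodate the worst link in $M_\mathbf{r}$; rounding up to the nearest power of two gives $\underline{l}_\mathbf{r}$. Existence of a collision-free configuration at this resolution will follow by the same greedy construction used for broadcast, applied in $G^L$ session-by-session.

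For the upper bound, I would mirror the broadcast upper-bound reasoning. In a collision-free schedule no two one-hop or two-hop link-peers can be active simultaneously, so in the worst case every multicast session intersecting $\cup_{a\in M_\mathbf{r}}(L_a^2\cup\lbrace a\rbrace)$ must occupy a distinct slot; this contributes $\overline{w}_\mathbf{r}=\lvert\overline{I}_\mathbf{r}\rvert$ slots competing for $\mathbf{r}$'s cycle. Propagating through every $\mathbf{r}^\prime\in\overline{I}_\mathbf{r}$ (since $M_\mathbf{r}$ itself lies in the two-hop link-neighborhood of some link of $M_{\mathbf{r}^\prime}$) and rounding up yields $\overline{l}_\mathbf{r}=\lceil\log_2(\max_{\mathbf{r}^\prime\in\overline{I}_\mathbf{r}}\overline{w}_{\mathbf{r}^\prime})\rceil$. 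Substituting $\underline{l}_\mathbf{r}$ and $\overline{l}_\mathbf{r}$ into (\ref{eqn:rho_mc}) then delivers the stated two-sided bound on $\rho_\text{MC}$.

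The main obstacle will be the bookkeeping: although each link is a vertex of $G^L$, the constraint that all links of a multicast session carry the same state collapses them into a single ``color'', so the counts $\underline{w}_a$ and $\overline{w}_\mathbf{r}$ must be taken over \emph{sessions} rather than over vertices of $G^L$. Verifying that this session-level counting is simultaneously tight for the lower bound and the correct worst case for the upper bound is the delicate part; once this is in place, the remainder transcribes directly from the two-dimensional broadcast argument via the auxiliary-graph equivalence.
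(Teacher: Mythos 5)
Your proposal follows essentially the same route as the paper's proof: the lower bound is obtained by counting the sessions $\underline{I}_{a'}$ that contend within each one-hop link-neighborhood $L_{a'}\cup\lbrace a'\rbrace$ for $a'\in L_a\cup\lbrace a\rbrace$ and then taking the finest requirement over $a\in M_\mathbf{r}$, the upper bound by the worst case in which all sessions in $\overline{I}_{\mathbf{r}'}$ must occupy distinct slots together with the symmetry $\mathbf{r}\in\overline{I}_{\mathbf{r}'}$ for $\mathbf{r}'\in\overline{I}_\mathbf{r}$, and the throughput bounds by substitution into (\ref{eqn:rho_mc}); this is exactly the session-level counting on $G^L$ that the paper performs. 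The one caveat is your side remark that a collision-free configuration exists at the lower-bound resolution via a greedy construction in $G^L$: this is false in general (the counterexample in the left part of Fig.~\ref{fig:counterexamples1} for broadcast transfers to the auxiliary graph), but since the theorem asserts $\underline{l}_\mathbf{r}$ only as a necessary condition, that remark is unneeded and does not affect the validity of the rest of your argument.
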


\begin{IEEEproof}
\K{}{The set $\underline{I}_a$ contains the multicast sessions that use any link in link $a$'s one-hop link-neighborhood. }
\K{For any link $a$}{At any time}, at most one multicast session in $\underline{I}_a$ can be active \K{at any time}{}. 
Therefore, at least \K{$\lvert\underline{I}_a\rvert$}{$\lvert\underline{I}_a\rvert=\underline{w}_a$} states are required to resolve collisions among the multicast sessions \KHH{in $\underline{I}_a$}{using links in link $a$'s one-hop link-neighborhood}. 
Link $a$ belongs to the one-hop link-neighborhood of any link $a^\prime\in L_a\cup\lbrace a\rbrace$. 
Therefore, if link $a$ is used by any multicast session, it needs at least \K{$\max_{a^\prime\in L_a\cup\lbrace a\rbrace}\lvert\underline{I}_{a^\prime}\rvert=\underline{l}_a$}{$\max_{a^\prime\in L_a\cup\lbrace a\rbrace}\underline{w}_{a^\prime}=\underline{l}_a$} states to resolve collisions. 
Finally, station $\mathbf{r}$ should use the finest resolution that its links use, therefore we have \K{the lower bound (\ref{eqn:resolution_mc_lb})}{$\underline{l}_\mathbf{r}=\lceil\log_2(\max_{a\in M_\mathbf{r}}\underline{l}_a)\rceil$ as the lower bound on the resolution}.

\K{}{Any multicast session in the set $\overline{I}_\mathbf{r}$ either is $M_\mathbf{r}$ or cannot be active at the same time as $M_\mathbf{r}$ because it uses a link in the two-hop link-neighborhood of a link used by $M_\mathbf{r}$. }
\K{For any station $\mathbf{r}$}{In the worst case}, at most one multicast session in $\overline{I}_\mathbf{r}$ can be active at any time \K{in the worst case}{}. 
Therefore, at most \K{$\lvert\overline{I}_\mathbf{r}\rvert$}{$\lvert\overline{I}_\mathbf{r}\rvert=\overline{w}_\mathbf{r}$} states are required to resolve collisions among the multicast sessions in $\overline{I}_\mathbf{r}$. 
Finally, station $\mathbf{r}$ also belongs to $\overline{I}_{\mathbf{r}^\prime}$ for $\mathbf{r}^\prime\in\overline{I}_\mathbf{r}$, implying the upper bound \K{(\ref{eqn:resolution_mc_ub})}{$\lceil\log_2(\max_{\mathbf{r}^\prime\in\overline{I}_\mathbf{r}}\overline{w}_{\mathbf{r}^\prime})\rceil=\overline{l}_\mathbf{r}$ on the resolution}.
\end{IEEEproof}
}{
\begin{thm}
\label{thm:num_states_2d_mc_lb}
For multicast, the lower bound $\underline{l}_\mathbf{r}$ on the resolution used by station $\mathbf{r}$ is computed as follows:
\begin{enumerate}
\item
$\underline{I}_a=\lbrace\mathbf{r}^\prime\colon S_{\mathbf{r}^\prime}\cap(L_a\cup\lbrace a\rbrace)\ne\emptyset\rbrace$,
\item
$\underline{w}_a=\lvert\underline{I}_a\rvert$,
\item
$\underline{l}_a=\max_{a^\prime\in L_a\cup\lbrace a\rbrace}\underline{w}_{a^\prime}$,
\item
$\underline{l}_\mathbf{r}=\lceil\log_2(\max_{a\in S_\mathbf{r}}\underline{l}_a)\rceil$.
\end{enumerate}
Here, $L_a$ is the set of one-hop link-peers of link $a$. 
\end{thm}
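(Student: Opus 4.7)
The plan is to exploit the equivalence between multicast on $G$ and broadcast on the auxiliary graph $G^L$, established earlier in this section, and to run an argument that mirrors the lower-bound reasoning of Theorem~\ref{thm:num_states_2d_lb} but at the link level rather than the station level. The key wrinkle compared to broadcast is that all links belonging to the same multicast session are constrained to share a single state (namely the state of the transmitting station), so the argument must proceed in three layers: first establish a per-link requirement, then lift it to a per-station requirement, and finally round up to a power of two.

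First I would fix a link $a$ used by some session and examine its one-hop link-neighborhood $L_a\cup\lbrace a\rbrace$. By construction, $\underline{I}_a$ enumerates the multicast sessions that occupy at least one link in this neighborhood. The half-duplex constraint together with the definition of one-hop link-peers in $G^L$ implies that any two sessions in $\underline{I}_a$ would collide at some receiver if they were simultaneously active; hence at most one session in $\underline{I}_a$ may transmit in a given slot, and at least $\underline{w}_a=\lvert\underline{I}_a\rvert$ distinct states are needed to schedule them without collision. Next I would observe that link $a$ itself lies inside the one-hop link-neighborhood of every $a^\prime\in L_a\cup\lbrace a\rbrace$, so the state assigned to $a$ must respect the demands of all those neighborhoods simultaneously, forcing at least $\underline{l}_a=\max_{a^\prime\in L_a\cup\lbrace a\rbrace}\underline{w}_{a^\prime}$ states on $a$. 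Finally, because every link in $S_\mathbf{r}$ carries the state of station $\mathbf{r}$, the resolution of $\mathbf{r}$ must dominate the requirement of its most constrained outgoing link, yielding $\max_{a\in S_\mathbf{r}}\underline{l}_a$ states and, after rounding up to a power of two, $\underline{l}_\mathbf{r}=\lceil\log_2(\max_{a\in S_\mathbf{r}}\underline{l}_a)\rceil$.

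The step I expect to be the main obstacle is the first one: making the counting in $\underline{I}_a$ genuinely tight as a lower bound. In particular, I must be careful that $\underline{I}_a$ is indexed by transmitting stations (equivalently, by sessions) rather than by individual links, so that a single session contributing several links to $L_a\cup\lbrace a\rbrace$ is counted once, not many times. The definition $\underline{I}_a=\lbrace\mathbf{r}^\prime\colon S_{\mathbf{r}^\prime}\cap(L_a\cup\lbrace a\rbrace)\ne\emptyset\rbrace$ is already set up this way, but I would explicitly verify that with this convention, any collision-free configuration must assign distinct states to the elements of $\underline{I}_a$, since sessions in the same state transmit simultaneously on all of their links and hence collide at the relevant receiver in $L_a\cup\lbrace a\rbrace$. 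Once that verification is in place, the rest of the argument is a direct transcription of the broadcast proof.
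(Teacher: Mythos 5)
Your proposal is correct and follows essentially the same route as the paper's proof: at most one session in $\underline{I}_a$ can be active per slot, link $a$ belongs to the one-hop link-neighborhood of every $a'\in L_a\cup\lbrace a\rbrace$, and a station's resolution must match that of its most constrained link. Your extra care in counting sessions (transmitters) rather than individual links is a worthwhile clarification but does not change the argument.
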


\begin{proof}
The set $\underline{I}_a$ contains the multicast sessions that use any link in link $a$'s one-hop link-neighborhood. 
At any time, at most one multicast session in $\underline{I}_a$ can be active. 
Therefore, at least $\lvert\underline{I}_a\rvert=\underline{w}_a$ states are required to resolve collision among the multicast sessions using links in link $a$'s one-hop link-neighborhood. 
Link $a$ belongs to the one-hop link-neighborhood of any link $a^\prime\in L_a\cup\lbrace a\rbrace$. 
Therefore, if link $a$ is used by any multicast session, it needs at least $\max_{a^\prime\in L_a\cup\lbrace a\rbrace}\underline{w}_{a^\prime}=\underline{l}_a$ states to resolve collision. 
Finally, station $\mathbf{r}$ should use the finest resolution that its links use, therefore we have $\underline{l}_\mathbf{r}=\lceil\log_2(\max_{a\in S_\mathbf{r}}\underline{l}_a)\rceil$ as the lower bound on the resolution.
\end{proof}

\begin{thm}
\label{thm:num_states_2d_mc_ub}
For multicast, the upper bound $\overline{l}_\mathbf{r}$ on the resolution used by station $\mathbf{r}$ is computed as follows:
\begin{enumerate}
\item
$\overline{I}_\mathbf{r}=\lbrace\mathbf{r}^\prime\colon S_{\mathbf{r}^\prime}\cap\cup_{a\in S_\mathbf{r}}(L_a^2\cup\lbrace a\rbrace)\ne\emptyset\rbrace$,
\item
$\overline{w}_\mathbf{r}=\lvert\overline{I}_\mathbf{r}\rvert$,
\item
$\overline{l}_\mathbf{r}=\lceil\log_2(\max_{\mathbf{r}^\prime\in\overline{I}_\mathbf{r}}\overline{w}_{\mathbf{r}^\prime})\rceil$.
\end{enumerate}
Here, $L_a^2$ is the set of one-hop or two-hop link-peers of link $a$. 
\end{thm}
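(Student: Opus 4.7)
\begin{IEEEproof}[Proof proposal]
The plan is to mirror the argument for the broadcast upper bound in Theorem~\ref{thm:num_states_2d_ub}, but working at the level of multicast sessions on the auxiliary graph $G^L$. The set $\overline{I}_\mathbf{r}$ is precisely the collection of sessions (including $S_\mathbf{r}$ itself) that can conflict with $S_\mathbf{r}$: any session whose link lies in $L_a^2\cup\lbrace a\rbrace$ for some $a\in S_\mathbf{r}$ is within two hops of $S_\mathbf{r}$ in $G^L$, so it cannot be active at the same time as $S_\mathbf{r}$ in a collision-free schedule.

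First I would argue the worst-case count. Since in the worst case every pair of sessions in $\overline{I}_\mathbf{r}$ may mutually interfere (either directly or through $S_\mathbf{r}$), no two sessions in $\overline{I}_\mathbf{r}$ can simultaneously be active, and so at most one of the $\lvert\overline{I}_\mathbf{r}\rvert=\overline{w}_\mathbf{r}$ sessions transmits per slot. Hence $\overline{w}_\mathbf{r}$ states suffice to resolve all collisions directly involving $S_\mathbf{r}$. Next I would note that station $\mathbf{r}$ itself lies in $\overline{I}_{\mathbf{r}^\prime}$ for every $\mathbf{r}^\prime\in\overline{I}_\mathbf{r}$. Therefore station $\mathbf{r}$'s resolution must be at least as fine as the coarsest among those required by the stations in $\overline{I}_\mathbf{r}$, which in the worst case is $\max_{\mathbf{r}^\prime\in\overline{I}_\mathbf{r}}\overline{w}_{\mathbf{r}^\prime}$. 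Rounding up to the nearest power of two gives $\overline{l}_\mathbf{r}=\lceil\log_2(\max_{\mathbf{r}^\prime\in\overline{I}_\mathbf{r}}\overline{w}_{\mathbf{r}^\prime})\rceil$, as claimed.

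The main obstacle I anticipate is verifying the symmetry $\mathbf{r}^\prime\in\overline{I}_\mathbf{r}\Leftrightarrow\mathbf{r}\in\overline{I}_{\mathbf{r}^\prime}$ cleanly: while ``being a one-hop or two-hop link-peer'' is symmetric on $G^L$ (that is, $a^\prime\in L_a^2$ iff $a\in L_{a^\prime}^2$), expressing this at the level of sessions requires unpacking the definition and checking that $S_{\mathbf{r}^\prime}\cap\bigcup_{a\in S_\mathbf{r}}(L_a^2\cup\lbrace a\rbrace)\ne\emptyset$ is equivalent to $S_\mathbf{r}\cap\bigcup_{a^\prime\in S_{\mathbf{r}^\prime}}(L_{a^\prime}^2\cup\lbrace a^\prime\rbrace)\ne\emptyset$. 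Once this symmetry is in hand, the rest of the argument follows the broadcast upper bound proof almost verbatim, and the throughput bound (\ref{eqn:throughput_mc_2d}) is obtained by substituting $\overline{l}_\mathbf{r}$ into~(\ref{eqn:rho_mc}).
\end{IEEEproof}
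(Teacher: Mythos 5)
Your proposal is correct and follows essentially the same route as the paper's proof: in the worst case at most one session in $\overline{I}_\mathbf{r}$ is active per slot so $\overline{w}_\mathbf{r}$ states suffice, and since $\mathbf{r}\in\overline{I}_{\mathbf{r}^\prime}$ for every $\mathbf{r}^\prime\in\overline{I}_\mathbf{r}$ one takes the maximum and rounds up to a power of two. The symmetry you flag as a potential obstacle is exactly what the paper asserts without further comment, and your observation that it reduces to the symmetry of the one-/two-hop link-peer relation on $G^L$ settles it.
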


\begin{proof}
Any multicast session in the set $\overline{I}_\mathbf{r}$ either is $S_\mathbf{r}$ or cannot be active at the same time as $S_\mathbf{r}$ because it uses a link in the two-hop link-neighborhood of a link used by $S_\mathbf{r}$. 
In the worst case, at most one multicast session in $\overline{I}_\mathbf{r}$ can be active at any time. 
Therefore, at most $\lvert\overline{I}_\mathbf{r}\rvert=\overline{w}_\mathbf{r}$ states are required to resolve collision among the multicast sessions in $\overline{I}_\mathbf{r}$. 
Finally, station $\mathbf{r}$ also belongs to $\overline{I}_{\mathbf{r}^\prime}$ for $\mathbf{r}^\prime\in\overline{I}_\mathbf{r}$, implying the upper bound $\lceil\log_2(\max_{\mathbf{r}^\prime\in\overline{I}_\mathbf{r}}\overline{w}_{\mathbf{r}^\prime})\rceil=\overline{l}_\mathbf{r}$ on the resolution.
\end{proof}
}

Note that \KH{Theorem~\ref{thm:num_states_2d_mc} gives the same result as Theorem~\ref{thm:num_states_2d}}{Theorems~\ref{thm:num_states_2d_mc_lb} and~\ref{thm:num_states_2d_mc_ub} give the same result as Theorems~\ref{thm:num_states_2d_lb} and~\ref{thm:num_states_2d_ub}} when there is only broadcast, since $\underline{I}_a=V_\mathbf{r}\cup\lbrace\mathbf{r}\rbrace$ when $\mathbf{r}$ is the transmitter of link $a$, and $\overline{I}_\mathbf{r}=V_\mathbf{r}^2\cup\lbrace\mathbf{r}\rbrace$.

The corresponding multi-resolution MAC protocol for multicast is shown as Protocol~\ref{alg:mr_mc}. 
\KHH{T}{Apart from the randomization discussed in Section~\ref{sec:2d} (lines $10-12$ \KH{in Protocol~\ref{alg:mr_mc}}{}), t}he only difference from Protocol~\ref{alg:mr_bc} is that \textit{the votes on station $\mathbf{r}$'s next state are cast\khh{}{ed} by using the state information of link $a^\prime$'s one-hop link-neighborhood (lines $3$ and $6$), where $a^\prime$ belongs to any one-hop link-neighborhood of link $a$ used by station $\mathbf{r}$ (line $2$)}.
\KHH{By considering the analogy between multicast on $G$ and broadcast on $G^L$, we have the following convergence result. }{}

\begin{algorithm}[t]
\caption{Multi-Resolution MAC Protocol for Multicast}
\label{alg:mr_mc}
\begin{algorithmic}[1]
\WHILE{\khh{station $\mathbf{r}$ is active}{}}
\STATE
$\mathbf{r}$ sets the votes on all states to zero. 
\FOR{$a^\prime\in\cup_{a\in M_\mathbf{r}}(L_a\cup\lbrace a\rbrace)$}
\IF{$\mathbf{r}$ is the only station occupying its current state in link $a^\prime$'s one-hop link-neighborhood} 
\STATE
\KH{$\mathbf{r}$'s current state is assigned a single vote of weight one}{A single vote of weight one on $\mathbf{r}$'s current state is given by $a^\prime$}. 
\ELSE
\STATE
\KH{$\mathbf{r}$ determines which states (according to $\mathbf{r}$'s resolution) are idle or have collisions in link $a^\prime$'s one-hop link-neighborhood}{$\mathbf{r}$ determines the states (according to $\mathbf{r}$'s resolution) that are idle or experience collision in link $a^\prime$'s neighborhood}. 
\STATE
A vote of weight $\frac{1}{n}$ is \khh{added}{given} to each such state\KH{}{ by link $a^\prime$}, where $n$ is the number of such states. 
\ENDIF
\ENDFOR
\IF{$n_s>0$ for multiple $s$'s, where $n_s$ is the total weight state $s$ receives}
\STATE
Replace $n_s$ by $n_s+\epsilon$, where $\epsilon>0$, for all $s$. 
\ENDIF
\STATE
$\mathbf{r}$ selects state $s$ with a probability proportional to $f(n_s)$.
\ENDWHILE
\end{algorithmic}
\end{algorithm}

\KHH{
\begin{thm}
\label{thm:mr_mc_2d}
For multicast on a two-dimensional network, suppose \KH{each station uses}{all stations use} a sufficiently fine resolution (\textit{e.g.}, the upper bound in Theorem~\ref{thm:num_states_2d_mc}), so that the existence of collision-free configurations is guaranteed. 
Then, starting from an arbitrary initial configuration, Protocol~\ref{alg:mr_mc} will \khh{converge to}{, after a sufficiently long time, \KH{result in}{lead to}} a collision-free configuration. 
\end{thm}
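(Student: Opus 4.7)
The plan is to mimic the proof of Theorem~\ref{thm:mr_2d}, exploiting the equivalence between multicast on $G$ and broadcast on the auxiliary graph $G^L$ established in Section~\ref{sec:mc}. The state process driven by Protocol~\ref{alg:mr_mc} is again an MCMF on $\Omega=\prod_{\mathbf{r}\in V}\mathbb{F}_2^{l_\mathbf{r}}$. First I would verify that every collision-free configuration is absorbing: if station $\mathbf{r}$'s current state causes no collision with any one-hop or two-hop link-peer of any $a\in M_\mathbf{r}$, then for each $a'$ examined in line~3 of Protocol~\ref{alg:mr_mc} the condition of line~4 is triggered and a full unit vote is cast on $\mathbf{r}$'s current state; since no other state receives a vote and $n_s>0$ for exactly one $s$, the randomization in lines~10--12 is skipped and $\mathbf{r}$ keeps its state with probability one. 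Hence it suffices to show that from any non-collision-free configuration, a collision-free one is reached with positive probability in finitely many cycles.

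Following the broadcast argument, I would show that the all-zero configuration $\pmb{X}=\mathbf{0}$ is reachable with positive probability from any initial configuration, and that from $\pmb{X}=\mathbf{0}$ every configuration (including a collision-free one, which is guaranteed to exist by the resolution assumption) is reachable in one cycle with positive probability. For the second part, at $\pmb{X}=\mathbf{0}$ every station collides with all of its one-hop and two-hop link-peers, so in line~8 every slot receives a strictly positive vote and the tie-breaking in line~12 ensures $n_s+\epsilon>0$ for all $s$; therefore every state of every station is chosen with strictly positive probability, and so any target configuration (in particular, a collision-free one whose existence is guaranteed by the hypothesis) appears with positive probability in the next cycle.

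To drive the system to $\pmb{X}=\mathbf{0}$, I would fix a spanning tree $T$ on the stations rooted at some $\mathbf{r}_0$ that is currently colliding, where $T$ is taken in the auxiliary-graph sense (two stations are adjacent in the skeleton if some link in $M_\mathbf{r}$ is a one-hop link-peer of some link in $M_{\mathbf{r}'}$, i.e., they are one-hop peers in the multicast interference graph). Ordering stations by breadth-first search, I would construct the same two-phase trajectory as in Theorem~\ref{thm:mr_2d}: each station first, in successive cycles, deliberately chooses a state that collides with its next unprocessed child in $T$ (this is possible with positive probability because whenever the station is colliding the randomization step activates and every state has positive selection probability), and after exhausting its children it switches to the all-zero state and stays there. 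Because $\mathbf{r}_0$ is colliding initially the phase begins, and colliding with each subsequent child puts that child into the colliding regime, so the propagation continues down the tree and eventually every station reaches and holds the all-zero state, giving $\pmb{X}=\mathbf{0}$.

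The main obstacle is a subtlety not present in the broadcast case: the ``collision'' relation on stations is now mediated by \emph{links}, and a single multicast session $M_\mathbf{r}$ may overlap the interference neighborhoods of many sessions simultaneously. I must therefore be careful when arguing that in line~12 the augmented vote $n_s+\epsilon$ is applied at every station that the construction asks to randomize. Concretely, in the ``collide with a child'' step I need to guarantee that the acting station genuinely sees votes on more than one state (so that lines~10--12 trigger and every $s$ has positive probability); this is ensured because once the station has at least one colliding link-peer, line~8 casts votes on at least two distinct slots (idle or colliding ones) for that $a'$. Modulo this verification, the rest of the argument is a direct transcription of the proof of Theorem~\ref{thm:mr_2d} via the $G\leftrightarrow G^L$ correspondence, so all configurations with collisions are transient and convergence to a collision-free configuration holds with probability one.
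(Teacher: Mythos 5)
Your proposal is correct and follows essentially the same route as the paper: the paper proves this theorem by simply invoking the analogy between multicast on $G$ and broadcast on $G^L$ and reusing the argument of Theorem~\ref{thm:mr_2d} (absorbing collision-free configurations, spanning-tree descent to the all-zero configuration, then one randomized step to a collision-free configuration), which is precisely what you transcribe. The one subtlety you flag---that a colliding station really does see votes on at least two slots so that the $\epsilon$-randomization activates---is resolved by the same pigeonhole count used in the broadcast proofs (two colliding sessions share a slot, so under the assumed resolution at least one idle slot remains alongside the colliding one), so no gap remains.
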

}{}

Next we discuss how station $\mathbf{r}$ \KHH{exchanges}{obtain} the state information \KHH{required}{it requires} in Protocol~\ref{alg:mr_mc}. 
\KHH{A naive scheme is to let stations code the state information of different links into separate messages. 
However, links having the same transmitter share some common state information. 
Therefore, we introduce the following two-step message exchange which exploits this redunduncy to reduce the amount of information exchange between stations. }{}

The first step is to compute the state information of link $a$'s one-hop link-neighborhood \KHH{for all $a$ such that}{where} $\mathbf{r}$ is the transmitter of link $a$, \textit{i.e.}, $a=(\mathbf{r},\mathbf{r}^\prime)$. 
In the $t$-th cycle, station $\mathbf{r}$ collects $\bigl\lbrace X_{\mathbf{r}^\prime}(t)\bigr\rbrace_{\mathbf{r}^\prime\in V_\mathbf{r}}$. 
Then $\mathbf{r}$ constructs the following \textit{disjoint} sets: 
\begin{enumerate}
\item
common state information: $\mathcal{C}_\mathbf{r}(t)=\lbrace X_{\mathbf{r}^{\prime\prime}}(t)\rbrace_{\mathbf{r}^{\prime\prime}\colon\mathbf{r}\in D_{\mathbf{r}^{\prime\prime}}}$ (this includes the states of one-hop peers of $\mathbf{r}$ having $\mathbf{r}$ as an intended receiver, notice this state information is \textit{common to all links having $\mathbf{r}$ as the transmitter}); 
\item
self state information: $\mathcal{S}_\mathbf{r}(t)=\lbrace X_\mathbf{r}(t)\rbrace$ (this includes $\mathbf{r}$'s state, notice this state information is \textit{common only to all links in $M_\mathbf{r}$}); 
\item
link-specific state information for $(\mathbf{r},\mathbf{r}^\prime)$ where $\mathbf{r}^\prime\in V_\mathbf{r}$: $\mathcal{L}_{\mathbf{r},\mathbf{r}^\prime}(t)=\lbrace X_{\mathbf{r}^\prime}(t)\rbrace$ if $\mathbf{r}\notin D_{\mathbf{r}^\prime}$, and $\mathcal{L}_{\mathbf{r},\mathbf{r}^\prime}(t)=\emptyset$ otherwise (this includes $\mathbf{r}^\prime$'s state if $\mathbf{r}^\prime$ transmits to stations other than $\mathbf{r}$. 
\end{enumerate}
For $a=(\mathbf{r},\mathbf{r}^\prime)$, if $a\in M_\mathbf{r}$, the union $\mathcal{C}_\mathbf{r}(t)\cup\mathcal{S}_\mathbf{r}(t)\cup\mathcal{L}_{\mathbf{r},\mathbf{r}^\prime}(t)$ is the state information in the one-hop link-neighborhood of link $a$; otherwise if $a\notin M_\mathbf{r}$, the corresponding state information is the union $\mathcal{C}_\mathbf{r}(t)\cup\mathcal{L}_{\mathbf{r},\mathbf{r}^\prime}(t)$. 
\KHH{As an example, consider the network shown in Fig.~\ref{fig:onehopstateinfo_network}. 
A solid arrow is a link between a transmitter and an intended receiver, while a dashed arrow is a link between a transmitter and a nonintended receiver.
Fig.~\ref{fig:onehopstateinfo_table} shows how station $\mathbf{r}_0$ computes the state information of links $a_1,a_2,a_3,a_4$ following the above procedure. 
\K{Since $\mathbf{r}_0$ is an intended receiver for both $\mathbf{r}_2$ and $\mathbf{r}_3$, the states of both $\mathbf{r}_2$ and $\mathbf{r}_3$ are common to all links $a_1,a_2,a_3,a_4$, hence $\mathcal{C}_{\mathbf{r}_0}(t)=\lbrace X_{\mathbf{r}_2},X_{\mathbf{r}_3}\rbrace$.
Because $\mathbf{r}_0$ transmits to $\mathbf{r}_1$ and $\mathbf{r}_2$ only, $\mathcal{S}_{\mathbf{r}_0}(t)=\lbrace X_{\mathbf{r}_0}\rbrace$ is the state information common only to links $a_1,a_2$. 
Since $\mathbf{r}_1$ and $\mathbf{r}_4$ transmit to stations other than $\mathbf{r}_0$, $X_{\mathbf{r}_1}$ and $X_{\mathbf{r}_4}$ are included in $\mathcal{L}_{\mathbf{r}_0,\mathbf{r}_1}(t)$ and $\mathcal{L}_{\mathbf{r}_0,\mathbf{r}_4}(t)$\khh{,}{} respectively. 
}{}
}{}

\KHH{}{
\begin{table}[t]
\centering
\caption{One-hop state information of link $a$ with station $\mathbf{r}_0$ being the transmitter.}
\label{table:onehop_state}
\begin{tabular}{||c||c|c|c||}\hline
$a=(\mathbf{r},\mathbf{r}^\prime)$&$\mathcal{C}_\mathbf{r}(t)$&$\mathcal{S}_\mathbf{r}(t)$&$\mathcal{L}_{\mathbf{r},\mathbf{r^\prime}}(t)$\\\hline
$a_1$&$X_{\mathbf{r}_2},X_{\mathbf{r}_3}$&$X_{\mathbf{r}_5}$&$X_{\mathbf{r}_1}$\\\hline
$a_2$&$X_{\mathbf{r}_2},X_{\mathbf{r}_3}$&$X_{\mathbf{r}_5}$&empty\\\hline
$a_3$&$X_{\mathbf{r}_2},X_{\mathbf{r}_3}$&not included&empty\\\hline
$a_4$&$X_{\mathbf{r}_2},X_{\mathbf{r}_3}$&not included&$X_{\mathbf{r}_4}$\\\hline
\end{tabular}
\end{table}}

\begin{figure}[t]
\centering
\subfigure[]{
\includegraphics[width=2.7in]{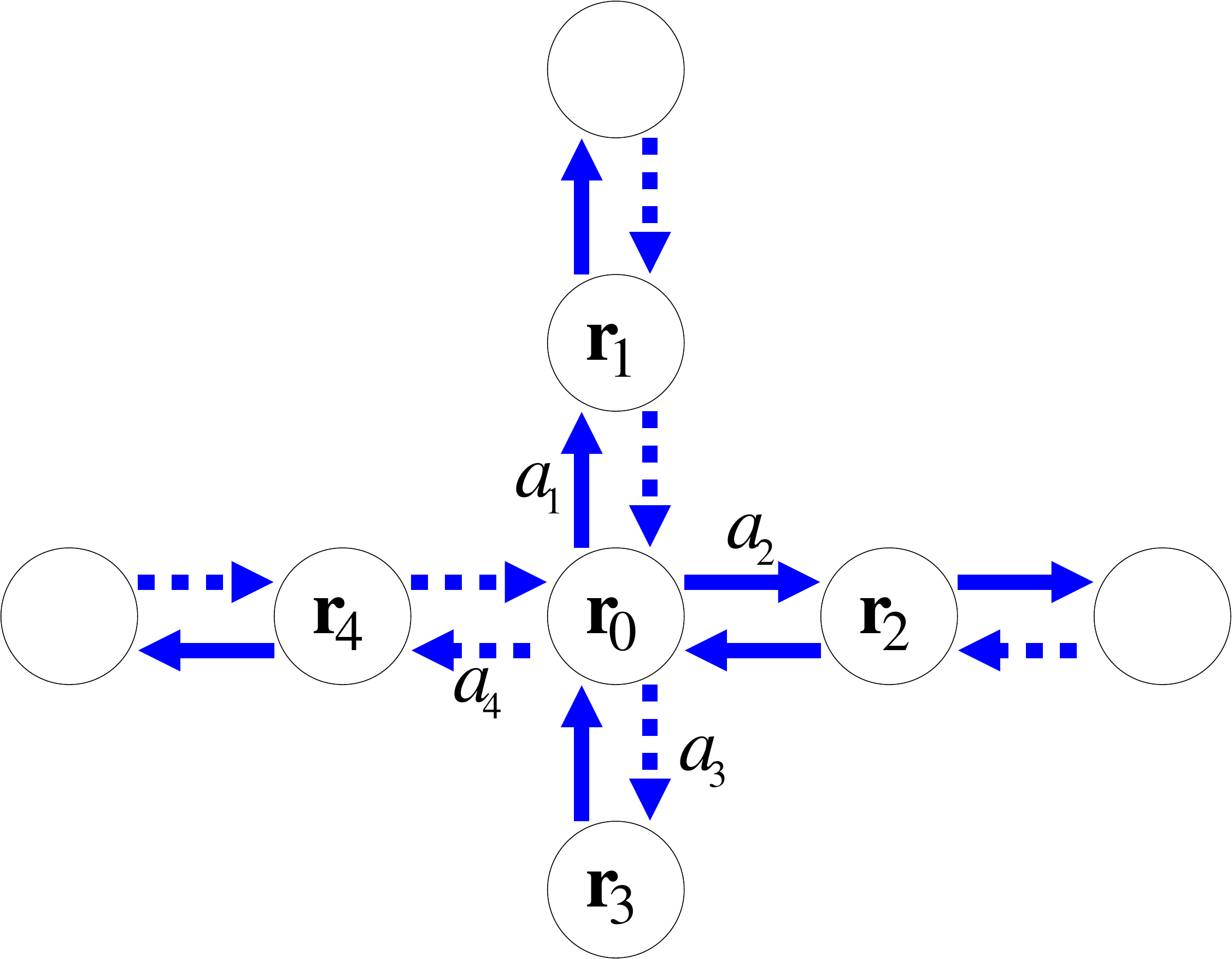}
\label{fig:onehopstateinfo_network}
}
\subfigure[]{
\includegraphics[width=3.5in]{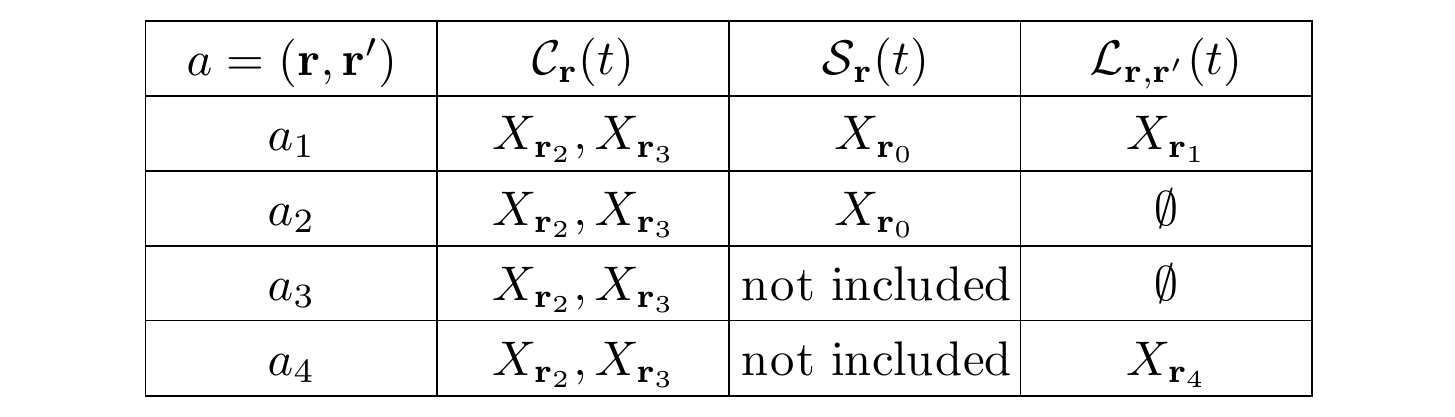}
\label{fig:onehopstateinfo_table}
}
\caption[]{One-hop state information of links $a_1,a_2,a_3,a_4$ with station $\mathbf{r}_0$ being the transmitter.}
\label{fig:onehopstateinfo}
\end{figure}

The \KHH{second}{next} step is to let $\mathbf{r}$ broadcast $\mathcal{C}_\mathbf{r}(t)$, $\mathcal{S}_\mathbf{r}(t)$, and $\mathcal{L}_{\mathbf{r},\mathbf{r}^\prime}(t)$ for $\mathbf{r}^\prime\in V_\mathbf{r}$. 
Note that any link in $\cup_{a\in M_\mathbf{r}}(L_a\cup\lbrace a\rbrace)$, \textit{i.e.}, the set of links\K{}{ in Protocol~\ref{alg:mr_mc}} that cast votes on $\mathbf{r}$'s next state, must be one of the followings: 
\begin{enumerate}
\item
$(\mathbf{r},\mathbf{r}^\prime)$ where $\mathbf{r}^\prime\in D_\mathbf{r}$, \textit{i.e.}, a link used by station $\mathbf{r}$,
\item
$(\mathbf{r}^\prime,\mathbf{r})$ where $\mathbf{r}^\prime\in V_\mathbf{r}\setminus D_\mathbf{r}$, \textit{i.e.}, the link from a nonintended receiver of station $\mathbf{r}$ to station $\mathbf{r}$,
\item
$(\mathbf{r}^\prime,\mathbf{r}^{\prime\prime})$ where $\mathbf{r}^\prime\in D_\mathbf{r}$ and $\mathbf{r}^{\prime\prime}\in V_{\mathbf{r}^\prime}$, \textit{i.e.}, a link originated from an intended receiver of station $\mathbf{r}$. 
\end{enumerate}
The transmitters of all these links are within the one-hop neighborhood of $\mathbf{r}$. 
\KHH{After station $\mathbf{r}$ receives the state information from its one-hop peer $\mathbf{r}^\prime$ in the second step, if $\mathbf{r}^\prime$ is an intended receiver of $\mathbf{r}$, then $\mathbf{r}$ needs to recover the state information of \textit{all} links with $\mathbf{r}^\prime$ as the transmitter; otherwise, $\mathbf{r}$ only needs to recover the state information of link $(\mathbf{r}^\prime,\mathbf{r})$. 
Here, it is assumed that station $\mathbf{r}$ knows $D_{\mathbf{r}^\prime}$ for all $\mathbf{r}^\prime\in V_\mathbf{r}$ so that recovery of state information of all links is possible; this can be done by letting each station broadcast a list of its intended receivers while setting up a multicast session.}{}
Therefore, station $\mathbf{r}$ can construct the one-hop state information of any link $a^\prime\in\cup_{a\in M_\mathbf{r}}(L_a\cup\lbrace a\rbrace)$, and then select\khh{}{s} its state at the $(t+1)$-st cycle following Protocol~\ref{alg:mr_mc}. 

\KHH{The amount of information exchange can be characterized as follows:
\begin{enumerate}
\item
In the first step, station $\mathbf{r}$ broadcasts its own state, which consists of $l_\mathbf{r}$ bits.
Station $\mathbf{r}$ \khh{also}{} broadcasts\khh{}{ also} its identity, which helps its one-hop peers partition the collected state information into the disjoint sets described above. 
\item
In the second step, station $\mathbf{r}$ broadcasts the states of itself and all its one-hop peers (which is already partitioned as described above), which consist of $l_\mathbf{r}+\sum_{\mathbf{r}^\prime\in V_\mathbf{r}}l_{\mathbf{r}^\prime}$ bits.
Station $\mathbf{r}$ also broadcasts its identity here, to help its one-hop peers recover the state information of each link.
\end{enumerate}
}{In the above process, station $\mathbf{r}$ needs to broadcast $2l_\mathbf{r}+\sum_{\mathbf{r}^\prime\in N_\mathbf{r}}l_{\mathbf{r}^\prime}$ bits for the state information, and in both steps, station $\mathbf{r}$ needs to broadcast its identity together with the state information.}

\KHH{}{
By considering again the analogy between multicast on $G$ and broadcast on $G^L$, we have the following convergence result. 

\begin{thm}
\label{thm:mr_mc_2d}
For multicast on a two-dimensional network, suppose \KH{each station uses}{all stations use} a sufficiently fine resolution (\textit{e.g.}, the upper bound in Theorem~\ref{thm:num_states_2d_mc}), so that the existence of collision-free configurations is guaranteed. 
Then, starting from an arbitrary initial configuration, Protocol~\ref{alg:mr_mc} will, after a sufficiently long time, \KH{result in}{lead to} a collision-free configuration. 
\end{thm}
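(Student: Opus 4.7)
The plan is to adapt the proof of Theorem~\ref{thm:mr_2d} by exploiting the stated equivalence between multicast on $G$ and broadcast on the auxiliary graph $G^L$, while being careful that the state variable is attached to the station (transmitter) rather than to each individual link. The argument has two parts: showing that every collision-free configuration is absorbing, and showing that all configurations with collision are transient.

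For the absorbing part, suppose $\pmb{X}(t)$ is collision-free. Pick any station $\mathbf{r}$ and any $a^\prime \in \cup_{a\in M_\mathbf{r}}(L_a \cup \{a\})$. Since no two link-peers collide, $\mathbf{r}$ is the unique station occupying its current state in the one-hop link-neighborhood of $a^\prime$. Thus only the branch on line~$5$ of Protocol~\ref{alg:mr_mc} fires for each such $a^\prime$, so all votes that $\mathbf{r}$ accrues are concentrated on its current state. Because only one state receives positive weight, the $\epsilon$-perturbation on lines~$10$--$12$ is not activated, so $\mathbf{r}$ re-selects its current state with probability one. Applying this to every station, $\pmb{X}(t+1) = \pmb{X}(t)$ almost surely.

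For the transient part, I would mimic the spanning-tree construction used for Theorem~\ref{thm:mr_2d}: pick a station $\mathbf{r}_0$ that currently collides, form a spanning tree of $G$ rooted at $\mathbf{r}_0$, and index the stations in breadth-first order. I claim there is a sequence of events of positive probability driving every station into the all-zero state (where every state component equals $00\cdots0$), after which, because every station experiences collision, the $\epsilon$-perturbation makes every state available with positive weight, and hence any configuration—in particular a collision-free one, which is guaranteed to exist by hypothesis—is reached in one more step. To reach the all-zero state, each station, when activated (either initially, for $\mathbf{r}_0$, or upon first colliding with its parent), sequentially chooses states that deliberately collide with each of its children in breadth-first order; the $\epsilon$-rule guarantees that any such choice has positive probability whenever the station itself is in a colliding regime. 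After finishing the scheduling with all its children, the station selects the all-zero state. Since children inherit the colliding regime from their parents, the induction runs down the tree, and the entire network reaches the all-zero configuration with positive probability.

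The main obstacle is that we cannot blindly apply the broadcast result of Theorem~\ref{thm:mr_2d} to the auxiliary graph $G^L$, because the state variables attached to the links of $M_\mathbf{r}$ are constrained to be equal (they all equal $X_\mathbf{r}$), so the configuration space on $G^L$ is a proper subset of $\prod_{a}\mathbb{F}_2^{l_a}$. Consequently, the argument must be carried out on $G$ itself, and I must verify at each step that the vote-counting rule in Protocol~\ref{alg:mr_mc}---which aggregates votes across all one-hop link-neighborhoods of links in $M_\mathbf{r}$---correctly flags each station as ``colliding'' whenever any of its links collides in $G^L$, so that the $\epsilon$-randomization step stays active throughout the drive to the all-zero configuration. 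Once this bookkeeping is in place, the rest of the argument is a direct translation of the breadth-first-search construction from the broadcast proof.
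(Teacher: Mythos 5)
Your overall strategy is exactly the one the paper intends: the paper supplies no standalone proof of Theorem~\ref{thm:mr_mc_2d}, justifying it only by the sentence ``by considering the analogy between multicast on $G$ and broadcast on $G^L$'' together with the proof of Theorem~\ref{thm:mr_2d}. Your absorbing-state argument is correct, and your observation that one cannot blindly invoke Theorem~\ref{thm:mr_2d} on $G^L$ --- because all links in $M_\mathbf{r}$ share the single state variable $X_\mathbf{r}$, so the configuration space on $G^L$ is a proper subset of $\prod_a\mathbb{F}_2^{l_a}$ --- is a genuine subtlety that the paper glosses over. In that respect your write-up is more careful than the source.

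There is, however, one step in your transient-state argument that fails as written: you build the spanning tree in $G$ and have each activated station ``deliberately collide with each of its children.'' In the broadcast setting any two one-hop peers can always be made to collide, but in the multicast setting this is false. If $\mathbf{r}$ and $\mathbf{r}'$ are one-hop peers in $G$ with $\mathbf{r}'\notin D_\mathbf{r}$, $\mathbf{r}\notin D_{\mathbf{r}'}$, and no intended receiver of either within range of the other (mutually exposed terminals, cf.\ link $6$ in Fig.~\ref{fig:mc_bc}), then no link of $M_\mathbf{r}$ is a one-hop or two-hop link-peer of any link of $M_{\mathbf{r}'}$, so $\mathbf{r}$ cannot choose any state that collides with $\mathbf{r}'$; the activation cascade stalls at such a child and the all-zero configuration is never reached along that branch. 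The repair is to run the breadth-first construction not on $G$ but on the session conflict graph $H$, in which $\mathbf{r}\sim\mathbf{r}'$ if and only if $M_\mathbf{r}\cap\bigcup_{a\in M_{\mathbf{r}'}}(L_a^2\cup\lbrace a\rbrace)\ne\emptyset$ (equivalently, their sessions can collide in $G^L$), taking a spanning tree of each connected component of $H$ that contains a collision; stations in collision-free components receive unanimous votes and remain in place, so they can be ignored. With that substitution --- and the bookkeeping you already flag, namely that Protocol~\ref{alg:mr_mc} keeps the $\epsilon$-randomization active for every station whose session collides --- your argument goes through.
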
}

\begin{figure}[t]
\centering
\subfigure[Convergence time]{
\includegraphics[width=3.5in]{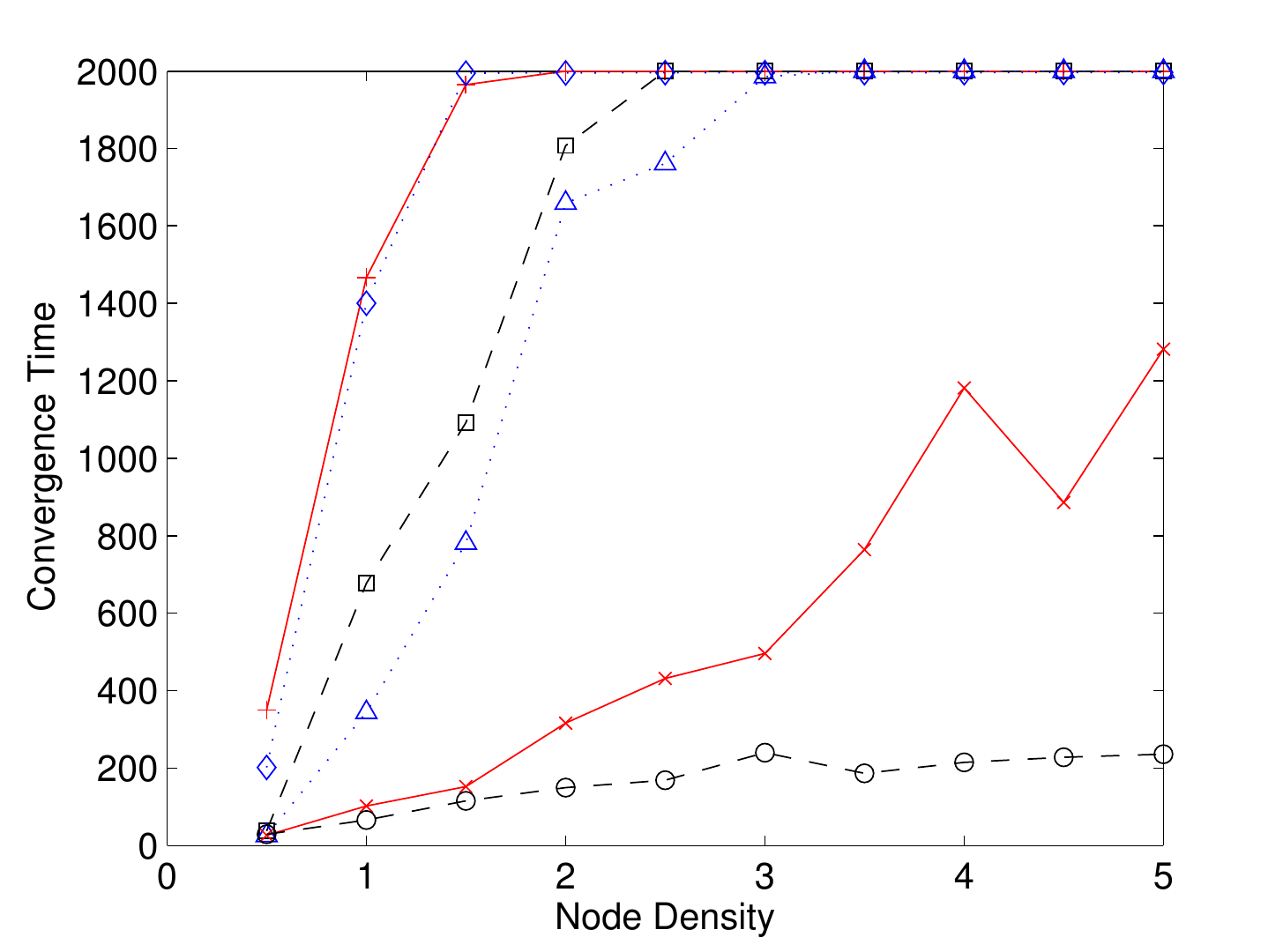}
\label{fig:time_mc_2d_20}
}
\hspace{-0.4in}
\subfigure[Convergence percentage]{
\includegraphics[width=3.5in]{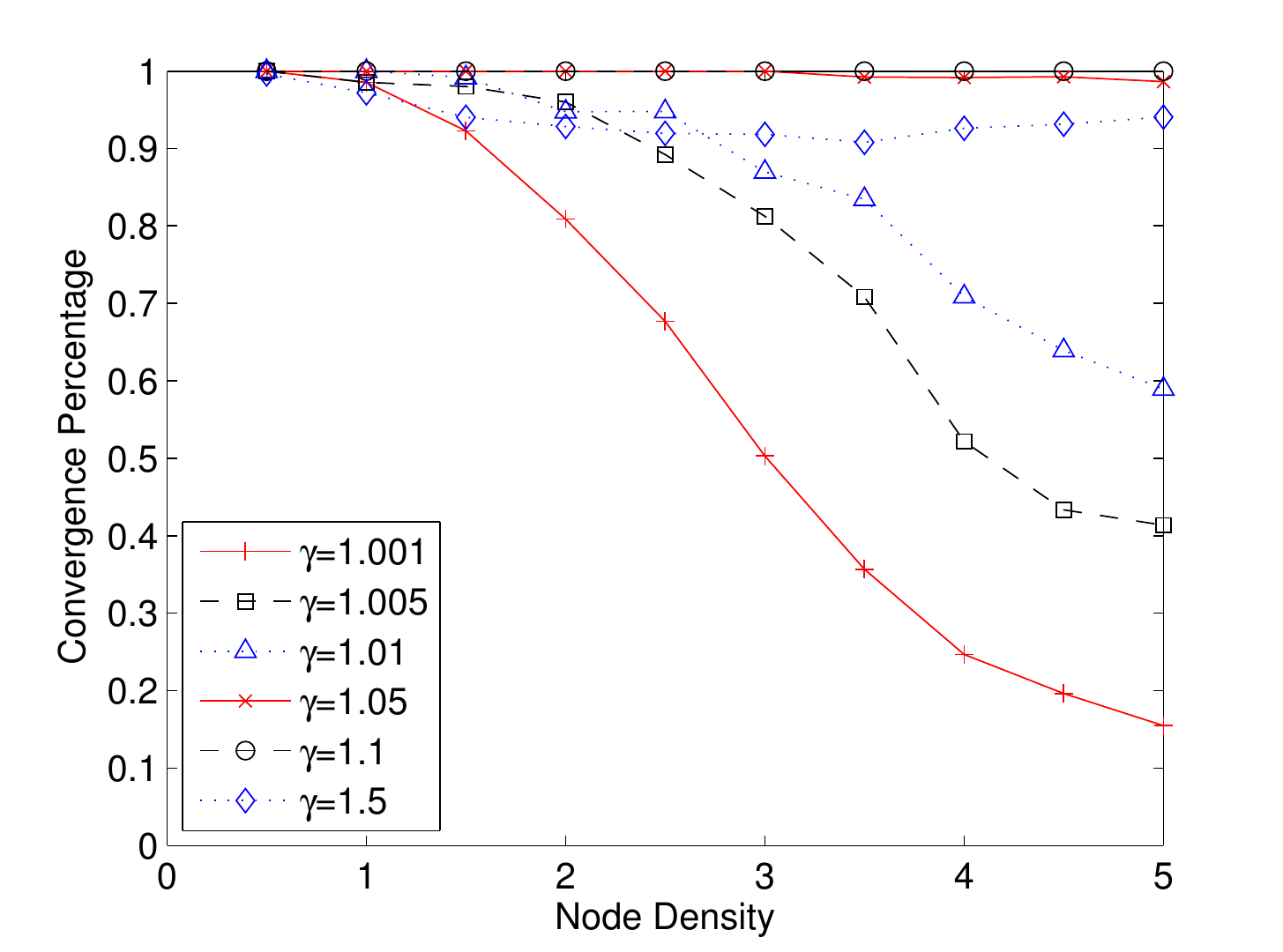}
\label{fig:percentage_mc_2d_20}
}
\caption[]{Simulations of the multi-resolution protocol with annealing for multicast, $q=0.2$ (the same legend applies to both figures).}
\label{fig:annealing_mc_2d_20}
\end{figure}

\begin{figure}[t]
\centering
\subfigure[Convergence time]{
\includegraphics[width=3.5in]{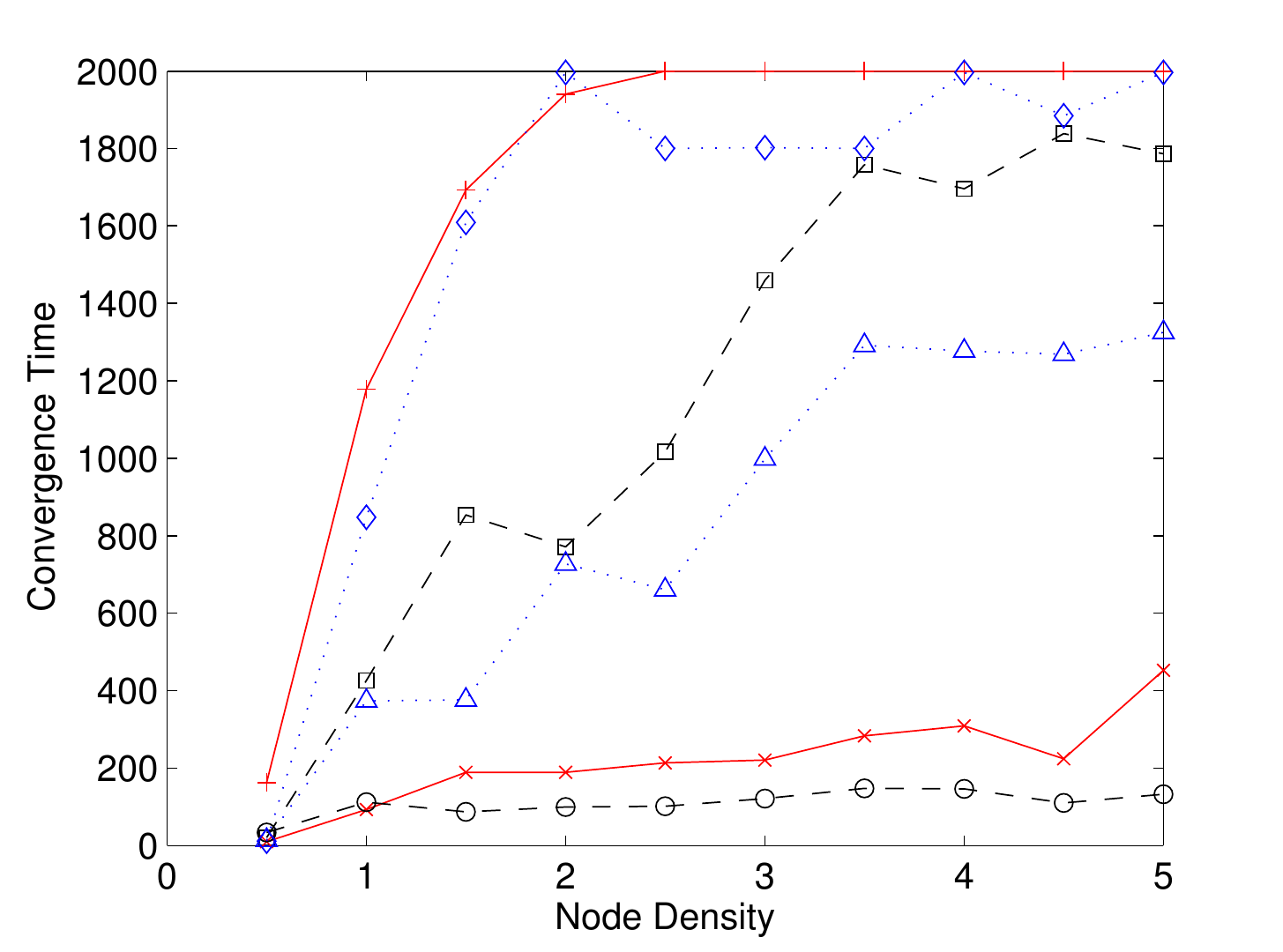}
\label{fig:time_mc_2d_80}
}
\hspace{-0.4in}
\subfigure[Convergence percentage]{
\includegraphics[width=3.5in]{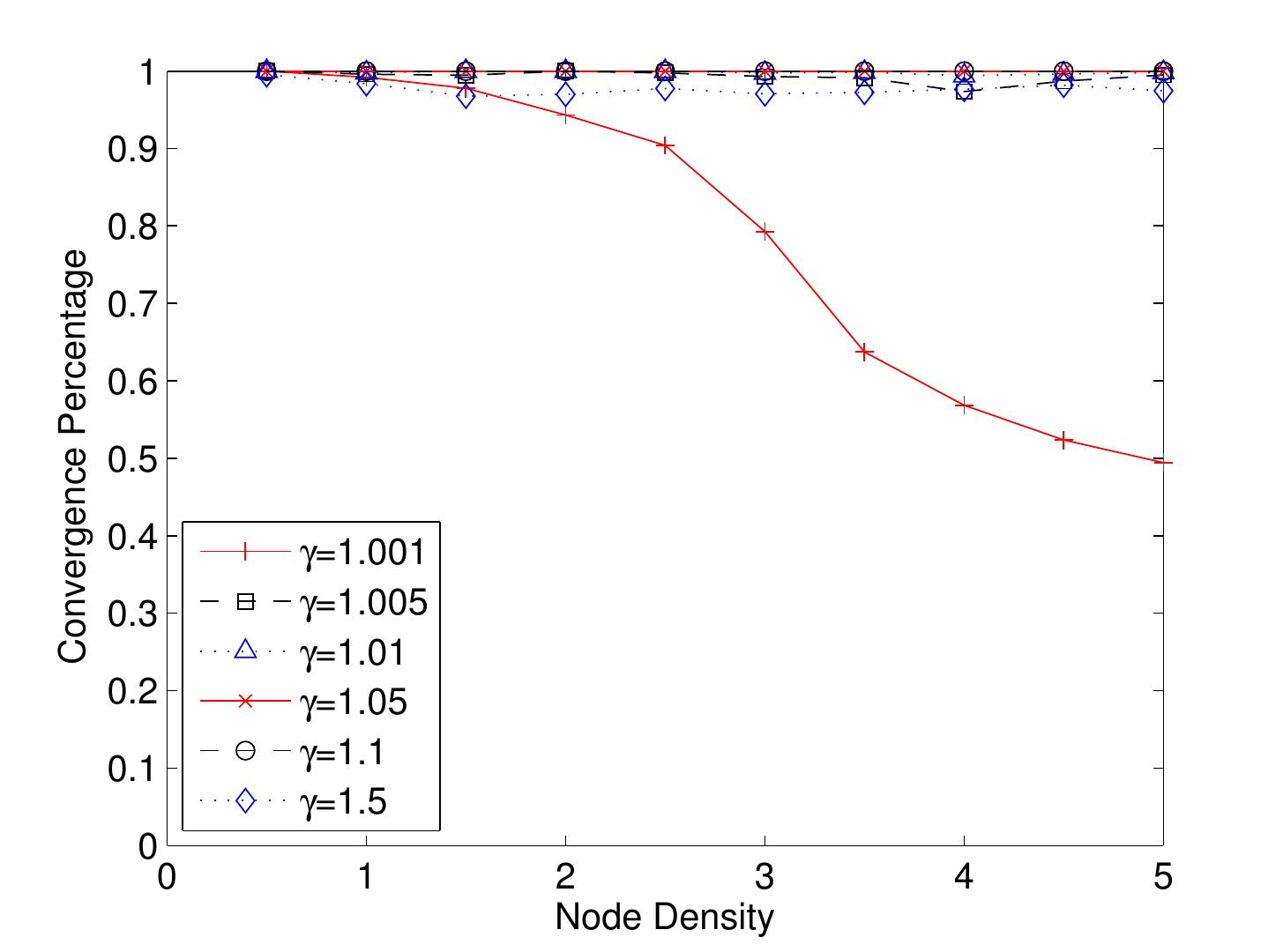}
\label{fig:percentage_mc_2d_80}
}
\caption[]{Simulations of the multi-resolution protocol with annealing for multicast, $q=0.8$ (the same legend applies to both figures).}
\label{fig:annealing_mc_2d_80}
\end{figure}

Figs.~\ref{fig:annealing_mc_2d_20} and~\ref{fig:annealing_mc_2d_80} show \khh{a}{the} simulation of Protocol~\ref{alg:mr_mc} \khh{for a}{on} two-dimensional network\khh{}{s}. 
In the simulation, any one-hop peer of station $\mathbf{r}$ is an intended receiver of the multicast by station $\mathbf{r}$ with probability $q$, independent of other one-hop peers. 
All other simulation settings are the same as those for broadcast. 
Each station uses the \KH{lower bound on the resolution predicted by Theorem~\ref{thm:num_states_2d_mc}}{resolution predicted by Theorem~\ref{thm:num_states_2d_mc_lb}} and executes Protocol~\ref{alg:mr_mc}. 
Each station refines its resolutions if necessary, until the local configuration is collision-free, or the upper bound given by Theorem~\ref{thm:num_states_2d_mc} is reached, whichever occurs first. 
The simulation results are similar to those for broadcast in Section~\ref{sec:2d}. 
It appears that when $q$ is larger, the convergence time is shorter and the convergence percentage is higher. 
This suggests that as $q$ is larger, the lower bound given by Theorem~\ref{thm:num_states_2d_mc} is accurate enough so fewer stations need to refine their resolutions, which helps speed up the convergence. 

\section{Multi-Channel Networks}
\label{sec:multich}

In this section we assume there are $K$ orthogonal channels in the network. 
A station can either transmit on one channel only, or listen to all channels simultaneously at any time, \textit{i.e.}, stations are half-duplex.\footnote{This is just one of several possibilities. For half-duplex constraints with multiple channels, similar ideas can apply to other models, \textit{e.g.}, if stations can listen to only one channel at a time.}
\KHH{In this case, the state of a station represents \textit{both the slot $s$ and the channel $\omega$} over which the station transmits, \textit{i.e.}, $X_\mathbf{r}(t)=(\omega,s)$. }{}

To estimate the lower and upper bounds on the resolutions for broadcast\KHH{}{ and multicast} with $K$ orthogonal channels, notice that the best possible scenario in station $\mathbf{r}$'s one-hop neighborhood is that $\mathbf{r}$ occupies one slot to transmit, and in the remaining slots, $\mathbf{r}$ receives one packet on each channel; while the worst situation in station $\mathbf{r}$'s two-hop neighborhood is that every station within this two-hop neighborhood must transmit at different times. 
Hence we have the following results. 

\KH{
\begin{thm}
\label{thm:num_states_2d_multich}
\khh{A lower bound on the needed resolution for a collision-free configuration for broadcast with $K$ orthogonal channels to exist is given by}{\kh{There exists a self-stabilizing protocol for broadcast with $K$ orthogonal channels for which the resolution of each station $\mathbf{r}$ is lower bounded by}{For broadcast with $K$ orthogonal channels, \kh{to allow stations to have a fair share of the wireless channel with their respective neighbors while avoiding collisions, a}{\K{a}{the}} lower bound\K{}{ $\underline{l}_\mathbf{r}$} on the resolution used by station $\mathbf{r}$ is}}\K{
\begin{equation}
\label{eqn:resolution_multich_lb}
\underline{l}_\mathbf{r}=\biggl\lceil\log_2\biggl(\max_{\mathbf{r}^\prime\in V_\mathbf{r}\cup\lbrace\mathbf{r}\rbrace}\underline{w}_{\mathbf{r}^\prime}\biggr)\biggr\rceil,
\end{equation}
where $\underline{w}_\mathbf{r}=1+\frac{\lvert V_\mathbf{r}\rvert}{K}$\khh{.}{;}
}{ computed as follows:
\begin{enumerate}
\item
$\underline{w}_\mathbf{r}=1+\frac{\lvert V_\mathbf{r}\rvert}{K}$,
\item
$\underline{l}_\mathbf{r}=\lceil\log_2(\max_{\mathbf{r}^\prime\in V_\mathbf{r}\cup\lbrace\mathbf{r}\rbrace}\underline{w}_{\mathbf{r}^\prime})\rceil$.
\end{enumerate}}\kh{\khh{A corresponding upper bound is given by}{and upper bounded by}}{\K{A}{The} \KHH{corresponding}{} upper bound\K{}{ $\overline{l}_\mathbf{r}$}\KHH{}{ on the resolution used by station $\mathbf{r}$} is}\K{
\begin{equation}
\label{eqn:resolution_multich_ub}
\overline{l}_\mathbf{r}=\biggl\lceil\log_2\biggl(\max_{\mathbf{r}^\prime\in V_\mathbf{r}^2\cup\lbrace\mathbf{r}\rbrace}\overline{w}_{\mathbf{r}^\prime}\biggr)\biggr\rceil,
\end{equation}
where $\overline{w}_\mathbf{r}=1+\lvert V_\mathbf{r}^2\rvert$.
}{ computed as follows:
\begin{enumerate}
\item
$\overline{w}_\mathbf{r}=1+\lvert V_\mathbf{r}^2\rvert$,
\item
$\overline{l}_\mathbf{r}=\lceil\log_2(\max_{\mathbf{r}^\prime\in V_\mathbf{r}^2\cup\lbrace\mathbf{r}\rbrace}\overline{w}_{\mathbf{r}^\prime})\rceil$.
\end{enumerate}}
\end{thm}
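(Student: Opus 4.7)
The plan is to mimic the counting argument that established the bounds in Theorem~\ref{thm:num_states_2d}, adapted to account for the fact that a station can now receive up to $K$ simultaneous packets (one per channel) within a single slot. As in that proof, two ingredients are needed: a per-neighborhood ``capacity'' count (how many distinct transmission opportunities are required inside the relevant neighborhood), and a propagation step that uses the fact that $\mathbf{r}$ itself lies in every neighborhood under consideration.

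For the lower bound $\underline{l}_\mathbf{r}$, I would first argue that in any collision-free configuration station $\mathbf{r}$ must transmit in a slot during which none of its peers transmits (half-duplex). In each of the remaining slots, $\mathbf{r}$ can successfully receive on each of the $K$ orthogonal channels, provided at most one peer in $V_\mathbf{r}$ transmits on each channel. Hence, in the best case, the $\lvert V_\mathbf{r}\rvert$ transmissions originating from $V_\mathbf{r}$ can be packed $K$ at a time, requiring at least $\lceil\lvert V_\mathbf{r}\rvert/K\rceil$ slots on top of the one slot occupied by $\mathbf{r}$. Thus the number of distinct slots required inside the one-hop neighborhood of $\mathbf{r}$ is at least $\underline{w}_\mathbf{r}=1+\lvert V_\mathbf{r}\rvert/K$. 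Since $\mathbf{r}$ belongs to $V_{\mathbf{r}^\prime}\cup\{\mathbf{r}^\prime\}$ for every $\mathbf{r}^\prime\in V_\mathbf{r}\cup\{\mathbf{r}\}$, its own resolution must be fine enough to satisfy the constraint for each such neighborhood, yielding $\max_{\mathbf{r}^\prime\in V_\mathbf{r}\cup\{\mathbf{r}\}}\underline{w}_{\mathbf{r}^\prime}$ states; rounding up to the next power of two produces the stated $\underline{l}_\mathbf{r}$.

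For the upper bound $\overline{l}_\mathbf{r}$, I would argue in the worst case that no two stations within a two-hop neighborhood can be active in the same slot, even on different channels. Indeed, if $\mathbf{r}^\prime\in V_\mathbf{r}^2\cup\{\mathbf{r}\}$, then a simultaneous transmission by some station $\mathbf{r}^{\prime\prime}$ in $V_{\mathbf{r}^\prime}^2\cup\{\mathbf{r}^\prime\}$ may either violate $\mathbf{r}^\prime$'s half-duplex constraint, land on the same channel chosen by $\mathbf{r}^\prime$, or coincide with a transmission that $\mathbf{r}^\prime$ must decode on the same channel. Assigning one distinct slot to each of the $1+\lvert V_{\mathbf{r}^\prime}^2\rvert=\overline{w}_{\mathbf{r}^\prime}$ stations in $V_{\mathbf{r}^\prime}^2\cup\{\mathbf{r}^\prime\}$ is always sufficient, so at most $\max_{\mathbf{r}^\prime\in V_\mathbf{r}^2\cup\{\mathbf{r}\}}\overline{w}_{\mathbf{r}^\prime}$ states are needed, and rounding up gives $\overline{l}_\mathbf{r}$.

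The step I expect to require the most care is justifying the per-neighborhood bound $\underline{w}_\mathbf{r}=1+\lvert V_\mathbf{r}\rvert/K$ (as opposed to $1+\lceil\lvert V_\mathbf{r}\rvert/K\rceil$): the bookkeeping between integer slot counts and the outer ceiling in $\underline{l}_\mathbf{r}$ must be done carefully so that both bounds collapse to the single-channel expressions of Theorem~\ref{thm:num_states_2d} when $K=1$. A secondary subtlety is verifying that in the upper-bound argument the worst-case scenario is actually realizable by some network or receiver constraint, so that the $K$ channels genuinely fail to reduce the state requirement; once this is in place, the propagation step from per-neighborhood bounds to per-station bounds is identical to the one used earlier, so no new ideas are required there.
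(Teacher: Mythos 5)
Your proposal is correct and follows essentially the same reasoning the paper gives for this theorem: the best case packs $K$ peer transmissions per slot around $\mathbf{r}$'s own transmit slot (giving $1+\lvert V_\mathbf{r}\rvert/K$), the worst case serializes the entire two-hop neighborhood (giving $1+\lvert V_\mathbf{r}^2\rvert$), and the propagation to per-station resolutions is inherited from the argument of Theorem~\ref{thm:num_states_2d}. The ceiling subtlety you flag is harmless: since $2^{\underline{l}_\mathbf{r}}-1$ is an integer, requiring it to be at least $\lvert V_\mathbf{r}\rvert/K$ is equivalent to requiring it to be at least $\lceil\lvert V_\mathbf{r}\rvert/K\rceil$, so both forms yield the same $\underline{l}_\mathbf{r}$.
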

}{
\begin{thm}
\label{thm:num_states_2d_multich_lb}
For broadcast with $K$ orthogonal channels, the lower bound $\underline{l}_\mathbf{r}$ on the resolution used by station $\mathbf{r}$ is computed as follows:
\begin{enumerate}
\item
$\underline{w}_\mathbf{r}=1+\frac{\lvert N_\mathbf{r}\rvert}{K}$,
\item
 $\underline{l}_\mathbf{r}=\lceil\log_2(\max_{\mathbf{r}^\prime\in N_\mathbf{r}\cup\lbrace\mathbf{r}\rbrace}\underline{w}_{\mathbf{r}^\prime})\rceil$.
\end{enumerate}
\end{thm}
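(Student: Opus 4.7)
The plan is to mirror the derivation of Theorem~\ref{thm:num_states_2d_lb}, adjusted for the fact that with $K$ orthogonal channels, up to $K$ peers can transmit simultaneously (on distinct channels) while a station listens. First, I would fix any $\mathbf{r}^\prime\in N_\mathbf{r}\cup\lbrace\mathbf{r}\rbrace$ and count the number of distinct time slots required so that every station in $N_{\mathbf{r}^\prime}\cup\lbrace\mathbf{r}^\prime\rbrace$ has a collision-free transmission opportunity at receiver $\mathbf{r}^\prime$. Because of the half-duplex constraint, $\mathbf{r}^\prime$ itself must occupy one time slot to transmit. In each of the remaining time slots $\mathbf{r}^\prime$ can simultaneously receive on all $K$ channels, so at most $K$ of its $\lvert N_{\mathbf{r}^\prime}\rvert$ one-hop peers may be assigned the same time slot (using distinct channels). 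Consequently the best case requires at least $1+\lvert N_{\mathbf{r}^\prime}\rvert/K=\underline{w}_{\mathbf{r}^\prime}$ distinct time slots in $\mathbf{r}^\prime$'s one-hop neighborhood.

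Next, I would observe that station $\mathbf{r}$ lies inside every one-hop neighborhood $N_{\mathbf{r}^\prime}\cup\lbrace\mathbf{r}^\prime\rbrace$ with $\mathbf{r}^\prime\in N_\mathbf{r}\cup\lbrace\mathbf{r}\rbrace$, so its time-slot resolution must satisfy the tightest such constraint, giving $2^{\underline{l}_\mathbf{r}}\ge\max_{\mathbf{r}^\prime\in N_\mathbf{r}\cup\lbrace\mathbf{r}\rbrace}\underline{w}_{\mathbf{r}^\prime}$. Since $\underline{l}_\mathbf{r}$ is an integer, taking the ceiling of the base-2 logarithm yields
\begin{displaymath}
\underline{l}_\mathbf{r}=\biggl\lceil\log_2\biggl(\max_{\mathbf{r}^\prime\in N_\mathbf{r}\cup\lbrace\mathbf{r}\rbrace}\underline{w}_{\mathbf{r}^\prime}\biggr)\biggr\rceil,
\end{displaymath}
as claimed. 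The fractional quantity $\lvert N_{\mathbf{r}^\prime}\rvert/K$ (as opposed to $\lceil\lvert N_{\mathbf{r}^\prime}\rvert/K\rceil$) is naturally absorbed by the outer ceiling, so no separate rounding step is needed.

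The main subtlety will be justifying the per-slot bound of $K$ simultaneous transmissions cleanly. In the single-channel case of Theorem~\ref{thm:num_states_2d_lb} this is a trivial consequence of half duplex (at most one peer active per slot), but with $K$ orthogonal channels one must invoke both the assumption that $\mathbf{r}^\prime$ listens on all channels when not transmitting and the collision model, which prohibits two peers from sharing the same (channel, slot) pair. Once that packing bound is established, the rest of the argument is a direct transcription of the corresponding broadcast proof, with $1+\lvert N_{\mathbf{r}^\prime}\rvert$ replaced by $1+\lvert N_{\mathbf{r}^\prime}\rvert/K$ throughout.
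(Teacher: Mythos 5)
Your argument is correct and matches the paper's reasoning: the paper justifies this bound with exactly the observation that in the best case $\mathbf{r}^\prime$ transmits in one slot and receives one packet per channel in each remaining slot, yielding $1+\lvert N_{\mathbf{r}^\prime}\rvert/K$ slots, followed by the same maximization over the closed one-hop neighborhood and the ceiling of the base-2 logarithm as in the single-channel lower bound. Your packing bound of $K$ peers per slot and the remark that the fractional count is absorbed by the outer ceiling are consistent with the paper's model and add no deviation from its approach.
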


\begin{thm}
\label{thm:num_states_2d_multich_ub}
For broadcast with $K$ orthogonal channels, the upper bound $\overline{l}_\mathbf{r}$ on the resolution used by station $\mathbf{r}$ is computed as follows:
\begin{enumerate}
\item
$\overline{w}_\mathbf{r}=1+\lvert N_\mathbf{r}^2\rvert$,
\item
$\overline{l}_\mathbf{r}=\lceil\log_2(\max_{\mathbf{r}^\prime\in N_\mathbf{r}^2\cup\lbrace\mathbf{r}\rbrace}\overline{w}_{\mathbf{r}^\prime})\rceil$.
\end{enumerate}
Here, $N_\mathbf{r}^2$ is the set of one-hop or two-hop peers of $\mathbf{r}$. 
\end{thm}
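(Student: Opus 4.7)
The plan is to adapt the proof of Theorem~\ref{thm:num_states_2d_ub} to the multi-channel setting. The essential observation is that introducing $K>1$ orthogonal channels can only enlarge the set of feasible collision-free schedules relative to the single-channel case: any schedule that assigns all transmissions to a single channel (ignoring the remaining $K-1$) remains collision-free in the multi-channel model. Hence the single-channel upper bound is automatically an upper bound when $K$ channels are available, and it is this quantity that we need to recover.

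First, I would invoke the same worst-case scheduling argument used in the single-channel proof. In a conservative collision-free configuration, station $\mathbf{r}'$ must remain silent whenever any of its one-hop or two-hop peers is active, so at most one station in every two-hop neighborhood transmits at any given time. Consequently, for each $\mathbf{r}'$ with $\mathbf{r}\in N_{\mathbf{r}'}^2\cup\lbrace\mathbf{r}'\rbrace$, the transmissions of all stations in $N_{\mathbf{r}'}^2\cup\lbrace\mathbf{r}'\rbrace$ can be separated using at most $\overline{w}_{\mathbf{r}'}=1+\lvert N_{\mathbf{r}'}^2\rvert$ distinct slots. Since station $\mathbf{r}$ belongs to $N_{\mathbf{r}'}^2\cup\lbrace\mathbf{r}'\rbrace$ for every $\mathbf{r}'\in N_\mathbf{r}^2\cup\lbrace\mathbf{r}\rbrace$, its resolution must accommodate the largest such neighborhood, giving at most $\max_{\mathbf{r}'\in N_\mathbf{r}^2\cup\lbrace\mathbf{r}\rbrace}\overline{w}_{\mathbf{r}'}$ slots. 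The binary state representation then forces a round-up to the nearest power of two, yielding $\overline{l}_\mathbf{r}=\lceil\log_2(\max_{\mathbf{r}'\in N_\mathbf{r}^2\cup\lbrace\mathbf{r}\rbrace}\overline{w}_{\mathbf{r}'})\rceil$.

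The main subtlety is justifying why the upper bound does not improve with $K$, in contrast to the lower bound of Theorem~\ref{thm:num_states_2d_multich_lb}. While multiple channels can let several peers of the same receiver transmit simultaneously (hence the factor-of-$K$ improvement in the best case), the upper bound reflects a worst-case topology in which transmissions within a two-hop neighborhood must be serialized in time regardless of how channels are assigned. Reusing the single-channel serialization remains feasible in the multi-channel model, which is precisely why the formula carries over verbatim from Theorem~\ref{thm:num_states_2d_ub}.
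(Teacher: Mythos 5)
Your proposal is correct and follows essentially the same route as the paper: the paper justifies this theorem with the single remark that the worst case in a two-hop neighborhood is that every station must transmit at a different time, and then reuses verbatim the counting argument from the single-channel upper bound (at most $1+\lvert N_{\mathbf{r}'}^2\rvert$ states per two-hop neighborhood containing $\mathbf{r}$, maximized and rounded up to a power of two). Your added discussion of why the factor of $K$ does not appear in the upper bound, unlike in the lower bound, is a correct elaboration of the same point rather than a different argument.
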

}

Similar arguments provide the corresponding lower and upper bounds on the resolution for multicast. 

\KH{
\begin{thm}
\label{thm:num_states_2d_mc_multich}
\khh{A lower bound on the needed resolution for a collision-free configuration for multicast with $K$ orthogonal channels to exist}{\kh{There exists a self-stabilizing protocol for multicast with $K$ orthogonal channels for which the lower bound on the resolution of each station $\mathbf{r}$}{For multicast with $K$ orthogonal channels, \kh{to allow stations to have a fair share of the wireless channel with their respective neighbors while avoiding collisions, a}{\K{a}{the}} lower bound\K{}{ $\underline{l}_\mathbf{r}$} on the resolution used by station $\mathbf{r}$}} is computed as follows:
\K{
\begin{IEEEeqnarray}{rCl}
\underline{I}_a&=&\lbrace\mathbf{r}^\prime\colon M_{\mathbf{r}^\prime}\cap(L_a\cup\lbrace a\rbrace)\ne\emptyset\rbrace,\\
\underline{w}_a&=&1+\frac{\lvert\underline{I}_a\rvert-1}{K},\\
\underline{l}_a&=&\max_{a^\prime\in L_a\cup\lbrace a\rbrace}\underline{w}_{a^\prime},\\
\underline{l}_\mathbf{r}&=&\biggl\lceil\log_2\biggl(\max_{a\in M_\mathbf{r}}\underline{l}_a\biggr)\biggr\rceil.
\end{IEEEeqnarray}
}{
\begin{enumerate}
\item
$\underline{I}_a=\lbrace\mathbf{r}^\prime\colon M_{\mathbf{r}^\prime}\cap(L_a\cup\lbrace a\rbrace)\ne\emptyset\rbrace$,
\item
$\underline{w}_a=1+\frac{\lvert\underline{I}_a\rvert-1}{K}$,
\item
$\underline{l}_a=\max_{a^\prime\in L_a\cup\lbrace a\rbrace}\underline{w}_{a^\prime}$,
\item
$\underline{l}_\mathbf{r}=\lceil\log_2(\max_{a\in M_\mathbf{r}}\underline{l}_a)\rceil$.
\end{enumerate}}\K{A}{The} \KHH{corresponding}{} upper bound\K{}{ $\overline{l}_\mathbf{r}$}\KHH{}{ on the resolution used by station $\mathbf{r}$} is computed as follows:
\K{
\begin{IEEEeqnarray}{rCl}
\overline{I}_\mathbf{r}&=&\lbrace\mathbf{r}^\prime\colon M_{\mathbf{r}^\prime}\cap\cup_{a\in M_\mathbf{r}}(L_a^2\cup\lbrace a\rbrace)\ne\emptyset\rbrace,\\
\overline{w}_\mathbf{r}&=&\lvert\overline{I}_\mathbf{r}\rvert,\\
\overline{l}_\mathbf{r}&=&\biggl\lceil\log_2\biggl(\max_{\mathbf{r}^\prime\in\overline{I}_\mathbf{r}}\overline{w}_{\mathbf{r}^\prime}\biggr)\biggr\rceil.
\end{IEEEeqnarray}
}{
\begin{enumerate}
\item
$\overline{I}_\mathbf{r}=\lbrace\mathbf{r}^\prime\colon M_{\mathbf{r}^\prime}\cap\cup_{a\in M_\mathbf{r}}(L_a^2\cup\lbrace a\rbrace)\ne\emptyset\rbrace$,
\item
$\overline{w}_\mathbf{r}=\lvert\overline{I}_\mathbf{r}\rvert$,
\item
$\overline{l}_\mathbf{r}=\lceil\log_2(\max_{\mathbf{r}^\prime\in\overline{I}_\mathbf{r}}\overline{w}_{\mathbf{r}^\prime})\rceil$.
\end{enumerate}}
\end{thm}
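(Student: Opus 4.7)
The plan is to mirror the proof of Theorem~\ref{thm:num_states_2d_mc}, modifying the single-channel analysis to exploit the two observations highlighted just above the theorem statement: in the best case a receiver can decode up to $K$ concurrent transmissions on distinct orthogonal channels within the same slot, while in the worst case the half-duplex constraint forces all sessions in a two-hop link-neighborhood to occupy distinct slots regardless of channel assignment.

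For the lower bound I would first fix a link $a$ and consider the $|\underline{I}_a|$ multicast sessions that use some link in the one-hop link-neighborhood of $a$. From the perspective of $a$'s transmitter, one slot must be devoted to the session using link $a$ itself (during which the transmitter is half-duplex and cannot also receive). The remaining $|\underline{I}_a|-1$ interfering sessions can be packed at best $K$-at-a-time within a single slot, since the intended receivers in $a$'s neighborhood listen to all channels simultaneously and can separate concurrent transmissions on distinct orthogonal channels. This yields $\underline{w}_a = 1 + \frac{|\underline{I}_a|-1}{K}$ as the minimum number of states needed at link $a$. The remainder of the derivation is identical to the single-channel case of Theorem~\ref{thm:num_states_2d_mc}: link $a$ lies in the one-hop link-neighborhood of every $a' \in L_a \cup \{a\}$, hence $\underline{l}_a = \max_{a' \in L_a \cup \{a\}} \underline{w}_{a'}$; and since station $\mathbf{r}$ must use a resolution fine enough for all of its outgoing links, $\underline{l}_\mathbf{r} = \lceil \log_2(\max_{a \in M_\mathbf{r}} \underline{l}_a) \rceil$ after rounding up to the nearest power of two.

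For the upper bound the argument proceeds exactly as in Theorem~\ref{thm:num_states_2d_mc}. The set $\overline{I}_\mathbf{r}$ collects $M_\mathbf{r}$ together with every session that shares a link within the two-hop link-neighborhood of some link in $M_\mathbf{r}$. In the worst-case topology these sessions are pairwise coupled through half-duplex conflicts that orthogonal channels cannot resolve, so at most one session in $\overline{I}_\mathbf{r}$ can be active at any time, and at most $\overline{w}_\mathbf{r} = |\overline{I}_\mathbf{r}|$ distinct states suffice. Since $\mathbf{r}$ itself belongs to $\overline{I}_{\mathbf{r}'}$ whenever $\mathbf{r}' \in \overline{I}_\mathbf{r}$, we obtain $\overline{l}_\mathbf{r} = \lceil \log_2(\max_{\mathbf{r}' \in \overline{I}_\mathbf{r}} \overline{w}_{\mathbf{r}'}) \rceil$.

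The hard part will be justifying rigorously that the $K$ channels cannot be used to tighten the upper bound. Concretely, I will need to exhibit a worst-case topology in which every pair of sessions in $\overline{I}_\mathbf{r}$ is mutually exclusive due to half-duplex constraints alone, for example when the sessions form a chain whose successive transmitters and receivers coincide so that no channel assignment permits two to coexist. Verifying this tightness is the only step genuinely different from the single-channel multicast proof; once in place, the remaining bookkeeping carries over from Theorem~\ref{thm:num_states_2d_mc} verbatim.
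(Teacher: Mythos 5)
Your proposal is correct and follows essentially the same route as the paper, which justifies this theorem only by the informal observation preceding Theorem~\ref{thm:num_states_2d_multich} (best case: a station transmits in one slot and receives one packet per channel in the others, giving the $1+\frac{\lvert\underline{I}_a\rvert-1}{K}$ count; worst case: half-duplex forces all conflicting sessions into distinct slots) together with ``similar arguments'' carried over from Theorem~\ref{thm:num_states_2d_mc}. The one thing to drop is your self-identified ``hard part'': the theorem claims only that $\overline{l}_\mathbf{r}$ is \emph{an} upper bound on the needed resolution, i.e., that $2^{\overline{l}_\mathbf{r}}$ states always suffice, so you do not need to exhibit a worst-case topology showing the channels cannot tighten it --- looseness of an upper bound is not a defect, and the sufficiency argument (give each session in $\overline{I}_\mathbf{r}$ its own slot) already stands without any tightness claim.
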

}{
\begin{thm}
\label{thm:num_states_2d_mc_multich_lb}
For multicast with $K$ orthogonal channels, the lower bound $\underline{l}_\mathbf{r}$ on the resolution used by station $\mathbf{r}$ is computed as follows:
\begin{enumerate}
\item
$\underline{I}_a=\lbrace\mathbf{r}^\prime\colon S_{\mathbf{r}^\prime}\cap(L_a\cup\lbrace a\rbrace)\ne\emptyset\rbrace$,
\item
$\underline{w}_a=1+\frac{\lvert\underline{I}_a\rvert-1}{K}$,
\item
$\underline{l}_a=\max_{a^\prime\in L_a\cup\lbrace a\rbrace}\underline{w}_{a^\prime}$,
\item
$\underline{l}_\mathbf{r}=\lceil\log_2(\max_{a\in S_\mathbf{r}}\underline{l}_a)\rceil$.
\end{enumerate}
Here, $L_a$ is the set of one-hop link-peers of link $a$. 
\end{thm}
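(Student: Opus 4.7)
The plan is to extend the proof of Theorem~\ref{thm:num_states_2d_mc_lb} by incorporating the fact that $K$ orthogonal channels allow up to $K$ simultaneous, non-colliding transmissions within a single slot. The argument proceeds in the same three layers as the single-channel case: per-link, per-link-neighborhood, and per-station.

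First, I would fix a link $a$ used by session $S_\mathbf{r}$ and examine its one-hop link-neighborhood $L_a \cup \lbrace a\rbrace$. The set $\underline{I}_a$ enumerates all multicast sessions whose links intersect this neighborhood, and includes $S_\mathbf{r}$ itself. Any two sessions in $\underline{I}_a$ active on the \emph{same} channel in the \emph{same} slot would collide at some receiver within $L_a \cup \lbrace a\rbrace$; furthermore, because of the half-duplex constraint on $\mathbf{r}$, the slot during which $S_\mathbf{r}$ itself is active cannot be reused by any of the remaining sessions on any channel without losing the transmission on link $a$ at its receiver. Hence, in the best case, $S_\mathbf{r}$ occupies one full slot, while the other $\lvert\underline{I}_a\rvert-1$ sessions can be packed $K$ per slot across the orthogonal channels, requiring at least $\tfrac{\lvert\underline{I}_a\rvert-1}{K}$ additional slots. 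This yields the per-link weight $\underline{w}_a = 1 + \tfrac{\lvert\underline{I}_a\rvert-1}{K}$.

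Next, exactly as in the single-channel proof, link $a$ lies in the one-hop link-neighborhood of every $a^\prime \in L_a \cup \lbrace a\rbrace$, so the number of slots required by $a$ must dominate the weights induced by all such $a^\prime$; that is, $\underline{l}_a = \max_{a^\prime\in L_a\cup\lbrace a\rbrace}\underline{w}_{a^\prime}$. Finally, station $\mathbf{r}$ must adopt a resolution fine enough to accommodate each of its links in $S_\mathbf{r}$, and since a resolution of $l$ bits corresponds to $2^l$ slots, we obtain $\underline{l}_\mathbf{r} = \lceil \log_2 (\max_{a\in S_\mathbf{r}} \underline{l}_a)\rceil$.

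The main obstacle, and the one point where this argument genuinely differs from Theorem~\ref{thm:num_states_2d_mc_lb}, is justifying that the $\lvert\underline{I}_a\rvert-1$ competing sessions really can be packed $K$ per slot in the best case. This relies on the model's assumption that each station can listen on all $K$ channels simultaneously, so that a receiver inside $L_a \cup \lbrace a\rbrace$ is disrupted only by two concurrent transmissions on the \emph{same} channel, never across different channels. With that in mind, the counting argument is essentially identical to the single-channel case and yields a valid (though not necessarily tight) lower bound: the non-integer value of $\underline{w}_a$ gives a weaker count than $1+\lceil(\lvert\underline{I}_a\rvert-1)/K\rceil$, but this weakening is harmless since the outer $\lceil \log_2 \cdot\rceil$ is monotone and the stated expression is still a lower bound on the number of states any station can safely use.
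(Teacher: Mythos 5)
Your argument is correct and follows exactly the reasoning the paper intends: the paper gives no formal proof of this theorem, only the remark that in the best case a station uses one slot to transmit and receives one packet per channel in the remaining slots, followed by ``similar arguments'' deferring to the single-channel multicast proof. Your write-up fills in that same counting (one half-duplex slot for $S_\mathbf{r}$ plus the remaining $\lvert\underline{I}_a\rvert-1$ sessions packed $K$ per slot), so it matches the paper's approach while being more explicit.
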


\begin{thm}
\label{thm:num_states_2d_mc_multich_ub}
For multicast with $K$ orthogonal channels, the upper bound $\overline{l}_\mathbf{r}$ on the resolution used by station $\mathbf{r}$ is computed as follows:
\begin{enumerate}
\item
$\overline{I}_\mathbf{r}=\lbrace\mathbf{r}^\prime\colon S_{\mathbf{r}^\prime}\cap\cup_{a\in S_\mathbf{r}}(L_a^2\cup\lbrace a\rbrace)\ne\emptyset\rbrace$,
\item
$\overline{w}_\mathbf{r}=\lvert\overline{I}_\mathbf{r}\rvert$,
\item
$\overline{l}_\mathbf{r}=\lceil\log_2(\max_{\mathbf{r}^\prime\in\overline{I}_\mathbf{r}}\overline{w}_{\mathbf{r}^\prime})\rceil$.
\end{enumerate}
Here, $L_a^2$ is the set of one-hop or two-hop link-peers of link $a$. 
\end{thm}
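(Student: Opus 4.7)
The plan is to mirror the single-channel multicast upper bound argument (Theorem~\ref{thm:num_states_2d_mc_ub}), observing that the formula for $\overline{l}_\mathbf{r}$ is formally identical: the parameter $K$ does not appear. This is not an oversight; the binding constraints in the worst case come from the half-duplex restriction and hidden-terminal interference, and orthogonal channels cannot relieve them. The worst case is precisely the one in which every session in $\overline{I}_\mathbf{r}$ competes with $S_\mathbf{r}$ on a purely temporal basis, so the upper bound cannot improve by splitting over channels.

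First I would argue that at most one multicast session in $\overline{I}_\mathbf{r}$ can be active at any given time in the worst case. By definition, each $\mathbf{r}^\prime\in\overline{I}_\mathbf{r}$ either equals $\mathbf{r}$ or uses some link inside $L_a^2\cup\lbrace a\rbrace$ for some $a\in S_\mathbf{r}$. In the worst case (e.g., when the transmitter of $S_{\mathbf{r}^\prime}$ is a hidden terminal of a receiver of $S_\mathbf{r}$, or vice versa), the two sessions must be silent at the same time regardless of which of the $K$ channels each tries to use, because the interference is at the receiver and the half-duplex constraint is per station. Hence $\overline{w}_\mathbf{r}=\lvert\overline{I}_\mathbf{r}\rvert$ distinct time slots suffice, and no fewer can be guaranteed to work in the worst case.

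Next I would invoke the symmetry of the relation defining $\overline{I}_\mathbf{r}$: whenever $\mathbf{r}^\prime\in\overline{I}_\mathbf{r}$ one also has $\mathbf{r}\in\overline{I}_{\mathbf{r}^\prime}$, since being a one-hop or two-hop link-peer is symmetric. Therefore station $\mathbf{r}$ must use a resolution at least as fine as that required by every $\mathbf{r}^\prime\in\overline{I}_\mathbf{r}$, which gives $\max_{\mathbf{r}^\prime\in\overline{I}_\mathbf{r}}\overline{w}_{\mathbf{r}^\prime}$ states. Rounding this count up to the nearest power of two yields $\overline{l}_\mathbf{r}=\lceil\log_2(\max_{\mathbf{r}^\prime\in\overline{I}_\mathbf{r}}\overline{w}_{\mathbf{r}^\prime})\rceil$, as claimed.

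The main obstacle I anticipate is articulating cleanly why $K$ drops out of the upper bound even though it appears in the lower bound of Theorem~\ref{thm:num_states_2d_mc_multich_lb}. The resolution is that the lower bound reflects a best case in which receivers exploit multiple channels to listen in parallel, whereas the upper bound reflects a worst case where conflicts are unavoidable in time; a careful worst-case topology (all sessions in $\overline{I}_\mathbf{r}$ mutually hidden or sharing a half-duplex transmitter) shows that $K$ orthogonal channels give no reduction, so the single-channel bound is tight for multi-channel systems as well.
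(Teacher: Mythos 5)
Your proposal is correct and follows essentially the same route as the paper, which proves this result by the remark that in the worst case every conflicting session in a two-hop link-neighborhood must transmit at a different time (so $K$ drops out of the upper bound), combined with the same counting and symmetry argument ($\mathbf{r}\in\overline{I}_{\mathbf{r}^\prime}$ for $\mathbf{r}^\prime\in\overline{I}_\mathbf{r}$) used for the single-channel multicast upper bound. Your added discussion of why the half-duplex constraint, being per-station and channel-independent, is what makes the multi-channel worst case coincide with the single-channel one is a welcome elaboration of a point the paper leaves implicit.
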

}

\KHH{
When there are multiple channels available, stations need to let their peers know which slot and channel they use to transmit. 
If a dedicated control channel (which is orthogonal to the $K$ channels for data transmission) is used to exchange state information, then $l_\mathbf{r}+\lceil\log_2K\rceil$ bits are required to represent station $\mathbf{r}$'s state: $l_\mathbf{r}$ bits for the slot, and $\lceil\log_2K\rceil$ bits for the channel. 
Alternatively, if there are control frames preceding each cycle, and these can be used to exchange state information on the $K$ channels for data transmission, a station can save the extra $\lceil\log_2K\rceil$ bits as follows: it broadcasts the state information on the $\omega$-th channel if the state information indicates a transmission on the $\omega$-th channel. 

The multi-resolution protocol for broadcast on networks with multiple channels is shown as Protocol~\ref{alg:mr_bc_multich}. 
The main difference from Protocol~\ref{alg:mr_bc} is that before station $\mathbf{r}$ computes the votes using the state information from station $\mathbf{r}^\prime$'s one-hop neighborhood, station $\mathbf{r}$ assumes the states $(\omega,s)$ for all $\omega$ are occupied by station $\mathbf{r}^\prime$, where $s$ is the slot currently occupied by station $\mathbf{r}^\prime$ (line $3$ in Protocol~\ref{alg:mr_bc_multich}).
This is due to the half-duplex constraint: if station $\mathbf{r}^\prime$ transmits in slot $s$, then it cannot receive on \textit{any} channel in slot $s$, meaning that packets transmitted by any one-hop peer in this slot experience collisions. 
The multi-resolution protocol for multicast on networks with multiple channels can be constructed similarly. 

\begin{algorithm}[t]
\caption{Multi-Resolution MAC Protocol for Broadcast on Networks with Multiple Channels}
\label{alg:mr_bc_multich}
\begin{algorithmic}[1]
\WHILE{\khh{station $\mathbf{r}$ is active}{}}
\STATE
$\mathbf{r}$ sets the votes on all states to zero. 
\FOR{$\mathbf{r}^\prime\in V_\mathbf{r}\cup\lbrace\mathbf{r}\rbrace$}
\STATE
Assume states $(\omega,s)$ for all $\omega$ are occupied by station $\mathbf{r}^\prime$, where $s$ is the slot currently occupied by $\mathbf{r}^\prime$.
\IF{$\mathbf{r}$ is the only station occupying its current state in station $\mathbf{r}^\prime$'s one-hop neighborhood} 
\STATE
\KH{$\mathbf{r}$'s current state is assigned a single vote of weight one}{A single vote of weight one on $\mathbf{r}$'s current state is given by $\mathbf{r}^\prime$}. 
\ELSE
\STATE
\KH{$\mathbf{r}$ determines which slots $s$ (according to $\mathbf{r}$'s resolution) are idle or have collisions in $\mathbf{r}^\prime$'s one-hop neighborhood}{$\mathbf{r}$ determines the states (according to $\mathbf{r}$'s resolution) that station $\mathbf{r}^\prime$ is idle or collides}. 
\STATE
A vote of weight $\frac{1}{Kn}$ is \khh{added}{given} to states $(\omega,s)$ for all $\omega$, where $n$ is the number of \KHH{slots $s$ determined above}{such slots}. 
\ENDIF
\ENDFOR
\IF{$n_{(\omega,s)}>0$ for multiple $(\omega,s)$'s, where $n_{(\omega,s)}$ is the total weight state $(\omega,s)$ receives}
\STATE
Replace $n_{(\omega,s)}$ by $n_{(\omega,s)}+\epsilon$, where $\epsilon>0$, for all $(\omega,s)$. 
\ENDIF
\STATE
$\mathbf{r}$ selects state $(\omega,s)$ with a probability proportional to $f(n_{(\omega,s)})$.
\ENDWHILE
\end{algorithmic}
\end{algorithm}
}{}


\section{Conclusion}
\label{sec:conclusion}

In this paper, we have proposed multi-resolution MAC protocols for wireless networks with arbitrary topolog\KH{ies}{y}. 
We have shown that \khh{collision-free schedules can be established}{collisions can be eliminated} in a distributed manner, by \khh{allowing stations to exchange limited state information}{letting stations make their transmission decisions based on recent transmission decisions of its peers only}. 
These protocols do not require all stations to use the same resolution, \textit{i.e.}, the same number of states or the same length of each slot\KH{}{, and zero collisions can still be achieved}. 

\KH{Future work should \K{investigate}{explore} the performance of the multi-resolution protocols under the signal-to-interference-\kh{and-}{}noise ratio (SINR) model. 
This model assumes a minimum SINR requirement at a receiver for successful reception and also takes into account cumulative interference from faraway transmissions, which is more realistic. 
Under this model, there may be a need to reconsider what messages should be exchanged among peers in order to eliminate collisions in a distributed manner. 
}{}

\section*{Acknowledgment}

The authors would like to thank Tianyi Li for his assistance in simulations. 

\bibliographystyle{IEEEtran}
\bibliography{IEEEabrv,bibliography}

\end{document}